\newcommand\cA{{\cal A}}
\newcommand\cF{{\cal F}}
\newcommand\cB{{\cal B}}
\newcommand\cD{{\cal D}}
\newcommand\cR{{\cal R}}
\newcommand\cP{{\cal P}}
\newcommand\cQ{{\cal Q}}
\newcommand\cZ{{\cal Z}}
\newcommand\R{{\bf R}}
\newcommand\N{\mathbb{N}}
\newtheorem{theo}{Theorem}[section]
\newtheorem{prop}[theo]{Proposition}
\newtheorem{lemm}[theo]{Lemma}
\newtheorem{defi}[theo]{Definition}
\newtheorem{ex}[theo]{Example}
\newtheorem{coro}[theo]{Corollary}
\newtheorem{rem}[theo]{Remark}
\newcommand\beq{\begin{equation}}
\newcommand\eeq{\end{equation}}
\newcommand\beqa{\begin{equation*}}
\newcommand\eeqa{\end{equation*}}
\newcommand\bea{\begin{eqnarray}}
\newcommand\eea{\end{eqnarray}}
\newcommand\bean{\begin{eqnarray*}}
\newcommand\eean{\end{eqnarray*}}
\DeclareMathOperator{\essinf}{ess\;inf}
\DeclareMathOperator{\esssup}{ess\;sup}
\newcommand\bP{{\bf {\rm P}}}
\begin{document}

\begin{frontmatter}

\title{ Coherent Risk Measure on $L^0$:  NA Condition, Pricing and Dual Representation}

\author[A1]{ Emmanuel LEPINETTE,}
\author[A1]{ Duc Thinh VU}

\address[A1]{ Ceremade, UMR  CNRS 7534,  Paris Dauphine University, PSL National Research, Place du Mar\'echal De Lattre De Tassigny, \\
75775 Paris cedex 16, France and\\
Gosaef, Faculty of Sciences of Tunis, Tunisia.\\\smallskip
Email: emmanuel.lepinette@ceremade.dauphine.fr \quad   vu@ceremade.dauphine.fr
}

\begin{abstract}The NA condition is one of the pillars supporting the classical theory of financial mathematics. We revisit this condition for financial market models where a dynamic risk-measure defined on $L^0$ is fixed to characterize the family of acceptable wealths that play the role of non negative financial positions. We provide in this setting a new version of the fundamental theorem of asset pricing and we deduce a dual characterization of the super-hedging prices (called risk-hedging prices)  of a European option. Moreover, we show that  the set of all risk-hedging prices is closed under NA. At last, we provide  a dual representation of the risk-measure on $L^0$ under some conditions. \end{abstract}

\begin{keyword} NA condition \sep
Risk-hedging prices \sep Dynamic risk-measures \sep Dual representation  \sep No-arbitrage  \smallskip

\noindent {Mathematics Subject Classification (2010): 49J53 \sep 60D05 \sep 91G20\sep 91G80.}

\noindent {JEL Classification: C02 \sep  C61 \sep  G13}
\end{keyword}

\end{frontmatter}

\section{Introduction}
The NA condition originates from the work of Black and Scholes \cite{BS} and Merton \cite{Merton}.  In these articles, the risky asset is modeled by a geometric Brownian motion. The NA condition means  the absence of arbitrage opportunities, i.e. a nonzero terminal portfolio value  can not be acceptable if it starts from the zero initial endowment. A financial  position in the classical arbitrage theory is acceptable if it is  non negative almost surely.  In our work, the new contribution is that we consider a larger class of acceptable positions which are defined from a risk-measure. \smallskip

The NA condition is characterized through the famous Fundamental Theorem of Asset Pricing (FTAP) for a variety of financial models.  Essentially, NA is  equivalent to the existence of a so-called risk-neutral probability measure, under which the price process is a martingale. In discrete-time, the well known FTAP theorem  has been proved  by Dalang, Morton and Willinger \cite{DMW}. We may also mention the papers \cite{JS}, \cite{KK},  \cite{KSt}, \cite{Rogers},  \cite{Ross}. In continuous time, the formulation of the FTAP theorem is only possible once continuous-time self-financing portfolios are defined, see the seminal work of  Black and Scholes \cite{BS}. This gave rise to an extensive development of the stochastic calculus, e.g. for semi-martingales \cite{HK}, making possible  formulation of several versions of  the FTAP theorem as given in \cite{DScha1},   \cite{DScha2}, \cite{DScha3},  \cite{DScha4} and \cite{GRS}.\smallskip

The main contribution of the FTAP theorems is the  link between the concept of arbitrage and the pricing technique which is deduced. It is now very well known that the super-hedging prices of a European claim  are dually identified through the risk-neutral probability measures characterizing the NA condition. We may notice that the NA condition has been suitably chosen in the models of consideration in such a way that the set of all attainable claims is closed, see \cite[Theorem 2.1.1]{KS}. This allows one to apply the Hahn-Banach separation theorem, see \cite{DScha5}, and obtain dual elements that characterize the super-hedging prices. This is also the case for financial models with proportional transaction costs, see \cite[Section 3]{KS} and the references mentioned therein. \smallskip

The growing use of risk-measures in the context of the Basel banking supervision naturally calls into question the definition of the  super-hedging condition which is commonly accepted in the usual literature. Recall that a portfolio process $(V_t)_{t\in [0,T]}$ super-replicates a contingent claim $h_T$ at the horizon date $T>0$ means that $V_T\ge h_T$ a.s.. In practice, this inequality remains difficult to achieve and practitioners accept to take a moderate risk, choosing for example $\alpha\in (0,1)$ small enough so that $P(V_T-h_T\ge 0)\ge 1-\alpha$ is close to $1$. This is the case when considering the Value At Risk measure, see \cite{Jorion}, and we say that $V_T-h_T$ is acceptable. More generally, $V_T-h_T$ is said acceptable for a risk-measure $\rho$ if $\rho(V_T-h_T)\le 0$, see \cite{Acc}, \cite{Del1}, \cite{Del2}, \cite{DS},\cite{FP} and \cite{KL} for frictionless markets and \cite{AHR}, \cite{BL},  \cite{FR}, \cite{JMT}   for conic models. The acceptable positions play the role of the almost surely non negative random variables and allow one to take risk controlled by the risk measure we choose.  Moreover, by considering a larger family of acceptable positions, the hedging prices may be lowered as shown in \cite{P} for the Black and Scholes models with proportional transaction costs, see also the discussion in \cite{LM}. \smallskip

Pricing with  a coherent risk-measure has been explored and developed by Cherny in two major papers \cite{Ch1} and \cite{Ch2} for coherent risk-measures defined on the space of bounded random variables. Cherny supposes that the risk-measure $\rho$ (or equivalently the utility measure $u=-\rho$) is defined by a weakly compact determining set $\mathcal{D}$ of equivalent probability measures, i.e. such that $\rho(X)=\sup_{Q\in \mathcal{D}}E_Q(-X)$ for any $X\in L^{\infty}$. This representation  automatically holds for coherent risk-measures defined on  $L^{\infty}$. This motivates the choice of Cherny to suppose such a representation for the risk-measures  he considers  on $L^0$ as he claims that {\sl it is hopeless to axiomatize the notion of a risk measure on $L^0$ and then to obtain the corresponding representation theorem}, see \cite{Ch2}.\smallskip

Actually, the recent paper \cite{ZL} proposes an axiomatic construction of a dynamic coherent risk-measure on $L^0$ from the set of all acceptable positions. We consider such a dynamic risk-measure and we define the discrete-time portfolio processes as the processes  $(V_t)_{t\le T}$ adapted to a filtration $(\cF_t)_{t\le T}$ such that $V_t+\theta_{t}\Delta S_{t+1}-V_{t+1}$ is acceptable at time $t$ for some $\cF_t$-measurable strategy $\theta_{t}\in L^0(\R^d,\cF_t)$. This is a generalization of the classical definition where, usually, acceptable means non negative so that $V_t+\theta_{t}\Delta S_{t+1}\ge V_{t+1}$ almost surely. We then introduce a no-arbitrage condition we call NA as in the classical literature 	and we show that it coincides with the usual NA condition if the acceptable positions are the non negative random variables.  This NA condition allows one to dually characterize the super-hedging prices, at least when $\rho$ is time-consistent. One of our main contribution is a version of the Fundamental Theorem of Asset Pricing in presence of a risk-measure.\smallskip

Similarly, Cherny proposes in his papers \cite{Ch1} and \cite{Ch2} a no-arbitrage condition {\sl No Good Deal} (NGD) which is the key point to define the super-hedging prices. The approach is a priori slightly different: The NGD condition holds if there is no bounded  claim $X$ attainable from the zero initial capital such that $\rho(X)<0$. In our setting, the NA condition is formulated from the minimal  price super-hedging the zero claim, which is supposed to be non negative under NA. Clearly, there is a link between the NA and the NGD condition as $\rho(X)$ appears to be a possible super-hedging price for the zero claim. Actually, the NGD and the NA conditions are equivalent in the setting of Cherny, see Corollary \ref{NGD-NA}. Although,  in our paper we do not need to suppose the existence of a priori  given probability measure representing the risk-measure. This is why the proof of the FTAP theorem we formulate is more challenging as we cannot directly use an immediate compactness argument as done in \cite{Ch2} to obtain a risk-neutral probability measure. We then deduce a dual representation of the super-hedging prices in the case where the risk-measure is time-consistent. Under NA, we show that  the set of all risk-hedging prices is closed.  At last, we  formulate a dual representation  for a risk-measure defined on the whole set $L^0$, which is also a new contribution.

\section{Framework}\label{DynRM}

In discrete-time, we consider a    stochastic basis $(\Omega,\cF:=(\cF_t)_{t=0}^T,\bP)$  where the complete \footnote{This means that the $\sigma$-algebra contains the negligible sets so that an equality between two random variables is understood up to a negligible set. } $\sigma$-algebra $\cF_t$ represents the  information of the market available at time $t$. For any $t\le T$, $L^0(\R^d,\cF_t)$, $d\ge 1$, is the  space of all $\R^d$-valued random variables which are $\cF_t$-measurable, and endowed with the topology of convergence in probability. Similarly, $L^{p}(\R^d,\cF_t)$, $p\in [1,\infty)$ (resp. $p=\infty$), is the normed space of all $\R^d$-valued random variables which are $\cF_t$-measurable and admit a moment of order $p$ under the probability measure $\bP$ (resp. bounded).  In particular, $L^{p}(\R_{+},\cF_t)=\{X\in L^{p}(\R,\cF_t)|X\ge 0\}$ and $L^{p}(\R_{-},\cF_t)=-L^{p}(\R_{+},\cF_t)$ when $p=0$ or $p\in [1,\infty]$. All equalities and inequalities between random variables are understood to hold everywhere on $\Omega$ up to a negligible set. If $A_t$ is a set-valued mapping (i.e. a random set of $\R^d$), we denote by $L^0(A_t,\cF_t)$ the set of all $\cF_t$-measurable random variables $X_t$ such that 
$X_t\in A_t$ a.s..  We say that $X_t\in L^0(A_t,\cF_t)$ is a measurable selection of $A_t$. In our paper, a random set $A_t$ is said $\cF_t$-measurable if it is graph-measurable, see \cite{Mol}, i.e.
$${\rm Gr\,} A_t=\{(\omega,x)\in\Omega\times \R^d:~x\in A_t(\omega)\}\in \cF_t\times \cB(\R^d).$$

It is well known that $L^0(A_t,\cF_t)\ne \emptyset$ if and only if $A_t\ne \emptyset$ a.s., see \cite[Th.~4.4]{hes02}. When referring to this property, we shall say that we use a "measurable selection argument" as it is usual to say.  \smallskip

We consider a  dynamic coherent risk-measure $X\mapsto (\rho_t(X))_{t\le T}$  defined on the space $L^0(\overline{\R},\cF_T)$, $\overline{\R}=[-\infty,\infty]$. Precisely, we consider the risk-measure of \cite{ZL}, where  an extension to the whole space $L^0(\overline{\R},\cF_T)$ is proposed. Recall that, in this paper, the risk-measure is  constructed from its $L^0$-closed acceptance sets $(\mathcal A_t)_{t\le T}$ of  acceptable financial positions $\mathcal A_t$ at time $t\le T$. We suppose that $\mathcal A_t$ is a closed convex cone. In the following, we use the  conventions:

\bean 0\times (\pm \infty)=0, \quad (0,\infty)\times (\pm \infty)=\{\pm\infty\},\\
\R+(\pm \infty)=\pm\infty,\quad \infty-\infty=-\infty+\infty=+\infty.
\eean

 For $X\in L^0( \overline{\R},\cF_T)$, $\rho_t(X)$ may be infinite and $\rho_t(X)\in \R$ a.s. if and only if $X\in {\rm Dom\,} \mathcal A_t$ where 
 \bean {\rm Dom\,} \mathcal A_t&:=&\{X\in L^0(\R,\cF_T):~\mathcal A_t^X\ne \emptyset \},\\
\mathcal A_t^X&:=&\{C_t\in L^{0}(\R,\cF_t)|\,X+C_t \in \mathcal A_t\}.\eean

Actually, we have $\rho_t(X)=\essinf_{\cF_t}\mathcal A_t^X$ if $X\in {\rm Dom\,} \mathcal A_t$. Recall that the following properties hold (see \cite{ZL}):
\begin{prop}\label{pro}
The risk-measure $\rho_t$ satisfies the following properties:\smallskip

Normalization: $\rho_t(0)=0$; \smallskip

Monotonicity: $\rho_t(X)\ge \rho_t(X')$ whatever  $X, X' \in L^0( \overline{\R},\cF_T)$  s.t. $X\le X'$;\smallskip

Cash invariance: $\rho_t(X+m_t)=\rho_t(X)-m_t$ if $m_t\in L^{0}(\R,\cF_t)$, and

  $X \in L^0( \overline{\R},\cF_T)$;\smallskip

Subadditivity: $\rho_t(X+X')\le \rho_t(X)+\rho_t(X')$ if  $X, X' \in L^0( \overline{\R},\cF_T)$ ;\smallskip

Positive homogeneity: $\rho_t(k_tX)=k_t\rho_t(X)$ if $k_t\in L^{0}(\R_+,\cF_t)$, $X \in L^0( \overline{\R},\cF_T)$. \smallskip

Moreover,  $\rho_t$ is lower semi-continuous i.e., if $X_n\to X$ a.s., then $\rho_t(X)\le \liminf_n \rho_t(X_n)$ a.s., and we have \begin{equation}\label{Acset}
\cA_t=\{X\in {\rm Dom\,} \mathcal A_t\,|\,\rho_t(X)\le 0\}.
\end{equation}
\end{prop}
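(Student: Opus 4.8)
The plan is to derive every listed property of $\rho_t$ directly from the structural hypotheses on the acceptance set $\mathcal A_t$ in the sense of \cite{ZL} (closed convex cone, solid/monotone, $\cF_t$-decomposable, normalized), by exploiting the representation $\rho_t(X)=\essinf_{\cF_t}\mathcal A_t^X$. The single fact that drives almost everything is an \emph{attainment lemma}: for $X\in{\rm Dom\,}\mathcal A_t$ the essential infimum is reached, i.e.\ $X+\rho_t(X)\in\mathcal A_t$, equivalently $\rho_t(X)\in\mathcal A_t^X$. First I would prove this. Because $\mathcal A_t$ is stable under pasting along $\cF_t$-measurable sets, the family $\mathcal A_t^X$ is directed downward: if $C_t,C_t'\in\mathcal A_t^X$ and $B=\{C_t\le C_t'\}\in\cF_t$, then $C_t\mathbf 1_B+C_t'\mathbf 1_{B^c}=\min(C_t,C_t')\in\mathcal A_t^X$. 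Hence there is a sequence $C_t^n\in\mathcal A_t^X$ decreasing a.s.\ to $\rho_t(X)$, so $X+C_t^n\in\mathcal A_t$ converges a.s., hence in probability, to $X+\rho_t(X)$, and the $L^0$-closedness of $\mathcal A_t$ yields $X+\rho_t(X)\in\mathcal A_t$.

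With attainment in hand the algebraic properties are short. For \eqref{Acset}: if $X\in\mathcal A_t$ then $0\in\mathcal A_t^X$ so $\rho_t(X)\le0$; conversely if $\rho_t(X)\le0$ then $X\ge X+\rho_t(X)\in\mathcal A_t$, and solidity of $\mathcal A_t$ gives $X\in\mathcal A_t$. Normalization follows since $0\in\mathcal A_t$ forces $\rho_t(0)\le0$, while the normalization axiom on $\mathcal A_t$ (no strictly negative $\cF_t$-measurable element is acceptable) forbids $\rho_t(0)<0$ on a nonnull set. Cash invariance is the identity $\mathcal A_t^{X+m_t}=\mathcal A_t^X-m_t$ combined with the $\cF_t$-measurability of $m_t$, which lets $m_t$ factor out of the essential infimum. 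Monotonicity comes from the inclusion $\mathcal A_t^X\subseteq\mathcal A_t^{X'}$ when $X\le X'$: each $C_t\in\mathcal A_t^X$ satisfies $X'+C_t\ge X+C_t\in\mathcal A_t$, so $C_t\in\mathcal A_t^{X'}$ by solidity, and the larger acceptance family has the smaller essential infimum. Subadditivity uses that $\mathcal A_t$ is closed under addition, so $C_t+C_t'\in\mathcal A_t^{X+X'}$ whenever $C_t\in\mathcal A_t^X$ and $C_t'\in\mathcal A_t^{X'}$, and the infimum of the directed family splits into the sum of the two infima. Positive homogeneity is the cone identity $\mathcal A_t^{k_tX}=k_t\mathcal A_t^X$ on $\{k_t>0\}$, with the convention $0\cdot(\pm\infty)=0$ and $\rho_t(0)=0$ covering $\{k_t=0\}$, and a decomposability argument to glue the two $\cF_t$-measurable pieces.

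The hard part will be lower semicontinuity, since it is where a.s.\ convergence must be made to interact with the $L^0$-closedness. The plan is to set $c_t:=\liminf_n\rho_t(X_n)$ and, working on the $\cF_t$-set where $c_t$ is finite, extract by a measurable selection argument an $\cF_t$-indexed subsequence $(n_k)$ realizing the liminf, so that $\rho_t(X_{n_k})\to c_t$ a.s. By the attainment lemma $X_{n_k}+\rho_t(X_{n_k})\in\mathcal A_t$, and since $X_n\to X$ a.s., this sequence converges a.s., hence in probability, to $X+c_t$; $L^0$-closedness then gives $X+c_t\in\mathcal A_t$, whence $\rho_t(X+c_t)\le0$ and cash invariance yield $\rho_t(X)\le c_t$. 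The genuine technical difficulties are the measurable extraction of a subsequence attaining the $\liminf$ while remaining $\cF_t$-measurable, and a careful bookkeeping of the $\pm\infty$ conventions on the degenerate sets $\{c_t=\pm\infty\}$, where the inequality must be verified directly.
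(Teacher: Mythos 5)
You should first know that this paper contains no proof of Proposition \ref{pro} to compare against: the text introduces it with ``Recall that the following properties hold (see \cite{ZL})'', and the Appendix proves only the results of Sections 3 and 4. So the comparison can only be with the construction in \cite{ZL}, and your route --- deriving everything from the acceptance-set axioms through the representation $\rho_t(X)=\essinf_{\cF_t}\mathcal A_t^X$, with the attainment lemma $X+\rho_t(X)\in\mathcal A_t$ as the workhorse --- is precisely the natural (and, as far as this paper indicates, the intended) one. Two remarks on hypotheses: the paper itself only states that $\mathcal A_t$ is an $L^0$-closed convex cone, so the solidity ($\mathcal A_t+L^0(\R_+,\cF_T)\subseteq\mathcal A_t$), $\cF_t$-decomposability and normalization you invoke are genuinely imported from \cite{ZL}; you are right to import them, since without them monotonicity, positive homogeneity with random $k_t\in L^0(\R_+,\cF_t)$, and $\rho_t(0)=0$ are simply false. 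Granting those axioms, your attainment lemma (downward directedness by pasting, a decreasing sequence realizing the essential infimum, $L^0$-closedness) and the algebraic properties you build on it, including (\ref{Acset}), are correct.

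The one place where your plan is thinner than it should be is lower semicontinuity on the set $\{c_t=-\infty\}$, where $c_t:=\liminf_n\rho_t(X_n)$. There the claim cannot be ``verified directly'': you must produce $\rho_t(X)=-\infty$, i.e.\ exhibit elements of $\mathcal A_t^X$ arbitrarily negative on that set, and your subsequence argument as written only operates where $c_t$ is finite. The repair stays inside your toolkit: fix $M>0$ and replace $\rho_t(X_n)$ by $\rho_t^M(X_n):=\max(\rho_t(X_n),-M)$; by solidity $X_n+\rho_t^M(X_n)\in\mathcal A_t$, and since $\liminf_n\rho_t^M(X_n)=\max(c_t,-M)$, your measurable-subsequence-plus-closedness argument yields $X+\max(c_t,-M)\in\mathcal A_t$, hence $\rho_t(X)\le\max(c_t,-M)$ for every $M$; letting $M\to\infty$ settles all of $\{c_t<+\infty\}$ at once (for $X_n$ in ${\rm Dom\,}\mathcal A_t$; positions outside the domain require the extension conventions of \cite{ZL}, which neither the paper nor your sketch reproduces). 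Relatedly, once the indices $n_k$ are random, the membership $X_{n_k}+\rho_t(X_{n_k})\in\mathcal A_t$ is itself a small lemma that deserves a line: write this random variable as the a.s.\ limit of the partial sums $\sum_{n\le N}1_{\{n_k=n\}}\left(X_n+\rho_t(X_n)\right)$, each of which lies in $\mathcal A_t$ by pasting, and invoke $L^0$-closedness --- this is exactly the point where $\cF_t$-decomposability and closedness interact, and it is the step your phrase ``measurable selection argument'' glosses over.
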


In the following, we define $\cA_{t,u}:=\cA_t\cap L^0( \overline{\R},\cF_u)$ for $u\in [t,T]$. Let $(S_t)_{ t\le T}$ be a  process describing the discounted prices of $d$ risky assets such that $S_t\in L^{0}(\R^d_{+},\cF_t)$ for any $t\ge 0$. A contingent claim with maturity date  $t+1$ is defined by a real-valued $\cF_{t+1}$-measurable random variable $h_{t+1}$. In the paper \cite{ZL}, the super-hedging problem for the payoff $h_{t+1}$ is solved with respect to the dynamic risk-measure $(\rho_t)_{ t\le T}$. Precisely:
\begin{defi}\label{pT-1}A payoff $h_{t+1}\in L^0(\R,\cF_{t+1})$ is said to be risk-hedged at time $t$ if there exists $P_{t}\in L^{0}(\R,\cF_{t})$ and a strategy $\theta_{t}$ in $L^{0}(\R^d,\cF_{t})$ such that $P_{t}+\theta_{t}\Delta S_{t+1}-h_{t+1}$ is acceptable at time $t$. In that case, we say that $P_{t}$ is a risk-hedging price.
\end{defi}

Let $\cP_{t}(h_{t+1})$ be the set of  all risk-hedging prices $P_{t}\in L^{0}(\R,\cF_{t})$ at time $t$ as in Definition \ref{pT-1}. In the following, we suppose that $\cP_{t}(h_{t+1})\ne \emptyset$.   This is the case if there    exist $a_{t},b_{t}\in L^0(\R,\cF_{t})$ such that $h_{t+1}\le a_{t}S_{t+1}+b_{t}$. This inequality trivially  holds for European call and put options.\smallskip

\begin{defi}
The minimal risk-hedging price of the contingent claim $h_{t+1}$ at time $t$ is defined as
\begin{equation}
P_{t}^*:=\mathop{\essinf}\limits_{\theta_{t}\in L^{0}(\R^d,\cF_{t})}\cP_{t}(h_{t+1}).
\end{equation}
\end{defi}
Note that the minimal risk-hedging price $P_{t}^*$ of $h_{t+1}$ is not necessarily  a price, i.e.  it is not necessarily  an element of $\cP_{t}(h_{t+1})$ if this set is not closed. One contribution of our paper is to study a no-arbitrage condition under which  $P_{t}^*\in \cP_{t}(h_{t+1})$. \smallskip

Starting from the contingent claim $h_T$ at time $T$, we  recursively define
$$P_{T}^*:=h_T \;,\; P_{t}^*:=\mathop{\essinf}\limits_{\theta_{t}\in L^0(\R^d,\cF_{t})}\cP_{t}(P_{t+1}^*),$$
where $P_{t+1}^*$ may be interpreted as a contingent claim $h_{t+1}$. The interesting question is whether $P_{t}^*$ is actually a price, i.e. an element of $\cP_{t}(P_{t+1}^*)$, or equivalently whether $\cP_{t}(P_{t+1}^*)$ is closed. In the classical setting, recall that closedness is obtained under the NA condition. \smallskip

\begin{defi} A stochastic process $(V_t)_{t\le T}$ adapted to $(\cF_t)_{t\le T}$, starting from an initial endowment $V_{0}$ is a portfolio process if, for all $t\le T-1$, there exists $\theta_{t}\in L^0(\R^d,\cF_t)$ such that $V_t+\theta_{t}\Delta S_{t+1}-V_{t+1}$ is acceptable at time $t$. Moreover, we say that it super-hedges the payoff $h_T\in L^0([-\infty,\infty],\cF_T)$ if $V_T\ge h_T$ a.s..

\end{defi}

Note  that $V_{T-1}+\theta_{T-1}\Delta S_{T}-V_{T}$ is supposed to be acceptable at time $T-1$. Therefore, $V_T\ge h_T$ implies that $V_{T-1}+\theta_{T-1}\Delta S_{T}-h_{T}$ is  acceptable at time $T-1$. In the following, we actually set $V_T=h_T$ where  $h_T\in L^0(\R,\cF_T)$ is a European claim. Notice that, if $P_{T-1}^*=-\infty$ on some non null set, then, the one step pricing procedure of \cite{ZL} may be applied as  the risk-measure is defined on $L^0([-\infty,\infty],\cF_T)$. Actually, this is trivial to super hedges $P_{T-1}^*=-\infty$ by $P_{T-2}^*=-\infty$. This means that the  backward procedure of \cite{ZL} may be  applied without any no-arbitrage condition. Let us now recall this procedure. \smallskip

We  define  $P_{T}^*=h_T=:h$ and let us consider the set $\cP_{t}(P_{t+1}^*)$ of all prices $p_t$ at time $t$ allowing one to start a portfolio strategy $\theta_{t}\in L^0(\R^d,\cF_{t})$ such that $p_t+\theta_t\Delta S_{t+1}=P_{t+1}^*+a_{t,t+1}$ where $a_{t,t+1}\in L^0(\R,\cF_t)$ is an acceptable position at time $t$. This is a generalization of the classical super-hedging inequality $p_t+\theta_t\Delta S_{t+1}\ge P_{t+1}^*$. We have 
$$\cP_{t}(P_{t+1}^*)=\{\theta_{t} S_{t}+\rho_{t}(\theta_{t} S_{t+1}-P_{t+1}^*):\theta_{t}\in L^0(\R^d,\cF_{t})\}
+L^{0}(\R^d_+,\cF_{t}),$$
and, recursively, we define:
$$P_{t}^*=\mathop{\essinf}\limits_{\theta_{t}\in L^0(\R^d,\cF_{t})}\cP_{t}(P_{t+1}^*).$$
In \cite{ZL}, a jointly measurable version of the random function $g_t$ that appears above in the characterization of $\cP_{t}(P_{t+1}^*)$, i.e.
\bea \label{gt} g_t^h(\omega,x):=x S_{t}+\rho_{t}(x S_{t+1}-P_{t+1}^*),\eea
is constructed in the one-dimensional case. With the same arguments, we may obtain a jointly measurable version of $g_t^h(\omega,x):= x S_{t}+\rho_{t}(x S_{t+1}-P_{t+1}^*) $ if $x\in\R^d$. Moreover, by similar arguments, we also show that $P_{t}^*=\inf\limits_{x\in \R^d} g_t^h(x).$ 

Let $V$ be a portfolio process with  $V_T=h_T=h$. By definition, we have that  $\rho_{T-1}(V_{T-1}+\theta_{T-1}\Delta S_{T}-h_T)\le 0$. We deduce that   $V_{T-1}\ge P_{T-1}^*$ and, by induction, we get that  $V_t\ge P_{t}^*$ for all $t\le T$, since $V_t$ is a  risk-hedging price for $V_{t+1 }\ge P_{t+1}^*$ at time $t+1$. In particular, $V_t\in \cP_{t}(P_{t+1}^*)\ne \emptyset$ for all $t\in T-1$.

\section{ No-arbitrage and pricing with risk-measures}

An instantaneous profit is the possibility to super-replicate the zero contingent claim  at a negative price, see \cite{BCL}, \cite{CL}.

\begin{defi}
Absence of Instantaneous Profit (AIP) holds if, for any $ t\le T$,
\begin{equation}\label{NGD}
\cP_t(0)\cap L^{0}(\R_{-},\cF_t)=\{0\}.
\end{equation}

\end{defi}

It is clear that AIP  holds  at time $T$ since $\cP_T(0)=L^{0}(\R_{+},\cF_T)$. We now formulate characterizations of the AIP condition  in the multi-dimensional setting. We denote by $S(0,1)$ the set of all $z\in \R^d$ such that $|z|=1$. We present our first result:

\begin{theo} \label{AIP-theo}
 The following statements are equivalent:
\begin{enumerate}
    \item AIP holds between time $t-1$ and $t$.\label{1} \smallskip
    
    \item $\rho_{t-1}(x\Delta S_t)\geq 0$,\, for any $x\in\R^d$,\,{\rm a.s.}. \label{2}\smallskip
    
    \item $\rho_{t-1}(z\Delta S_t)\geq 0$,\, for any $z\in S(0,1)$,\,{\rm a.s.}.  \label{3}\smallskip
    
    \item Let $x_{t-1}\in L^0(\R^d,\cF_{t-1})$. If $x_{t-1}\Delta S_t$ is acceptable on some non null set $F_{t-1}\in \cF_{t-1}$, then $\rho_{t-1}(x_{t-1}\Delta S_t) = 0$ on $F_{t-1}$. \label{4}
\end{enumerate}
\end{theo}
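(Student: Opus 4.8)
My plan is to reduce all four statements to the sign of the sublinear random map $x\mapsto\rho_{t-1}(x\Delta S_t)$ and then run a short chain of implications. The key observation is the identity
\[
\theta S_{t-1}+\rho_{t-1}(\theta S_t)=\rho_{t-1}(\theta\Delta S_t),\qquad \theta\in L^0(\R^d,\cF_{t-1}),
\]
obtained from cash invariance applied with $m_{t-1}=\theta S_{t-1}\in L^0(\R,\cF_{t-1})$ to $\theta S_t=\theta\Delta S_t+\theta S_{t-1}$. In view of the description of $\cP_{t-1}(0)$ recalled just before \eqref{gt}, this gives $\cP_{t-1}(0)=\{\rho_{t-1}(\theta\Delta S_t):\theta\in L^0(\R^d,\cF_{t-1})\}+L^0(\R_+,\cF_{t-1})$, so AIP between $t-1$ and $t$ becomes a statement about the sign of $\rho_{t-1}(\theta\Delta S_t)$. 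The equivalence (\ref{2})$\Leftrightarrow$(\ref{3}) is then immediate from positive homogeneity: for $x\neq0$ write $x=|x|z$ with $z=x/|x|\in S(0,1)$, so $\rho_{t-1}(x\Delta S_t)=|x|\,\rho_{t-1}(z\Delta S_t)$ keeps the same sign, while $x=0$ is covered by $\rho_{t-1}(0)=0$.

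The technical heart, and the step I expect to be the main obstacle, is to upgrade the pointwise bound (\ref{2}), valid for each fixed deterministic $x$, to the random statement
\[
\rho_{t-1}(\theta\Delta S_t)\ge0\quad\text{a.s.\ for every }\theta\in L^0(\R^d,\cF_{t-1}).
\]
I would exploit the jointly measurable version of $g_{t-1}^0(\omega,x)=\rho_{t-1}(x\Delta S_t)$ furnished by \eqref{gt}. Applying (\ref{2}) to each rational $q\in\Q^d$ and discarding the countable union of the exceptional null sets, one gets $\rho_{t-1}(q\Delta S_t)\ge0$ for all $q\in\Q^d$ off a single null set. Since $x\mapsto\rho_{t-1}(x\Delta S_t)$ is convex (subadditivity and positive homogeneity) and lower semicontinuous, and a convex function on $\R^d$ is continuous on the interior of its effective domain, nonnegativity propagates pathwise from the dense set $\Q^d$ to all of $\R^d$; evaluating the measurable version at $x=\theta(\omega)$ then yields the random bound. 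The delicate point is to exclude the value $-\infty$ when passing from $\Q^d$ to $\R^d$, which is exactly where sublinearity (rather than mere lower semicontinuity) is essential, since a lower semicontinuous function that is nonnegative on a dense set need not be nonnegative.

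With the random bound available, the remaining implications are routine. For (\ref{2})$\Rightarrow$(\ref{4}): acceptability of $x_{t-1}\Delta S_t$ on $F_{t-1}$ means $\rho_{t-1}(x_{t-1}\Delta S_t)\le0$ there, which together with the random bound $\rho_{t-1}(x_{t-1}\Delta S_t)\ge0$ forces equality on $F_{t-1}$. For (\ref{4})$\Rightarrow$(\ref{1}): writing a generic $p_{t-1}\in\cP_{t-1}(0)\cap L^0(\R_-,\cF_{t-1})$ as $p_{t-1}=\rho_{t-1}(\theta\Delta S_t)+r_{t-1}$ with $r_{t-1}\ge0$ and $p_{t-1}\le0$ gives $\rho_{t-1}(\theta\Delta S_t)\le0$ a.s.\ (real-valued, since prices lie in $L^0(\R,\cF_{t-1})$); applying (\ref{4}) with $F_{t-1}=\Omega$ yields $\rho_{t-1}(\theta\Delta S_t)=0$, whence $p_{t-1}=r_{t-1}\ge0$ and therefore $p_{t-1}=0$, i.e.\ AIP.

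Finally I would close the loop with (\ref{1})$\Rightarrow$(\ref{2}) by contraposition: if $\rho_{t-1}(x\Delta S_t)<0$ on a non-null $F\in\cF_{t-1}$ for some deterministic $x$, then the strategy $\theta=x\mathbf{1}_F$ produces, by positive homogeneity, the element $\rho_{t-1}(\theta\Delta S_t)=\mathbf{1}_F\rho_{t-1}(x\Delta S_t)\in\cP_{t-1}(0)\cap L^0(\R_-,\cF_{t-1})$, which is non-zero and thus violates AIP. (One also has (\ref{4})$\Rightarrow$(\ref{2}) directly: were $\rho_{t-1}(x\Delta S_t)<0$ on a non-null set, acceptability there would force the value $0$ by (\ref{4}), a contradiction.) Together these yield the cycle (\ref{1})$\Rightarrow$(\ref{2})$\Leftrightarrow$(\ref{3}) and (\ref{2})$\Leftrightarrow$(\ref{4})$\Leftrightarrow$(\ref{1}), which establishes the claimed equivalences.
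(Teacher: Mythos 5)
Your proof follows the same skeleton as the paper's: the cash-invariance identity $\theta S_{t-1}+\rho_{t-1}(\theta S_t)=\rho_{t-1}(\theta\Delta S_t)$, positive homogeneity for (2)$\Leftrightarrow$(3), ``acceptable plus nonnegative risk forces zero risk'' for (2)$\Rightarrow$(4), the contradiction argument for (4)$\Rightarrow$(2), and your derivation of (1) from (4) (which the paper leaves implicit) are all consistent with the published argument. The real difference is at the step you yourself call the technical heart: the upgrade from the pointwise bound $\rho_{t-1}(x\Delta S_t)\ge 0$ for each fixed $x\in\R^d$ to the bound for random $\theta\in L^0(\R^d,\cF_{t-1})$. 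The paper never argues this; it invokes the identity $P_{t-1}^*=\inf_{x\in\R^d}g_{t-1}^0(x)$ imported from \cite{ZL} (recalled at the end of Section 2), which packages exactly this random-to-deterministic reduction: under AIP it yields a \emph{single} null set off which $g_{t-1}^0(\omega,\cdot)\ge 0$ on all of $\R^d$, after which substituting $x=\theta(\omega)$ is harmless.

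Your self-contained substitute for that black box has a genuine gap. You claim that a sublinear, l.s.c. function which is nonnegative on $\Q^d$ must be nonnegative on $\R^d$, convexity doing the propagation. This is false for extended-real-valued sublinear functions, and the obstruction is not the one you flag (the value $-\infty$): it is that the effective domain may avoid $\Q^d$ entirely. Take $d=2$, $v_0=(1,\alpha)$ with $\alpha$ irrational, and
\[
g(x)=\begin{cases} -t, & x=tv_0,\ t\ge 0,\\ +\infty, & \text{otherwise}.\end{cases}
\]
Then $g$ is positively homogeneous, convex and l.s.c., equals $+\infty$ at every nonzero rational point and $0$ at the origin, yet is strictly negative along the whole ray $\R_+v_0$. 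Your appeal to ``a convex function is continuous on the interior of its effective domain'' cannot rescue this, since nothing guarantees the point to be reached lies in that interior --- in the example the domain has empty interior. So nonnegativity on a dense set, even with sublinearity and lower semicontinuity, does not propagate pathwise; one needs in addition that the version $g_{t-1}^0(\omega,\cdot)$ is finite near the point in question, or that it is by construction the l.s.c. regularization of its restriction to $\Q^d$ --- precisely the kind of information the construction in \cite{ZL} supplies. Since your (2)$\Rightarrow$(4) rests entirely on this step, your cycle does not close: without it you only have (1)$\Rightarrow$(2)$\Leftrightarrow$(3) and (4)$\Rightarrow$(2), (4)$\Rightarrow$(1). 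The repair is to do what the paper does, namely quote $P_{t-1}^*=\inf_{x\in\R^d}g_{t-1}^0(x)$ and the substitution property of the jointly measurable version, rather than re-derive them by a density argument.
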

\begin{proof}
$\ref{1}\Longleftrightarrow \ref{2}$. Consider $h_t = 0$ under AIP. As $P_{t-1}^*=\inf\limits_{x\in \R^d} g_{t-1}^0(x)\ge 0$, we deduce that, for all $x\in\R^d$, $g_{t-1}^0(x) = xS_{t-1} + \rho_{t-1}(xS_t) = \rho_{t-1}(x\Delta S_t) \geq 0$.

The equivalence $\ref{2}\Longleftrightarrow\ref{3}$ is clear by homogeneity. Let us show that $\ref{2}\implies \ref{4}$. Suppose that $x_{t-1}\Delta S_t$ is acceptable on $F_{t-1}$, i.e. $\rho_{t-1}(x_{t-1}\Delta S_t) \leq 0$ on $F_{t-1}$. Then, by \ref{2}, we  have $\rho_{t-1}(x_{t-1}\Delta S_t) = 0 \ \text{on} \ F_{t-1}$. Let us show that $\ref{4}$ implies $\ref{2}$.  Consider the set $F_{t-1}=\{\rho_{t-1}(x_{t-1}\Delta S_t)<0\}\in \cF_{t-1}$. Then, $x_{t-1}\Delta S_t$ is acceptable on $F_{t-1}$ hence by \ref{4}, $\rho_{t-1}(x_{t-1}\Delta S_t)=0$ on $F_{t-1}$, which implies that $P(F_{t-1})=0$. Therefore, $\rho_{t-1}(x_{t-1}\Delta S_t)\geq 0$ a.s.. \end{proof}

In the following, we consider a contingent claim $h_t\in L^0(\R,\cF_t)$ and a  jointly measurable version (see \cite{ZL}) of the random function
\bea \label{gt-1} g_{t-1}(\omega,x):=x S_{t-1}(\omega)+\rho_{t-1}(x S_{t}-h_t)(\omega)\eea
which is associated to $h_t$.  We then introduce two types of no-arbitrage conditions we comment below.

\begin{defi}
We say that the Symmetric Risk Neutral condition SRN holds at time $t$ if, almost surely, for any $z_t\in L^0(S(0,1),\cF_t)$, $\rho_{t}(z_t\Delta S_{t+1})=0$ if and only if $\rho_{t}(-z_t\Delta S_{t+1})=0$. We say that SRN holds if it holds at any time. 
\end{defi}

Observe that the SRN condition means that a zero cost position $z_t$ is risk-neutral if and only if $-z_t$ is risk neutral.

\begin{defi}\label{defi-SAIP} We say that the no-arbitrage NA condition holds at time $t$ when both conditions  AIP and SRN hold at time $t$. We say that NA holds if it holds at any time.
\end{defi}

Note that the NA condition depends on the risk-measure. In the usual case where $\rho_{t}(X)=-\essinf_{\cF_{t}}X$ or, equivalently, there is no risk measure in the sense that the acceptable positions are the non-negative random variables, then the NA condition above coincides with the usual one as claimed in the following new result, see the proof in Appendix:

\begin{prop} \label{na-classical na} Suppose that the risk-measure is $\rho_{t}(X)=-\essinf_{\cF_{t}}X$. Then, the NA condition coincides with the classical NA condition of frictionless models, i.e. it is equivalent to the existence of a risk-neutral probability measure.
\end{prop}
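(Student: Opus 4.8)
The plan is to exploit the explicit form of the risk measure to translate both AIP and SRN into pointwise statements about the conditional essential infimum and supremum of $x\Delta S_{t+1}$, to match these with the one-period formulation of classical no-arbitrage, and finally to invoke the Dalang--Morton--Willinger theorem. First I would record that for $\rho_t(X)=-\essinf_{\cF_t}X$ the acceptance set is exactly $\cA_t=L^0(\R_+,\cF_t)$, since $\rho_t(X)\le 0$ is equivalent to $\essinf_{\cF_t}X\ge 0$, i.e.\ $X\ge 0$ a.s. Using $\essinf_{\cF_t}(-Y)=-\esssup_{\cF_t}Y$, I would rewrite, at each date $t$ (applying Theorem~\ref{AIP-theo} with indices shifted to the step between $t$ and $t+1$): AIP, via item~\ref{2}, as $\essinf_{\cF_t}(x\Delta S_{t+1})\le 0$ for every $x\in\R^d$, together with the consequence of item~\ref{4} that any acceptable $x_t\Delta S_{t+1}$ has $\essinf_{\cF_t}(x_t\Delta S_{t+1})=0$; and SRN, under AIP, as the event-wise equivalence $\{\essinf_{\cF_t}(z_t\Delta S_{t+1})=0\}=\{\esssup_{\cF_t}(z_t\Delta S_{t+1})=0\}$ for unit $z_t$. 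The target is then to show that the conjunction of these at every $t$ is equivalent to the one-period classical NA at every $t$, namely that $\theta_t\Delta S_{t+1}\ge 0$ a.s.\ forces $\theta_t\Delta S_{t+1}=0$ a.s.

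For the implication NA $\Rightarrow$ classical NA I would argue by contradiction. Given a one-period arbitrage $\theta_t$ with $\theta_t\Delta S_{t+1}\ge 0$ a.s.\ and $\bP(\theta_t\Delta S_{t+1}>0)>0$, acceptability together with item~\ref{4} of Theorem~\ref{AIP-theo} (applied with $F_t=\Omega$) gives $\essinf_{\cF_t}(\theta_t\Delta S_{t+1})=0$. On the $\cF_t$-set $F=\{\theta_t\neq 0\}$ I would normalise, setting $z_t=\theta_t/|\theta_t|$ on $F$ and $z_t$ equal to a fixed unit vector off $F$; by positive homogeneity and locality of the conditional essential infimum this yields $\essinf_{\cF_t}(z_t\Delta S_{t+1})=0$ on $F$. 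Then SRN forces $\esssup_{\cF_t}(z_t\Delta S_{t+1})=0$ on $F$, hence $z_t\Delta S_{t+1}\le 0$ on $F$; combined with $z_t\Delta S_{t+1}\ge 0$ on $F$ this gives $\theta_t\Delta S_{t+1}=0$ on $F$, and it vanishes off $F$ by construction, contradicting the arbitrage.

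For the converse, classical NA $\Rightarrow$ NA, I would first obtain AIP by contraposition: if $\essinf_{\cF_t}(x\Delta S_{t+1})>0$ on some non-null $F\in\cF_t$, then $\theta_t:=x\mathbf 1_F$ is a one-period arbitrage, which is excluded. To get SRN, I would fix a unit $z_t$ and set $A=\{\essinf_{\cF_t}(z_t\Delta S_{t+1})=0\}$; on $A$ the inequality $\essinf_{\cF_t}(z_t\Delta S_{t+1})\le z_t\Delta S_{t+1}$ gives $z_t\Delta S_{t+1}\ge 0$, so $\theta_t=z_t\mathbf 1_A$ is acceptable and classical NA forces $z_t\Delta S_{t+1}=0$ on $A$, i.e.\ $\esssup_{\cF_t}(z_t\Delta S_{t+1})=0$ on $A$; the symmetric argument with $-z_t$ gives the reverse inclusion, which is precisely SRN. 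Finally I would conclude by the Dalang--Morton--Willinger theorem \cite{DMW}: the one-period classical NA at every date is equivalent to the classical (global) NA, which in turn is equivalent to the existence of a probability measure $\bQ\sim\bP$ under which $S$ is a martingale.

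The step I expect to be the main obstacle is the normalisation-and-localisation in the direction NA $\Rightarrow$ classical NA. The SRN axiom is stated only for unit strategies, whereas an arbitrage $\theta_t$ is an arbitrary $\cF_t$-measurable vector that may vanish on part of $\Omega$, so one must carefully pass to $z_t=\theta_t/|\theta_t|$ on $\{\theta_t\neq 0\}$ and justify, through positive homogeneity and the locality of $\essinf_{\cF_t}$ (equivalently of $\rho_t$), that the vanishing of the conditional essential infimum transfers between $\theta_t\Delta S_{t+1}$ and $z_t\Delta S_{t+1}$ on that set. Maintaining a consistent event-wise reading of the ``if and only if'' in SRN throughout is the other delicate point.
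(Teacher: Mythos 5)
Your proof is correct and follows essentially the same route as the paper: both reduce the NA condition (AIP $+$ SRN) to the classical pathwise one-period no-arbitrage property via normalization of the strategy (writing $\theta_t = r_t z_t$ with $z_t$ a unit vector and using locality/homogeneity of $\essinf_{\cF_t}$), and both then conclude by the classical Dalang--Morton--Willinger equivalence (the paper cites the equivalent condition (g) in Kabanov--Safarian, Section 2.1.1). The only cosmetic difference is in the converse direction: the paper derives AIP and SRN directly from a martingale measure $Q$ (via $E_Q(1_{F_t} z\Delta S_{t+1})=0$ for a nonnegative integrand), whereas you derive them pathwise from the one-period classical NA before invoking DMW; given the classical FTAP these two are interchangeable.
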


 We recall  that a function $f: \Omega\times\R^d\to \overline{\R}$ is an $\mathcal{F}_t$-normal integrand,  if its epigraph is $\mathcal{F}_t$-measurable and closed. Since the probability space is complete, we know by  \cite[Corollary 14.34]{RW} that it is equivalent to suppose that $f(\omega,x)$ is $\mathcal{F}_t\otimes\mathcal{B}(\R^d)$-measurable and lower semi-continuous (l.s.c.) in $x$. Moreover, by \cite[Theorem 14.37]{RW}, we have:
\begin{prop}\label{measurability}
If $f$ is an $\mathcal{F}_t$-normal integrand,   $\inf_{y\in\R^d}f(\omega,y)$ is $\cF_t$-measurable and 
$\{(\omega,x)\in\Omega\times\R^d: f(\omega,x) = \inf_{y\in\R^d}f(\omega,y)\}\in\mathcal{F}_t\otimes\mathcal{B}(\R^d)$ is a measurable closed set.
\end{prop}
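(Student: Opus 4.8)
The statement is, essentially verbatim, \cite[Theorem 14.37]{RW}, so the plan is not to rebuild the theory of normal integrands from scratch but to recall why each of the two assertions holds. Write $p(\omega):=\inf_{y\in\R^d}f(\omega,y)\in\overline{\R}$.

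First I would establish the $\cF_t$-measurability of $p$. For every $\alpha\in\R$ one has
\[
\{\omega:\,p(\omega)<\alpha\}=\mathrm{proj}_\Omega\bigl\{(\omega,y)\in\Omega\times\R^d:\,f(\omega,y)<\alpha\bigr\}.
\]
Since $f$ is an $\cF_t$-normal integrand it is $\cF_t\otimes\cB(\R^d)$-measurable, so the set inside the projection belongs to $\cF_t\otimes\cB(\R^d)$. The crucial point is then the measurable projection theorem: because the space $(\Omega,\cF_t,\bP)$ is complete, the projection onto $\Omega$ of an $\cF_t\otimes\cB(\R^d)$-measurable set is again $\cF_t$-measurable. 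Hence $\{p<\alpha\}\in\cF_t$ for every $\alpha$, and consequently $\{p=-\infty\}=\bigcap_n\{p<-n\}\in\cF_t$ and $\{p=+\infty\}\in\cF_t$ by complementation, so $p$ is $\cF_t$-measurable as a map into $\overline{\R}$.

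Next I would treat the argmin set $A:=\{(\omega,x):\,f(\omega,x)=p(\omega)\}$. Since $f(\omega,x)\ge p(\omega)$ holds identically, $A=\{(\omega,x):\,f(\omega,x)\le p(\omega)\}$, and this reformulation is convenient because both $(\omega,x)\mapsto f(\omega,x)$ and $(\omega,x)\mapsto p(\omega)$ are $\cF_t\otimes\cB(\R^d)$-measurable, the latter thanks to the $\cF_t$-measurability of $p$. Splitting $\Omega$ into the three measurable pieces $\{p\in\R\}$, $\{p=-\infty\}$, $\{p=+\infty\}$ disposes of the $\pm\infty$ conventions and shows that $\{f\le p\}$ is measurable on each piece, whence $A\in\cF_t\otimes\cB(\R^d)$. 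For the closedness of the $\omega$-sections, fix $\omega$: if $p(\omega)=+\infty$ the section is all of $\R^d$, and otherwise, given $x_n$ in the section with $x_n\to x$, lower semicontinuity of $f(\omega,\cdot)$ gives $f(\omega,x)\le\liminf_n f(\omega,x_n)=p(\omega)\le f(\omega,x)$, so $x$ again lies in the section; thus every section is closed (the empty section, when the infimum is not attained, being closed trivially).

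I expect the only genuine obstacle to be the measurable projection step underlying the measurability of $p$: it is not elementary and rests essentially on the completeness of $\cF_t$, which makes analytic sets universally measurable. Everything else — the reduction of the argmin to the sublevel condition $f\le p$, the bookkeeping with the infinity conventions, and the closedness of sections via lower semicontinuity — is routine once $p$ is known to be $\cF_t$-measurable.
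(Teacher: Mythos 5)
Your argument is correct, but it is worth pointing out that the paper itself offers no proof at all: Proposition 3.6 is stated as a direct citation of \cite[Theorem 14.37]{RW}, which is exactly the result you identified. Your sketch is therefore a sound reconstruction rather than a parallel of anything in the paper, and it takes a genuinely different route from the proof in \cite{RW}. Your route --- measurability of $p(\omega)=\inf_{y}f(\omega,y)$ via the measurable projection theorem, then writing the argmin set as the sublevel set $\{f\le p\}$ and getting closedness of the $\omega$-sections from lower semicontinuity --- is legitimate here precisely because the paper assumes each $\cF_t$ is complete; indeed, the paper invokes this completeness to identify normal integrands with functions that are $\cF_t\otimes\cB(\R^d)$-measurable and l.s.c.\ in $x$ (via \cite[Corollary 14.34]{RW}), which is the form of the hypothesis your projection step needs. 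By contrast, Rockafellar and Wets prove their Theorem 14.37 with no completeness assumption: in their framework a normal integrand is defined through measurability of the epigraphical set-valued mapping $\omega\mapsto\epi f(\omega,\cdot)$, and measurability of the infimum is immediate because $\{p<\alpha\}$ is the preimage under this mapping of the open set $\R^d\times(-\infty,\alpha)$, so no analytic-set or projection machinery is required. In short, your proof buys self-containedness and elementary bookkeeping at the price of completeness (harmless in this paper), while the cited proof buys generality at the price of working with the set-valued-mapping formalism; both establish the two assertions, including the closedness of the sections, in essentially the way you describe.
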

As we may choose a jointly measurable version of $g_t(\omega,x)$ when the payoff is $h_{t+1}=0$,  we consider a jointly measurable version of $\rho_t(\omega,x):= \rho_t(x\Delta S_{t+1})$ i.e. $\rho_t(\omega,x)$ is $\mathcal{F}_t\otimes\mathcal{B}(\R^d)$-measurable. Then, $\rho_t$ is an $\mathcal{F}_t$-normal integrand. By Proposition \ref{measurability}, the set $\Gamma_t = \{z:~ \rho_t(z\Delta S_{t+1}) = \inf_{y\in S(0,1)}\rho_t(y\Delta S_{t+1})\}$ is $\mathcal{F}_t$-measurable. Moreover, each $\omega$-section of $\Gamma_t$ is non empty since $\rho_t$ is l.s.c. and $S(0,1)$ is compact. Therefore, by a measurable selection argument, we may select $z_t\in L^0(S(0,1),\mathcal{F}_t)$ such that $\rho(z_t\Delta S_{t+1}) = \inf_{z\in S(0,1)}\rho_t(z\Delta S_{t+1})$ a.s.. \smallskip

Our first contribution is to show that, under NA,  infimum super-hedging prices are minimal prices. To do so, we need the following new results which  are proved  in Appendix.

\begin{theo} \label{constant_g} Suppose that AIP holds and consider $z_{t-1}\in L^0(S(0,1),\mathcal{F}_{t-1})$. Then, on the set $H_{t-1}= \{\rho_{t-1}(z_{t-1}\Delta S_t) = 0\}\cap\{\rho_{t-1}(-z_{t-1}\Delta S_t) = 0\}$, the random  mapping
$x\mapsto g_{t-1}(\omega,x)$ given by (\ref{gt-1}) is a.s.  constant on the line $\R z_{t-1}$, i.e. $g_{t-1}(\omega,x_1)=g_{t-1}(\omega,x_2)$ for all $x_1,x_2\in \R z_{t-1}(\omega)$ and $\omega\in H_{t-1}$.
\end{theo}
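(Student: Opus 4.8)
The plan is to reduce the statement to a one-dimensional problem along the line $\R z_{t-1}$, and then to exploit the sublinearity of $\rho_{t-1}$ together with the two equalities defining $H_{t-1}$.

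First I would parametrise the line by $x=\lambda z_{t-1}(\omega)$, $\lambda\in\R$, and simplify $g_{t-1}$. Since $\lambda z_{t-1}S_t-h_t=(\lambda z_{t-1}\Delta S_t-h_t)+\lambda z_{t-1}S_{t-1}$ and $\lambda z_{t-1}S_{t-1}$ is $\cF_{t-1}$-measurable, cash invariance gives $\rho_{t-1}(\lambda z_{t-1}S_t-h_t)=\rho_{t-1}(\lambda z_{t-1}\Delta S_t-h_t)-\lambda z_{t-1}S_{t-1}$. Substituting into (\ref{gt-1}), the two copies of $\lambda z_{t-1}S_{t-1}$ cancel and
\beqa
g_{t-1}(\omega,\lambda z_{t-1})=\rho_{t-1}(\lambda z_{t-1}\Delta S_t-h_t)(\omega)=:\psi_\omega(\lambda).
\eeqa
It therefore suffices to show that $\lambda\mapsto\psi_\omega(\lambda)$ is constant for a.e. $\omega\in H_{t-1}$.

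The core of the argument is a two-sided estimate. For $\lambda,\mu\in\R$ I write $\lambda z_{t-1}\Delta S_t-h_t=(\lambda-\mu)z_{t-1}\Delta S_t+(\mu z_{t-1}\Delta S_t-h_t)$ and apply subadditivity to obtain $\psi_\omega(\lambda)\le\rho_{t-1}((\lambda-\mu)z_{t-1}\Delta S_t)+\psi_\omega(\mu)$. On $H_{t-1}$ the first term vanishes: by positive homogeneity it equals $(\lambda-\mu)\rho_{t-1}(z_{t-1}\Delta S_t)=0$ if $\lambda\ge\mu$ and $(\mu-\lambda)\rho_{t-1}(-z_{t-1}\Delta S_t)=0$ if $\lambda<\mu$, which is exactly where the two defining equalities of $H_{t-1}$ (and normalisation when $\lambda=\mu$) are used. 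Hence $\psi_\omega(\lambda)\le\psi_\omega(\mu)$ on $H_{t-1}$; exchanging $\lambda$ and $\mu$ yields the reverse inequality, so $\psi_\omega(\lambda)=\psi_\omega(\mu)$ and $\psi_\omega$ is constant. This also settles the degenerate cases $\psi_\omega\equiv\pm\infty$, since the two-sided inequality forces equality of the extended-real values; note moreover that no $\infty-\infty$ ambiguity arises, because the cancelled term $\lambda z_{t-1}S_{t-1}$ and the vanishing term $\rho_{t-1}((\lambda-\mu)z_{t-1}\Delta S_t)$ are both finite.

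I expect the main obstacle to be the exchange of quantifiers rather than the algebra. The risk-measure properties hold, for each fixed pair of arguments, only almost surely, so the estimate above can be intersected over at most countably many parameters. I would thus first fix rational $\lambda,\mu$ and intersect the corresponding null sets to obtain that $\psi_\omega$ is constant on $\mathbb{Q}$ for a.e. $\omega\in H_{t-1}$; the extension to every real $\lambda$ then follows from the regularity of the normal integrand $x\mapsto g_{t-1}(\omega,x)$, which is lower semi-continuous and convex (the latter because $\rho_{t-1}$ is sublinear and $x\mapsto xS_t-h_t$ is affine). Lower semicontinuity yields $\psi_\omega(\lambda)\le c(\omega)$ from the dense rational values, while convexity provides the reverse bound, and a Fubini argument may alternatively be used to push the ``for all $\lambda$'' inside the almost sure quantifier. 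Finally, I note that AIP itself is not used in establishing constancy; through Theorem \ref{AIP-theo} it only guarantees that $H_{t-1}$ is the set on which $\rho_{t-1}(\pm z_{t-1}\Delta S_t)$ attain their minimal value $0$.
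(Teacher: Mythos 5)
Your reduction to $\psi_\omega(\lambda)=\rho_{t-1}(\lambda z_{t-1}\Delta S_t-h_t)$ via cash invariance and the two-sided subadditivity estimate on $H_{t-1}$ are exactly the paper's core argument (the paper compares $g_{t-1}(\lambda z_{t-1})$ with $g_{t-1}(0)$ rather than with a general $\mu$, which is immaterial). You also correctly identify that the only real difficulty is the exchange of quantifiers. But this is precisely where you diverge from the paper, and where your route has a genuine gap. The paper resolves the quantifier issue by measurable selection: if the random set $\Gamma_{t-1}(\omega)=\{\alpha\in\R:\ g_{t-1}(\alpha z_{t-1})\ne g_{t-1}(z_{t-1})\}$ were non-empty on a non-null set, one selects $\alpha_t\in L^0(\R,\cF_{t-1})$ lying in $\Gamma_{t-1}$ there and reruns the estimate with the \emph{random} coefficient $\lambda_{t-1}=\alpha_t$; this is legitimate because Proposition \ref{pro} states cash invariance, subadditivity and positive homogeneity for $\cF_{t-1}$-measurable random coefficients, and it yields a contradiction in all cases, the inequalities being valid in $\overline{\R}$.

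Your alternative (rationals, then l.s.c.\ plus convexity) breaks down in a case the theorem does not exclude. Theorem \ref{constant_g} imposes no finiteness on $h_t$ (the hypothesis $\rho_{t-1}(h_t)<\infty$ appears only in Proposition \ref{minimum_g} and Theorem \ref{SAIP-price}), so the constant $c(\omega)=\psi_\omega(0)=\rho_{t-1}(-h_t)(\omega)$ may equal $+\infty$ on a non-null part of $H_{t-1}$. On that event your regularity step fails: a proper convex l.s.c.\ function on $\R$ can equal $+\infty$ at every rational and still be finite at a single irrational point (its effective domain being that singleton), so constancy on $\mathbb{Q}$ together with l.s.c.\ and convexity does not force constancy on $\R$; the Fubini variant has the same defect, since ``$=+\infty$ for Lebesgue-a.e.\ $\lambda$'' still permits one exceptional point, which may depend measurably on $\omega$. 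A second, smaller soft spot is the convexity you invoke: the a.s.\ convexity inequality holds for each \emph{fixed} triple of points, and upgrading this to ``a.s., the chosen jointly measurable version is convex at every point'' again requires either the specific construction of the version in \cite{ZL} or a measurable selection argument, i.e.\ the very tool you set out to avoid. Both issues disappear if you adopt the paper's device: apply your two-sided estimate to a measurably selected $\alpha_t$ in the bad set and derive a contradiction. Alternatively, your proof as written is complete under the additional assumption $\rho_{t-1}(-h_t)<+\infty$ a.s., granted an a.s.\ convex version of $g_{t-1}$.
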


\begin{prop}\label{minimum_g} Let $h_t\in L^0(\R,\cF_t)$ be a payoff  such that $\rho_{t-1}\left(h_t\right)<\infty$ a.s..Consider the random function $g_{t-1}$ associated to $h_t$ given by (\ref{gt-1}). For any $z_{t-1}\in L^0(S(0,1),\mathcal{F}_{t-1})$, consider the random set 
$$F_{t-1} = \{\rho_{t-1}(z_{t-1}\Delta S_t) >0\}\cap \{\rho_{t-1}(-z_{t-1}\Delta S_t) >0\}.$$ We   have:
\begin{align*}
    \lim_{|r|\to\infty} g_{t-1}(\omega,rz_{t-1}) = +\infty, \quad \forall \omega\in F_{t-1}.
\end{align*}
hence $g_{t-1}$ admits a minimum on the line $\R z_{t-1}$ when $\omega\in F_{t-1}$.
\end{prop}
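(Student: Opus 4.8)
The plan is to reduce the statement to a coercivity estimate for $\rho_{t-1}$ along the ray $\R z_{t-1}$, and then invoke lower semicontinuity to produce the minimizer. First I would simplify the expression of $g_{t-1}$ on the line. Writing $r z_{t-1} S_t = r z_{t-1} S_{t-1} + r z_{t-1}\Delta S_t$ and extracting the $\cF_{t-1}$-measurable term $r z_{t-1} S_{t-1}$ by cash invariance, one obtains
\[
g_{t-1}(\omega, r z_{t-1}) = r z_{t-1} S_{t-1} + \rho_{t-1}(r z_{t-1} S_t - h_t) = \rho_{t-1}(r z_{t-1}\Delta S_t - h_t),
\]
which is precisely the reduction already used in the proof of Theorem \ref{AIP-theo} in the case $h_t=0$. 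Hence it suffices to show that $\rho_{t-1}(r z_{t-1}\Delta S_t - h_t)\to+\infty$ as $|r|\to\infty$ for every $\omega\in F_{t-1}$.

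For the lower bound I would split $r z_{t-1}\Delta S_t = (r z_{t-1}\Delta S_t - h_t) + h_t$ and apply subadditivity, which gives
\[
\rho_{t-1}(r z_{t-1}\Delta S_t)\le \rho_{t-1}(r z_{t-1}\Delta S_t - h_t) + \rho_{t-1}(h_t).
\]
Since $\rho_{t-1}(h_t)<\infty$ a.s. by hypothesis, this rearranges into
\[
g_{t-1}(\omega, r z_{t-1}) = \rho_{t-1}(r z_{t-1}\Delta S_t - h_t)\ge \rho_{t-1}(r z_{t-1}\Delta S_t) - \rho_{t-1}(h_t).
\]
By positive homogeneity, $\rho_{t-1}(r z_{t-1}\Delta S_t) = r\,\rho_{t-1}(z_{t-1}\Delta S_t)$ for $r>0$ and $\rho_{t-1}(r z_{t-1}\Delta S_t) = |r|\,\rho_{t-1}(-z_{t-1}\Delta S_t)$ for $r<0$. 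On $F_{t-1}$ both $\rho_{t-1}(z_{t-1}\Delta S_t)$ and $\rho_{t-1}(-z_{t-1}\Delta S_t)$ are strictly positive, so the right-hand side tends to $+\infty$ in either direction, and subtracting the finite quantity $\rho_{t-1}(h_t)$ preserves this, which establishes the claimed limit.

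Finally, to obtain an actual minimizer I would use that $x\mapsto g_{t-1}(\omega,x)$ is lower semicontinuous in $x$, which follows from the lower semicontinuity of $\rho_{t-1}$ (Proposition \ref{pro}) together with the continuity of $x\mapsto x S_{t-1}$; equivalently, $g_{t-1}$ is an $\cF_{t-1}$-normal integrand. Its restriction $r\mapsto g_{t-1}(\omega, r z_{t-1})$ to the line is then l.s.c. and, by the previous step, coercive, so a minimizing sequence stays bounded and, along a convergent subsequence, yields a minimizer; hence $g_{t-1}$ attains its infimum on $\R z_{t-1}$ for each $\omega\in F_{t-1}$. The step requiring the most care is the extended-real bookkeeping: using the sign conventions of Section \ref{DynRM} and the finiteness $\rho_{t-1}(h_t)<\infty$, one must verify that the rearrangement of the subadditivity inequality is legitimate and never produces an indeterminate $\infty-\infty$ (the degenerate case $\rho_{t-1}(h_t)=-\infty$ only makes the lower bound larger, hence is harmless). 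Everything else is a direct application of cash invariance, subadditivity, positive homogeneity, and lower semicontinuity.
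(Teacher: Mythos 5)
Your coercivity estimate is exactly the paper's: cash invariance to identify $g_{t-1}(rz_{t-1})$ with $\rho_{t-1}(rz_{t-1}\Delta S_t-h_t)$, then subadditivity $\rho_{t-1}(rz_{t-1}\Delta S_t)\le \rho_{t-1}(rz_{t-1}\Delta S_t-h_t)+\rho_{t-1}(h_t)$ and positive homogeneity to push the lower bound to $+\infty$ on $F_{t-1}$. But there is a genuine gap in how you deploy it. Every identity you invoke (cash invariance, subadditivity, homogeneity, and even the identification of the jointly measurable version $g_{t-1}(\cdot,x)$ with $xS_{t-1}+\rho_{t-1}(xS_t-h_t)$) is an equality or inequality between random variables that holds only almost surely, with an exceptional null set depending on the scalar $r$ you plug in. Your argument fixes $\omega\in F_{t-1}$ and lets $r$ range over all of $\R$, which implicitly uses these a.s.\ identities simultaneously for uncountably many $r$ at that single $\omega$; the union of the exceptional null sets over all real $r$ need not be null, so the pointwise claim $\lim_{|r|\to\infty}g_{t-1}(\omega,rz_{t-1})=+\infty$ does not follow from the per-$r$ statements. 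Lower semicontinuity cannot repair this: an l.s.c.\ function may dip arbitrarily low at isolated points (e.g.\ $f(r)=|r|$ off the integers, $f(n)=-|n|$ at integers is l.s.c.), so coercivity along a countable dense set of $r$'s plus l.s.c.\ gives neither the limit over all reals nor the existence of a minimum. This null-set issue, not the extended-real bookkeeping you flag, is the delicate point of the proposition.

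It is precisely what the second half of the paper's proof handles, and it is the part missing from your proposal. The paper first proves the estimate for $\cF_{t-1}$-measurable random coefficients $\lambda_{t-1}\in L^0(\R,\cF_{t-1})$, where the properties of Proposition \ref{pro} apply directly (one null set per random variable, hence countably many along any sequence $(\lambda^n_{t-1})_n$ with $|\lambda^n_{t-1}|\to\infty$ a.s.). It then argues by contradiction: if the pointwise convergence failed on a non-null $G_{t-1}\in\cF_{t-1}$, the set $\Gamma_{t-1}(\omega)$ of pairs $(m,(r_n)_{n\ge 1})\in\R\times\R^{\N}$ with $r_n\ge n$ and $g_{t-1}(\omega,r_n)\le m$ for all $n$ would be non-empty on $G_{t-1}$ and $\cF_{t-1}$-measurable, so a measurable selection produces random variables $r_n\to+\infty$ a.s.\ and $m$ with $g_{t-1}(r_nz_{t-1})\le m$ on $G_{t-1}$, contradicting the random-coefficient version of the coercivity estimate. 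To complete your proof you need this measurable-selection step, or some equivalent device for passing from ``for each $r$, almost surely'' to ``almost surely, for all $r$''.
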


\begin{theo}\label{SAIP-price} Let $h_t\in L^0(\R,\cF_t)$ be  s.t. $\rho_{t-1}\left(h_t\right)<\infty$. Consider the  function $g_{t-1}$ associated to $h_t$ given by (\ref{gt-1}). Suppose that $z_{t-1}\in L^0(S(0,1),\mathcal{F}_{t-1})$ is such that  $$\rho_{t-1}(z_{t-1}\Delta S_{t}) = \inf_{z\in S(0,1)}\rho_{t-1}(z\Delta S_{t}).$$  Then, on the set $F_{t-1} = \{\rho_{t-1}(z_{t-1}\Delta S_{t})>0\}\cap \{\rho_{t-1}(-z_{t-1}\Delta S_{t})>0\},$ the random function $g_{t-1}$ admits a minimum.
\end{theo}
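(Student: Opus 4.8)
The plan is to show that, on $F_{t-1}$, the convex lower semi-continuous map $x\mapsto g_{t-1}(\omega,x)$ is coercive, since a coercive normal integrand attains its infimum on $\R^d$ and Proposition~\ref{measurability} then provides an $\cF_{t-1}$-measurable minimizer. The essential observation is that $z_{t-1}$ is chosen to minimize $z\mapsto\rho_{t-1}(z\Delta S_t)$ over the whole sphere $S(0,1)$. Hence, writing $c(\omega):=\rho_{t-1}(z_{t-1}\Delta S_t)(\omega)$, we have $\rho_{t-1}(z\Delta S_t)(\omega)\ge c(\omega)$ for every $z\in S(0,1)$; in particular, since $-z_{t-1}\in S(0,1)$ forces $c(\omega)\le\rho_{t-1}(-z_{t-1}\Delta S_t)(\omega)$, the set $F_{t-1}$ reduces to $\{c>0\}$, and on $F_{t-1}$ the risk $\rho_{t-1}(z\Delta S_t)$ is bounded below by the strictly positive quantity $c(\omega)$ uniformly in $z$.

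First I would turn this uniform bound into a uniform lower estimate for $g_{t-1}$. Fix $x\in\R^d\setminus\{0\}$, put $r:=|x|>0$ and $z:=x/r\in S(0,1)$. Using subadditivity to detach $h_t$, namely $\rho_{t-1}(rzS_t)\le\rho_{t-1}(rzS_t-h_t)+\rho_{t-1}(h_t)$, then positive homogeneity, and finally cash invariance to recognize $zS_{t-1}+\rho_{t-1}(zS_t)=\rho_{t-1}(z\Delta S_t)$, I obtain
\beq
g_{t-1}(\omega,x)=xS_{t-1}+\rho_{t-1}(xS_t-h_t)\ge r\,\rho_{t-1}(z\Delta S_t)(\omega)-\rho_{t-1}(h_t)(\omega)\ge |x|\,c(\omega)-\rho_{t-1}(h_t)(\omega).
\eeq
The hypothesis $\rho_{t-1}(h_t)<\infty$ is exactly what makes the subadditivity rearrangement legitimate (no $\infty-\infty$ occurs under the stated conventions), and this is the step I expect to require the most care, since $g_{t-1}$ is extended-real-valued and the manipulations must respect the conventions fixed in Section~\ref{DynRM}.

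Finally, on $F_{t-1}=\{c>0\}$ the right-hand side tends to $+\infty$ as $|x|\to\infty$, so $g_{t-1}(\omega,\cdot)$ is coercive for $\omega\in F_{t-1}$; this is the uniform, all-direction strengthening of the single-direction statement in Proposition~\ref{minimum_g}, and it is precisely the minimizing property of $z_{t-1}$ that upgrades ``coercivity along one line'' to ``coercivity in every direction''. Being an $\cF_{t-1}$-normal integrand, $g_{t-1}(\omega,\cdot)$ is l.s.c., hence its sublevel sets are closed; coercivity makes the nonempty ones bounded, thus compact, and the infimum is attained for each $\omega\in F_{t-1}$. It then remains to produce the minimizer measurably: since $g_{t-1}$ is a normal integrand, Proposition~\ref{measurability} guarantees that $\inf_{x}g_{t-1}(\cdot,x)$ is $\cF_{t-1}$-measurable and that the argmin set is measurable with nonempty $\omega$-sections on $F_{t-1}$, so a measurable selection yields an $\cF_{t-1}$-measurable $\hat x_{t-1}$ realizing the minimum on $F_{t-1}$.
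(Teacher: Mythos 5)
Your proposal is correct and follows essentially the same route as the paper: the key estimate $g_{t-1}(x)\ge |x|\,\rho_{t-1}(z_{t-1}\Delta S_t)-\rho_{t-1}(h_t)$, obtained by subadditivity, positive homogeneity, cash invariance, and the minimality of $z_{t-1}$ over the sphere, is exactly the inequality the paper derives (phrased there as a bound $M_{t-1}$ on the radius of the sublevel set $D_{t-1}$ rather than as coercivity, which is the same thing). The conclusion via lower semi-continuity, compactness, and a measurable selection through Proposition \ref{measurability} also matches the paper's proof.
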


The following theorem is our first main contribution and shows that the set of all risk-hedging prices is closed under NA. This is a consequence of Theorems 3.7 and 3.8 that insures that $g_{t-1}$ admits a global minimum, see Theorem \ref{SAIP-price}.

\begin{theo}\label{minimal hedging price}
Suppose that $NA$ holds at  time $t\leq T$ and consider a payoff $h_{t+1}\in L^0(\R,\cF_{t+1})$ such that $\rho_t(h_{t+1})<\infty$ a.s.. Then, the minimal risk-hedging price $P^*_t$ for the payoff $h_{t+1}$  is a price.
\end{theo}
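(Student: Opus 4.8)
The aim is to produce a strategy $\hat\theta_t\in L^0(\R^d,\cF_t)$ such that $P_t^*=\hat\theta_tS_t+\rho_t(\hat\theta_tS_{t+1}-h_{t+1})=g_t(\cdot,\hat\theta_t)$, which by the definition of $\cP_t(h_{t+1})$ shows that $P_t^*$ is a price. Since $\rho_t$ is l.s.c.\ and a jointly measurable version of $g_t(\omega,x)=xS_t+\rho_t(xS_{t+1}-h_{t+1})$ is available, $g_t$ is an $\cF_t$-normal integrand, convex in $x$ by subadditivity and positive homogeneity. As $P_t^*=\inf_{x\in\R^d}g_t(\cdot,x)$, Proposition \ref{measurability} shows that the argmin set $\{(\omega,x):g_t(\omega,x)=\inf_yg_t(\omega,y)\}$ is $\cF_t\otimes\cB(\R^d)$-measurable; hence, once I establish that for a.e.\ $\omega$ this set has a non-empty $\omega$-section, a measurable selection yields $\hat\theta_t$. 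The proof therefore reduces to the pointwise statement that, under NA, $x\mapsto g_t(\omega,x)$ attains its infimum a.s.

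I would prove attainment by splitting $\R^d$ into the flat directions of $g_t$ and the coercive ones. Put $\phi(\omega,x):=\rho_t(x\Delta S_{t+1})(\omega)$; cash invariance gives $\phi(\omega,x)=xS_t+\rho_t(xS_{t+1})(\omega)$, and $\phi(\omega,\cdot)$ is sublinear, l.s.c., and non-negative by AIP (statement \ref{2} of Theorem \ref{AIP-theo} at time $t$). Hence $L_t(\omega):=\{x:\phi(\omega,x)=0\}$ is a convex cone, and SRN renders it symmetric, so $L_t(\omega)$ is a linear subspace. Two facts drive the argument. First, $g_t(\omega,\cdot)$ is invariant under translations by $L_t(\omega)$: for $u\in L_t(\omega)$ subadditivity gives $g_t(x+u)\le g_t(x)+\phi(u)=g_t(x)$, and applying the same with $-u\in L_t(\omega)$ yields equality; this is Theorem \ref{constant_g} read along each line of $L_t(\omega)$. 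Second, $g_t(\omega,\cdot)$ is coercive on $L_t(\omega)^\perp$: every unit $z\in L_t(\omega)^\perp$ lies outside $L_t(\omega)$, so $\phi(\omega,z)>0$, and as $\phi(\omega,\cdot)$ is l.s.c.\ on the compact set $S(0,1)\cap L_t(\omega)^\perp$ its minimum there, $\delta(\omega)$, is attained and strictly positive. Using subadditivity and $\rho_t(h_{t+1})<\infty$,
\[
g_t(\omega,x)\ \ge\ xS_t+\rho_t(xS_{t+1})(\omega)-\rho_t(h_{t+1})(\omega)\ =\ \phi(\omega,x)-\rho_t(h_{t+1})(\omega)\ \ge\ \delta(\omega)|x|-\rho_t(h_{t+1})(\omega)
\]
for $x\in L_t(\omega)^\perp$; this coercivity is precisely Proposition \ref{minimum_g} and Theorem \ref{SAIP-price} read on the transverse directions. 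Combining the two facts, $\inf_{\R^d}g_t(\omega,\cdot)=\inf_{L_t(\omega)^\perp}g_t(\omega,\cdot)$, and an l.s.c.\ coercive function on the finite-dimensional space $L_t(\omega)^\perp$ attains its infimum. Finiteness of $P_t^*$ follows from the displayed lower bound together with $\cP_t(h_{t+1})\neq\emptyset$, which gives $-\rho_t(h_{t+1})\le P_t^*<\infty$ a.s.

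The true difficulty is to carry out this $\omega$-by-$\omega$ geometry measurably. One must verify that $\omega\mapsto L_t(\omega)$ is a measurable subspace-valued map — its dimension is $\cF_t$-measurable because $\phi$ is a normal integrand, so one may partition $\Omega$ into the finitely many $\cF_t$-sets $\{\dim L_t=k\}$ — and that $\delta(\cdot)$ and a measurable orthonormal frame of $L_t^\perp$ can be chosen on each piece. This is exactly where the measurable minimizer $z_t\in L^0(S(0,1),\cF_t)$ of $\rho_t(\cdot\,\Delta S_{t+1})$ constructed before Theorem \ref{constant_g} is convenient: on $\{\rho_t(z_t\Delta S_{t+1})>0\}$ all directions are coercive and Theorem \ref{SAIP-price} applies directly, while on $\{\rho_t(z_t\Delta S_{t+1})=0\}$ the line $\R z_t$ is flat by SRN and Theorem \ref{constant_g}, reducing the dimension by one and allowing an induction on $d$ in place of handling the random subspace explicitly. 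Once pointwise attainment holds on a full-measure set, Proposition \ref{measurability} and a measurable selection deliver $\hat\theta_t$ with $P_t^*=g_t(\cdot,\hat\theta_t)\in\cP_t(h_{t+1})$.
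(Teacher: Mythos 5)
Your proposal is correct and rests on the same three pillars as the paper's proof --- the dichotomy between risk-neutral (flat) directions and coercive ones, handled respectively by Theorem \ref{constant_g} and by Proposition \ref{minimum_g} together with Theorem \ref{SAIP-price}, followed by a measurable selection of an argmin via Proposition \ref{measurability} --- but your primary packaging is genuinely different. You identify in one shot the random zero set $L_t(\omega)=\{x:\rho_t(x\Delta S_{t+1})(\omega)=0\}$, which subadditivity and AIP make a convex cone and SRN makes symmetric, hence a linear subspace; you then prove invariance of $g_t(\omega,\cdot)$ under translations by $L_t(\omega)$ and coercivity on $L_t(\omega)^\perp$, and conclude by lower semi-continuity plus coercivity on a finite-dimensional subspace. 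The paper instead peels off one flat direction at a time: it takes the measurable minimizer $z_t$ of $z\mapsto\rho_t(z\Delta S_{t+1})$ on $S(0,1)$, applies Theorem \ref{SAIP-price} on $\{\rho_t(z_t\Delta S_{t+1})>0\}$, and on the complement performs an $\cF_t$-measurable orthonormal change of coordinates reducing to a market with $d-1$ assets, checks that this reduced market \emph{inherits} NA, and inducts on $d$. Your route avoids the coordinate change and the NA-inheritance verification; its cost is exactly the measurability work you flag, which should be made explicit: (i) SRN as stated quantifies over measurable selections $z_t\in L^0(S(0,1),\cF_t)$, so to conclude that $L_t(\omega)$ is a.s.\ symmetric you need the standard contradiction argument (if the jointly measurable set $\{(\omega,z):\rho_t(z\Delta S_{t+1})=0<\rho_t(-z\Delta S_{t+1})\}$ had non-null projection, a measurable selection would violate SRN); (ii) one must select a measurable orthonormal frame of $L_t^\perp$ on each $\cF_t$-set $\{\dim L_t=k\}$ and observe that $\delta(\omega)=\min\{\rho_t(z\Delta S_{t+1})(\omega):z\in S(0,1)\cap L_t(\omega)^\perp\}$ is $\cF_t$-measurable, being the infimum of a normal integrand over a measurable compact-valued map. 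Both points are routine, and your stated fallback --- working with $z_t$ and inducting on the dimension --- is precisely the paper's argument, so the proposal is sound either way.
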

Notice that the proof of the theorem  above (see Appendix) provides the existence of an optimal hedging strategy $\theta_{t}^*\in L^0(\R,\cF_{t})$ such that $$P_{t}^*=g_{t}(\theta_{t}^*)=\theta_{t}^* S_{t}+\rho_{t}(\theta_{t}^* S_{t+1}-h_{t+1})\in \cP_{t}(h_{t+1}).$$

In the following, we say that a payoff $h_{t+1}$ is  not freely attainable at time $t$ if  it satisfies $\rho_t(-h_{t+1}) >0$ a.s. and $\rho_t(h_{t+1})<\infty$ a.s..  Note that if $\rho_t(-h_{t+1}) >0$, then it is not possible to get the payoff $h_{t+1}$  from nothing when writing $0=h_{t+1}+(-h_{t+1})$ and letting aside  $(-h_{t+1})$ since the latter is not acceptable. Notice that, if $\rho_t(X)=-\essinf_{\cF_t}(X)$ as in the usual case, $\rho_t(-h_{t+1}) >0$ means that $\esssup_{\cF_t}(h_{t+1})>0$ and recall that $h_{t+1}$  is acceptable if $h_{t+1}\ge 0$ a.s..   The following theorem gives an interpretation of the NA condition. Precisely, NA means that the price of any no freely attainable and acceptable payoff is strictly positive. In the usual case, a no freely attainable and acceptable payoff is a non negative payoff which does not vanish on a non null $\cF_t$-measurable set. \smallskip

We then have a new financial interpretation of the NA condition, as proved in Appendix:

\begin{theo}\label{SAIP} The NA condition holds at time $t\le T$ if and only if the infinimum risk-hedging price $P_t^*$ of any  no freely attainable  and acceptable payoff $h_{t+1}$ at time $t$  is strictly positive. Moreover,  under NA, the infimum risk-hedging price $P_{t}^*$ of any contingent claim $h_{t+1}\in L^0(\R^d,\cF_{t+1})$ satisfies

 $$\rho_{t}(-h_{t+1})\ge P^*_t\ge -\rho_{t}(h_{t+1}).$$

\end{theo}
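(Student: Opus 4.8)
The plan is to prove the two displayed inequalities first, since the lower one already yields $P_t^*\ge 0$ for acceptable payoffs and serves as the base for the strict‑positivity argument. For the upper bound I would evaluate the price functional at the trivial strategy: $P_t^*=\inf_{x\in\R^d}g_t(x)\le g_t(0)=\rho_t(-h_{t+1})$. For the lower bound I would combine subadditivity, cash‑invariance and AIP: the decomposition $\rho_t(xS_{t+1})=\rho_t\bigl((xS_{t+1}-h_{t+1})+h_{t+1}\bigr)\le \rho_t(xS_{t+1}-h_{t+1})+\rho_t(h_{t+1})$ gives $\rho_t(xS_{t+1}-h_{t+1})\ge \rho_t(xS_{t+1})-\rho_t(h_{t+1})$, so $g_t(x)\ge xS_t+\rho_t(xS_{t+1})-\rho_t(h_{t+1})=\rho_t(x\Delta S_{t+1})-\rho_t(h_{t+1})$ by cash‑invariance; since AIP forces $\rho_t(x\Delta S_{t+1})\ge 0$ (Theorem~\ref{AIP-theo}), taking the infimum over $x$ yields $P_t^*\ge -\rho_t(h_{t+1})$. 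Note that this part uses only AIP.

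For the forward implication of the equivalence, I would assume NA and take $h_{t+1}$ acceptable and not freely attainable. The lower bound together with $\rho_t(h_{t+1})\le 0$ gives $P_t^*\ge 0$, so it suffices to show $B:=\{P_t^*=0\}$ is null. Here I invoke Theorem~\ref{minimal hedging price}: NA provides an optimal strategy $\theta_t^*$ with $P_t^*=\theta_t^*S_t+\rho_t(\theta_t^*S_{t+1}-h_{t+1})$, so cash‑invariance gives $\rho_t(\theta_t^*\Delta S_{t+1}-h_{t+1})=P_t^*=0$ on $B$. Subadditivity with $\rho_t(h_{t+1})\le 0$ then forces $\rho_t(\theta_t^*\Delta S_{t+1})\le 0$ on $B$, while AIP gives the reverse inequality, so $\rho_t(\theta_t^*\Delta S_{t+1})=0$ there; normalising $\theta_t^*$ by positive homogeneity on $\{\theta_t^*\ne 0\}$ and applying the SRN condition yields $\rho_t(-\theta_t^*\Delta S_{t+1})=0$ on $B\cap\{\theta_t^*\ne 0\}$. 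Writing $-h_{t+1}=(\theta_t^*\Delta S_{t+1}-h_{t+1})-\theta_t^*\Delta S_{t+1}$ and using subadditivity then gives $\rho_t(-h_{t+1})\le 0$ there, contradicting the not‑freely‑attainable hypothesis $\rho_t(-h_{t+1})>0$; on $B\cap\{\theta_t^*=0\}$ one reads off $P_t^*=\rho_t(-h_{t+1})$ directly, which contradicts $P_t^*=0$. Hence $B$ is null and $P_t^*>0$ a.s.

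For the converse I would argue by contraposition, splitting the failure of NA into the failure of AIP or of SRN. If AIP fails there is $z_t\in L^0(S(0,1),\cF_t)$ with $\rho_t(z_t\Delta S_{t+1})<0$ on a non‑null set, so some $G_n:=\{\rho_t(z_t\Delta S_{t+1})<-1/n\}$ is non‑null; the constant payoff $h_{t+1}=1/(2n)$ is acceptable and not freely attainable by cash‑invariance, while testing the strategy $z_t$ gives $P_t^*\le 1/(2n)+\rho_t(z_t\Delta S_{t+1})<-1/(2n)<0$ on $G_n$. If instead AIP holds but SRN fails, there is $z_t$ and a non‑null $G\in\cF_t$ with $\rho_t(z_t\Delta S_{t+1})=0$ and $\rho_t(-z_t\Delta S_{t+1})>0$ on $G$; I would test the payoff $h_{t+1}=\mathbf 1_G z_t\Delta S_{t+1}+\e\mathbf 1_{G^c}$, which by the locality $\rho_t(\mathbf 1_A X)=\mathbf 1_A\rho_t(X)$ (positive homogeneity in Proposition~\ref{pro} with $k_t=\mathbf 1_A$) satisfies $\rho_t(h_{t+1})=-\e\mathbf 1_{G^c}\le 0$ and $\rho_t(-h_{t+1})>0$ a.s., so it is acceptable and not freely attainable. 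The strategy $\theta_t=\mathbf 1_G z_t$ then produces $\theta_t S_{t+1}-h_{t+1}=\mathbf 1_G z_t S_t-\e\mathbf 1_{G^c}\in L^0(\R,\cF_t)$, whence $g_t(\theta_t)=\e\mathbf 1_{G^c}$ and $P_t^*\le 0$ on $G$. In either case $P_t^*$ fails to be strictly positive on a non‑null set, proving the contrapositive.

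The main obstacle I anticipate is the strictness in the forward direction: upgrading $P_t^*\ge 0$ to $P_t^*>0$ genuinely uses both halves of NA — AIP to pin $\rho_t(\theta_t^*\Delta S_{t+1})=0$ and SRN to flip the sign of the optimal strategy — together with the existence of an optimal $\theta_t^*$ from Theorem~\ref{minimal hedging price}, and the bookkeeping on $\{\theta_t^*\ne 0\}$ versus $\{\theta_t^*=0\}$ must be handled separately. In the converse, the parallel subtlety is constructing a payoff that is \emph{globally} not freely attainable yet has a non‑positive price on the bad set, which is why the SRN‑failure case needs the indicator construction with the $\e$‑perturbation on $G^c$ rather than the naive choice $h_{t+1}=\mathbf 1_G z_t\Delta S_{t+1}$.
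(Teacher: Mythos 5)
Your proof is correct and follows essentially the same route as the paper's: both directions rest on the optimal strategy supplied by Theorem~\ref{minimal hedging price}, the AIP inequality $\rho_t(x\Delta S_{t+1})\ge 0$, SRN to force $\rho_t(-\theta_t^*\Delta S_{t+1})=0$, and, for the converse, the same two test payoffs (a positive constant against a failure of AIP, and $1_G z_t\Delta S_{t+1}+\e 1_{G^c}$ against a failure of SRN, which is exactly the paper's $\Lambda_t$ construction). The only organizational difference is that in the forward direction you argue by contradiction on the set $\{P_t^*=0\}$, splitting on $\{\theta_t^*=0\}$ versus $\{\theta_t^*\ne 0\}$, and replace the paper's appeal to Theorem~\ref{constant_g} by a direct subadditivity estimate; the substance is identical.
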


\section{FTAP and dual representation for time-consistent risk measures. }\label{consistency}

\begin{defi} A dynamic risk-measure $(\rho_t)_{ t\le T}$ is said time-consistent if $\rho_{t+1}(X)=\rho_{t+1}(Y)$ implies $ \rho_{t}(X)=\rho_{t}(Y)$ for  $X, Y \in L^0(\R,\cF_T)$ and $ t \le T-1$ (see Section 5 in \cite{DS}).
\end{defi}

The following result is very well known, see \cite{Acc}.

\begin{lemm}\label{TC-acc}
A dynamic risk-measure $(\rho_t)_{ t\le T}$ is time-consistent if and only if its family of acceptable sets $(\cA_t)_{t\le T}$ satisfies
\bea
\cA_{t,T}=\cA_{t,t+1}+\cA_{t+1,T},\,\forall t\le T-1.
\eea
\end{lemm}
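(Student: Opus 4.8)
\emph{Strategy.} The plan is to route the equivalence through the recursive relation
\beqa
\rho_t(X)=\rho_t(-\rho_{t+1}(X)),\qquad X\in L^0(\overline{\R},\cF_T),\ t\le T-1,
\eeqa
which is the classical analytic reformulation of time-consistency, and then to show that this recursion is in turn equivalent to the additive decomposition of the acceptance sets. Throughout I would use two elementary consequences of Proposition~\ref{pro}. First, by cash invariance, whenever $\rho_t(X)$ is finite one has $\rho_t(X+\rho_t(X))=\rho_t(X)-\rho_t(X)=0$, so that $X+\rho_t(X)\in\cA_t$; this makes the infimum defining $\rho_t$ ``attained inside $\cA_t$'' for free, with no appeal to $L^0$-closedness. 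Second, positive homogeneity with the $\cF_t$-measurable multiplier $k_t=\mathbf 1_A$ ($A\in\cF_t$) gives the locality $\rho_t(\mathbf 1_A Z)=\mathbf 1_A\rho_t(Z)$, hence $\rho_t(\mathbf 1_A Z_1+\mathbf 1_{A^c}Z_2)=\mathbf 1_A\rho_t(Z_1)+\mathbf 1_{A^c}\rho_t(Z_2)$. I also note $\cA_{t,T}=\cA_t$ and $\cA_{t+1,T}=\cA_{t+1}$.

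\emph{Time-consistency $\Leftrightarrow$ recursion.} If $(\rho_t)$ is time-consistent, then fixing $X$ and setting $Y:=-\rho_{t+1}(X)$, cash invariance at level $t+1$ gives $\rho_{t+1}(Y)=\rho_{t+1}(-\rho_{t+1}(X))=\rho_{t+1}(X)$, so the definition forces $\rho_t(X)=\rho_t(Y)$, which is the recursion. Conversely, if the recursion holds and $\rho_{t+1}(X)=\rho_{t+1}(Y)$, then $\rho_t(X)=\rho_t(-\rho_{t+1}(X))=\rho_t(-\rho_{t+1}(Y))=\rho_t(Y)$.

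\emph{Recursion $\Leftrightarrow$ decomposition.} For the inclusion $\cA_{t,t+1}+\cA_{t+1,T}\subseteq\cA_{t,T}$, take $X_1\in\cA_{t,t+1}$, $X_2\in\cA_{t+1}$; cash invariance ($X_1$ being $\cF_{t+1}$-measurable) gives $\rho_{t+1}(X_1+X_2)=\rho_{t+1}(X_2)-X_1\le-X_1$, so $-\rho_{t+1}(X_1+X_2)\ge X_1$, and monotonicity with the recursion yields $\rho_t(X_1+X_2)=\rho_t(-\rho_{t+1}(X_1+X_2))\le\rho_t(X_1)\le0$. For the reverse inclusion, given $X\in\cA_{t,T}$ I would use the splitting $X=(-\rho_{t+1}(X))+(X+\rho_{t+1}(X))$: the second term lies in $\cA_{t+1}$ since $\rho_{t+1}(X+\rho_{t+1}(X))=0$, and the first is $\cF_{t+1}$-measurable with $\rho_t(-\rho_{t+1}(X))=\rho_t(X)\le0$ by the recursion, hence lies in $\cA_{t,t+1}$. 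In the opposite direction, assuming the decomposition, write $m:=\rho_{t+1}(X)$; then $X+m\in\cA_{t+1}$ and $-m+\rho_t(-m)\in\cA_{t,t+1}$ (first elementary fact), so $X+\rho_t(-m)=(X+m)+(-m+\rho_t(-m))\in\cA_{t,T}$ gives $\rho_t(X)\le\rho_t(-m)$. For the matching inequality, put $c:=\rho_t(X)$ and decompose $X+c=A_1+A_2$ with $\rho_t(A_1)\le0$, $\rho_{t+1}(A_2)\le0$; cash invariance gives $\rho_{t+1}(X)\le c-A_1$, and applying the (decreasing) $\rho_t$ together with cash invariance yields $\rho_t(-\rho_{t+1}(X))\le\rho_t(A_1-c)=\rho_t(A_1)+c\le c=\rho_t(X)$, which closes the recursion.

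\emph{Main obstacle.} The only genuinely delicate point is that $\rho_t$ is $\overline{\R}$-valued, so each step must be read with the stated conventions and localized, via the locality property above, on the $\cF_t$-measurable sets where the relevant quantities are finite; gluing across the partition $\{\rho_{t+1}(X)\in\R\}$ and $\{\rho_{t+1}(X)=\pm\infty\}$ is the part that needs care. On the finite part the argument is clean because cash invariance makes the attainment $X+\rho_t(X)\in\cA_t$ automatic and avoids any $L^0$-closedness argument, and I expect the infinite part to follow directly from monotonicity and the conventions $\infty-\infty=+\infty$ and $0\cdot(\pm\infty)=0$.
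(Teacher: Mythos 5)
You should know at the outset that the paper contains no proof of this lemma: it is quoted as ``very well known'' with a citation to \cite{Acc}, where the statement is established for risk measures on $L^{\infty}$. So there is no internal argument to compare against; what you propose is the standard proof from that literature: time-consistency rewritten as the recursion $\rho_t(\cdot)=\rho_t(-\rho_{t+1}(\cdot))$, the inclusion $\cA_{t,t+1}+\cA_{t+1,T}\subseteq\cA_{t,T}$ by cash invariance and monotonicity, the reverse inclusion by the splitting $X=(-\rho_{t+1}(X))+(X+\rho_{t+1}(X))$, and the recursion recovered from the decomposition via your two inequalities. On the event where all the quantities $\rho_t(X)$, $\rho_{t+1}(X)$ involved are real-valued, every step is correct, and your two preliminary facts (the attainment $X+\rho_t(X)\in\cA_t$ via (\ref{Acset}), and locality via positive homogeneity in Proposition \ref{pro}) are valid.

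The gap is exactly the point you defer with ``I expect the infinite part to follow directly from monotonicity and the conventions.'' It does not follow, and it is not a removable technicality: it is precisely where the $L^0(\overline{\R})$ setting of this paper departs from the $L^{\infty}$ setting of \cite{Acc}. There are two problems. First, a circularity: the paper's definition of time-consistency admits only test variables $X,Y\in L^0(\R,\cF_T)$, so your derivation of the recursion needs $Y:=-\rho_{t+1}(X)$ to be real-valued; when you later apply the recursion to a generic $X\in\cA_{t,T}$ you are therefore assuming the finiteness you would need to prove. Second, and decisively, the inclusion $\cA_{t,T}\subseteq\cA_{t,t+1}+\cA_{t+1,T}$ \emph{forces} $\rho_{t+1}(X)<\infty$ a.s. for every $X\in\cA_{t,T}$ (if $X=X_1+X_2$ with $X_1\in\cA_{t,t+1}$, $X_2\in\cA_{t+1,T}$, then $\rho_{t+1}(X)=\rho_{t+1}(X_2)-X_1\le -X_1<\infty$), whereas time-consistency does not. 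Within the axioms actually stated in this paper (Proposition \ref{pro} and (\ref{Acset})) one can even build a counterexample: take $\cF_t$ trivial, $\cF_{t+1}=\sigma(B)$ with $P(B)\in(0,1)$ and $P$ non-atomic on $B$, $\cA_{t+1}=L^0(\R_+,\cF_T)$, and $\cA_t=\{X\in L^0(\R,\cF_T):\ X1_{B^c}\ge 0\ \mathrm{a.s.}\}$. Then $\rho_{t+1}(X)=-\essinf_{\cF_{t+1}}X$, while $\rho_t(X)$ equals the value of $\rho_{t+1}(X)$ on $B^c$; hence $\rho_{t+1}(X)=\rho_{t+1}(Y)$ implies $\rho_t(X)=\rho_t(Y)$, i.e.\ the pair is time-consistent. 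Yet any real-valued $Z$ with $Z\ge 0$ on $B^c$ and $Z$ unbounded from below on $B$ lies in $\cA_{t,T}$ but not in $\cA_{t,t+1}+\cA_{t+1,T}$, since every element of that sum is bounded below on $B$ by a real $\cF_{t+1}$-measurable constant. So the ``only if'' direction cannot be closed by monotonicity and the conventions; one must either impose an extra hypothesis that rules out such examples (for instance a.s. finiteness of $\rho_{t+1}$ on $\cA_{t,T}$, or a sensitivity condition such as $\cA_{t,T}\cap L^0(\R_-,\cF_T)=\{0\}$, which this paper introduces only later for its FTAP — and one would then still have to derive the needed finiteness from it), or restrict the class of positions for which the decomposition is asserted. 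In short: you identified the right obstacle, but it is a genuine gap, not a routine verification.
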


Observe that, if $(\rho_t)_{ t\le T}$ is  time-consistent, we may show by induction that $\rho_{t}(-\rho_{t+s}(\cdot))=\rho_{t}(\cdot)$ for any $t\le T$ and $s\ge 0$ such that $s+t\le T$. In the following, we introduce another possible definition for the risk-hedging prices in the multi-period model, where the risk is only measured at time $t$.
\begin{defi}
The contingent claim $h_T\in L^0(\R,\cF_T)$ is said   directly risk-hedged at time $t\le T-1$ if there exists a (direct) price $P_{t}\in L^{0}(\R,\cF_{t})$ and a strategy  $(\theta_{u})_{u=t}^{T-1}$ such that that $P_{t}+\sum\limits_{t\le u\le T-1}\theta_{u}\Delta S_{u+1}-h_T$ is acceptable at time $t$.
\end{defi}

The set of all direct risk-hedging prices at time $t$ is then given by
\bean \bar\cP_t(h_T)&=&\left\{\rho_t\left(\sum\limits_{t\le u\le T-1}\theta_{u}\Delta S_{u+1}-h_T\right):~\theta_{u}\in L^0(\R^d,\cF_u)\right\}+ L^0(\R,\cF_{t}).\eean
and  the infimum direct risk-hedging price is
$$\bar P_{t}^*(h_T):=\mathop{\essinf}\limits_{(\theta_{u})_{u=t}^{T-1}}\bar\cP_t(h_T).$$

The following result is proved in \cite{ZL} and shows that the direct infimum risk-hedging prices may coincide with  the infimum prices derived from the step by step backward procedure  developed before, i.e. such that 
$$P_{t}^*(h_T)=\mathop{\essinf}\limits_{\theta_{t}\in L^0(\R,\cF_{t})}\cP_{t}(P_{t+1}^*(h_T)),$$ where $P_{T}^*(h_T)=h_T$.
\begin{theo}
Suppose that  the dynamic risk-measure $(\rho_t)_{ t\le T}$ is time-consistent. Then,  $\bar P_{t}^*(h_T)=P_{t}^*(h_T)$ for any $ t\le T-1$. Moreover, the direct infimum risk-hedging prices are direct prices if and only if the infimum prices of the backward procedure are prices. 
\end{theo}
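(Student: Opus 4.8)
The plan is to prove both assertions by backward induction on $t$, running from $t=T-1$ down, the engine being the recursive identity $\rho_t(\cdot)=\rho_t(-\rho_{t+1}(\cdot))$ recalled above for time-consistent measures, combined with cash invariance. First I would record the basic rewriting that underlies everything: since $\theta_t\Delta S_{t+1}\in L^0(\R,\cF_{t+1})$, cash invariance of $\rho_{t+1}$ followed by time-consistency gives, for any strategy $(\theta_u)_{u=t}^{T-1}$,
\[
\rho_t\Big(\sum_{u=t}^{T-1}\theta_u\Delta S_{u+1}-h_T\Big)=\rho_t\Big(\theta_t\Delta S_{t+1}-\rho_{t+1}\Big(\sum_{u=t+1}^{T-1}\theta_u\Delta S_{u+1}-h_T\Big)\Big).
\]
This peels off the first trading date and reduces the multi-period functional to a one-step functional whose ``terminal payoff'' is the tail risk $\rho_{t+1}(\cdots)$; it is the only place where time-consistency enters.

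For the equality $\bar P_t^*=P_t^*$, the base case $t=T-1$ is immediate since the sum reduces to the single term $\theta_{T-1}\Delta S_T$. For the inductive step I would split the joint essential infimum defining $\bar P_t^*$ into an outer infimum over $\theta_t$ and an inner infimum over the tail $(\theta_u)_{u>t}$, then interchange the inner infimum with $\rho_t$ in the rewriting above. The key lemma is that, for fixed $\theta_t$,
\[
\essinf_{(\theta_u)_{u>t}}\rho_t\big(\theta_t\Delta S_{t+1}-B\big)=\rho_t\big(\theta_t\Delta S_{t+1}-\essinf_{(\theta_u)_{u>t}}B\big),
\]
where $B:=\rho_{t+1}(\sum_{u>t}\theta_u\Delta S_{u+1}-h_T)$ and, by definition and the induction hypothesis, $\essinf B=\bar P_{t+1}^*=P_{t+1}^*$. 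The inequality ``$\geq$'' is pure monotonicity ($B\geq\essinf B$). For ``$\leq$'' I would use that the family $\{B\}$ is downward directed — pasting two strategies along the $\cF_{t+1}$-measurable set $\{B^1\leq B^2\}$ and invoking the local property of the coherent measure $\rho_{t+1}$ — so that $\essinf B$ is a decreasing limit $B_n\downarrow\essinf B$; then monotonicity together with the lower semicontinuity of $\rho_t$ forces $\rho_t(\theta_t\Delta S_{t+1}-B_n)\downarrow\rho_t(\theta_t\Delta S_{t+1}-\essinf B)$. Taking the outer $\essinf_{\theta_t}$ then yields $\bar P_t^*=\essinf_{\theta_t}\rho_t(\theta_t\Delta S_{t+1}-P_{t+1}^*)=P_t^*$. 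I expect this interchange to be the main obstacle, being the one step that mixes the order-theoretic essential infimum with the topological (lsc) and the measurable-pasting structure.

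For the equivalence of attainment I would argue both directions through the same rewriting. If each backward step is attained, say $P_s^*=\rho_s(\theta_s^*\Delta S_{s+1}-P_{s+1}^*)$ with $\theta_s^*\in L^0(\R^d,\cF_s)$, then concatenating $(\theta_u^*)_{u=t}^{T-1}$ and applying the rewriting in a backward induction (base $t=T-1$) shows $\rho_t(\sum_{u=t}^{T-1}\theta_u^*\Delta S_{u+1}-h_T)=P_t^*=\bar P_t^*$, so the direct infimum is a direct price. Conversely, if $\bar P_t^*$ is attained by a direct strategy $(\theta_u^*)_{u=t}^{T-1}$, set $B:=\rho_{t+1}(\sum_{u>t}\theta_u^*\Delta S_{u+1}-h_T)\geq P_{t+1}^*$; the rewriting gives
\[
P_t^*=\rho_t(\theta_t^*\Delta S_{t+1}-B)\geq\rho_t(\theta_t^*\Delta S_{t+1}-P_{t+1}^*)\geq\essinf_{\theta}\rho_t(\theta\Delta S_{t+1}-P_{t+1}^*)=P_t^*,
\]
so every inequality is an equality and $\theta_t^*$ attains the one-step backward infimum, i.e. $P_t^*\in\cP_t(P_{t+1}^*)$. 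As this holds at each $t$, the backward prices are prices. Throughout I would keep the paper's conventions for $\pm\infty$ in force and, where needed, invoke the standing finiteness hypothesis $\rho_t(h_{t+1})<\infty$ to avoid indeterminate forms; this is routine and not where the difficulty lies.
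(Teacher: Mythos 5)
The paper itself gives no proof of this theorem: it is quoted verbatim from the preprint \cite{ZL}, so there is nothing internal to compare against, and your proposal has to stand on its own. Most of it does. The peeling identity $\rho_t\bigl(\sum_{u\ge t}\theta_u\Delta S_{u+1}-h_T\bigr)=\rho_t\bigl(\theta_t\Delta S_{t+1}-B\bigr)$, $B:=\rho_{t+1}\bigl(\sum_{u>t}\theta_u\Delta S_{u+1}-h_T\bigr)$, is a correct use of $\rho_t(\cdot)=\rho_t(-\rho_{t+1}(\cdot))$ plus cash invariance; the ``$\geq$'' half of your interchange lemma is indeed pure monotonicity and already gives $\bar P_t^*\ge P_t^*$; and the whole attainment equivalence (both directions) is correct once the first assertion is available.

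The genuine gap is the ``$\leq$'' half of the interchange lemma. Having extracted $B_n\downarrow\essinf B$ from the downward-directed family, you claim that monotonicity together with lower semicontinuity forces $\rho_t(\theta_t\Delta S_{t+1}-B_n)\downarrow\rho_t(\theta_t\Delta S_{t+1}-\essinf B)$. Both properties point the \emph{same, wrong} way: for the increasing sequence $X_n=\theta_t\Delta S_{t+1}-B_n\uparrow X$, monotonicity gives $\rho_t(X_n)\ge\rho_t(X)$ and the paper's l.s.c.\ gives $\rho_t(X)\le\liminf_n\rho_t(X_n)$, so both only bound the decreasing limit from \emph{below} by $\rho_t(X)$; neither yields the reverse inequality. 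What you are invoking is continuity from below (the Lebesgue property), which is not among the axioms of Proposition 2.1 and in fact fails for the paper's baseline risk measure $\rho_t(X)=-\essinf_{\cF_t}X$: with $\cF_0$ trivial, $\theta_0=0$ and $B_n=1_{C_n}$, $C_n\downarrow\emptyset$, $P(C_n)>0$, one has $B_n\downarrow 0$ a.s.\ but $\rho_0(-B_n)=1$ for every $n$ while $\rho_0(0)=0$. The repair is to avoid any limit inside $\rho_t$: the family of tail risks is not merely downward directed but stable under pasting along \emph{countable} $\cF_{t+1}$-measurable partitions (the local property $1_A\rho_{t+1}(1_A\,\cdot)=1_A\rho_{t+1}(\cdot)$, from $\cF_{t+1}$-positive homogeneity, passes to countable partitions). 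So, for a fixed constant $\epsilon>0$, pasting the strategies attached to $(B^n)$ along the partition $D_n=\{B^n\le\essinf B+\epsilon\}\setminus\bigcup_{m<n}\{B^m\le\essinf B+\epsilon\}$ produces a \emph{single} tail strategy whose tail risk satisfies $B\le\essinf B+\epsilon$ a.s.\ (on $\{\essinf B=-\infty\}$ use $\{B^n\le -n\}$ instead, with the paper's conventions for infinite values); then monotonicity and cash invariance give $\rho_t(\theta_t\Delta S_{t+1}-B)\le\rho_t(\theta_t\Delta S_{t+1}-\essinf B)+\epsilon$, and $\epsilon\downarrow0$ is taken \emph{outside} $\rho_t$. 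With that substitution your induction, and hence the whole proposal, goes through.
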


\begin{coro}\label{CoroConsistency} Suppose that  the dynamic risk-measure $(\rho_t)_{ t\le T}$ is time-consistent. Then,  $\bar\cP_{t}(h_T)=\cP_{t}(h_T)$ for all $t\le T$.

\end{coro}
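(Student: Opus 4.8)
The plan is to read $\cP_t(h_T)$ as the last step $\cP_t(P^*_{t+1}(h_T))$ of the backward procedure and to prove the two inclusions separately, taking from the preceding theorem that $\bar P^*_{t+1}(h_T)=P^*_{t+1}(h_T)$, from Proposition \ref{pro} the monotonicity and cash-invariance of the $\rho_u$, and from time-consistency the identities $\rho_t(-\rho_{t+1}(\cdot))=\rho_t(\cdot)$ and $\cA_{t,T}=\cA_{t,t+1}+\cA_{t+1,T}$ (Lemma \ref{TC-acc}). I would run the argument as a backward induction on $t$, the base case $t=T-1$ being trivial since over a single period the direct and the one-step backward constructions are literally the same.

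For the inclusion $\bar\cP_t(h_T)\subseteq\cP_t(h_T)$, fix a direct price $P_t\ge\rho_t\big(\sum_{u=t}^{T-1}\theta_u\Delta S_{u+1}-h_T\big)$ and set $F(\theta)=\rho_{t+1}\big(\sum_{u=t+1}^{T-1}\theta_u\Delta S_{u+1}-h_T\big)$. Pulling the $\cF_{t+1}$-measurable term $\theta_t\Delta S_{t+1}$ out of $\rho_{t+1}$ by cash-invariance and applying $\rho_t(-\rho_{t+1}(\cdot))=\rho_t(\cdot)$, I would rewrite the right-hand side as $\rho_t(\theta_t\Delta S_{t+1}-F(\theta))$. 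Since $F(\theta)$ is itself a direct price at time $t+1$ it dominates $\bar P^*_{t+1}=P^*_{t+1}$, so monotonicity of $\rho_t$ gives $\rho_t(\theta_t\Delta S_{t+1}-F(\theta))\ge\rho_t(\theta_t\Delta S_{t+1}-P^*_{t+1})$, a generator of $\cP_t(P^*_{t+1})$; upward-closedness of the price set then yields $P_t\in\cP_t(h_T)$. This direction needs no attainment.

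The reverse inclusion $\cP_t(h_T)\subseteq\bar\cP_t(h_T)$ is the crux. A backward price satisfies $P_t+\theta_t\Delta S_{t+1}-P^*_{t+1}\in\cA_{t,t+1}$ for some $\theta_t$. I would like to glue this one-step acceptable position with a hedge of $h_T$ from time $t+1$: for each strategy $\theta$ the position $\sum_{u\ge t+1}\theta_u\Delta S_{u+1}-h_T+F(\theta)$ lies in $\cA_{t+1,T}$, and the decomposition $\cA_{t,T}=\cA_{t,t+1}+\cA_{t+1,T}$ would then certify $P_t+\sum_{u\ge t}\theta_u\Delta S_{u+1}-h_T$ as acceptable at time $t$ as soon as $P_t+\theta_t\Delta S_{t+1}-F(\theta)\in\cA_{t,t+1}$. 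The difficulty is that the hypothesis only involves the infimum $P^*_{t+1}$, whereas each admissible continuation contributes $F(\theta)\ge P^*_{t+1}$; since acceptance sets are upward but not downward closed, replacing $P^*_{t+1}$ by a merely approximating $F(\theta^n)\downarrow P^*_{t+1}$ does not preserve acceptability. Thus the main obstacle is the attainment of $P^*_{t+1}$: I would close the gap by invoking Theorem \ref{minimal hedging price} to select $\theta^*$ with $F(\theta^*)=P^*_{t+1}$, so that $P_t+\theta_t\Delta S_{t+1}-F(\theta^*)\in\cA_{t,t+1}$ and the gluing goes through, giving $P_t\in\bar\cP_t(h_T)$.

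I expect the delicate point to be exactly this realization of the infimum. Equivalently, one may phrase the whole corollary through the preceding theorem: both $\bar\cP_t(h_T)$ and $\cP_t(h_T)$ are upward-closed with common infimum $\bar P^*_t=P^*_t$, and by that theorem $\bar P^*_t$ is a direct price precisely when $P^*_t$ is a price; when this infimum is attained both sets reduce to $\{P_t\in L^0(\R,\cF_t):P_t\ge P^*_t\}$ and coincide. Making this attainment (closedness) explicit, and checking that the measurable selection of $\theta^*$ can be carried out $\cF_t$-measurably, is where I would spend the most care.
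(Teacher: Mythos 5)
Your first inclusion $\bar\cP_t(h_T)\subseteq\cP_t(h_T)$ is correct: cash invariance, the time-consistency identity $\rho_t(\cdot)=\rho_t(-\rho_{t+1}(\cdot))$, the equality $\bar P^*_{t+1}=P^*_{t+1}$ from the theorem immediately preceding Corollary \ref{CoroConsistency}, and monotonicity suffice, and indeed no attainment is needed there. The genuine gap sits exactly where you place your trust in the reverse inclusion: Theorem \ref{minimal hedging price} is an NA result --- its hypotheses are that NA holds at time $t$ and that $\rho_t(h_{t+1})<\infty$ a.s. --- whereas Corollary \ref{CoroConsistency} assumes time-consistency \emph{only}. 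The paper stresses at the end of Section 2 that the backward procedure runs without any no-arbitrage condition; under time-consistency alone the infimum $P^*_{t+1}$ need not be attained, and when AIP fails it even equals $-\infty$ on a non-null set (positive homogeneity and subadditivity drive $g$ to $-\infty$ along any direction $x$ with $\rho_{t+1}(x\Delta S_{t+2})<0$). In that situation there is no $\theta^*$ with $F(\theta^*)=P^*_{t+1}$ to select, your gluing through $\cA_{t,T}=\cA_{t,t+1}+\cA_{t+1,T}$ (Lemma \ref{TC-acc}) has nothing to glue, and what your argument actually establishes is the corollary under an additional standing NA hypothesis (plus finiteness conditions propagated backwards) --- a strictly weaker statement than the one asserted.

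For comparison: the paper offers no written proof, presenting the statement as an immediate consequence of the preceding theorem imported from \cite{ZL}; the clause of that theorem which plays the role you assign to Theorem \ref{minimal hedging price} is the second one, namely that the direct infimum price is a direct price \emph{if and only if} the backward infimum price is a price. Combined with your own closing observation --- both sets are upward closed, hence each equals $\{P_t\in L^0(\R,\cF_t):\,P_t\ge P^*_t\}$ as soon as the common infimum belongs to it --- this settles the attained case with no appeal whatsoever to NA, and is the route you should have taken instead of invoking Section 3. Note, however, that neither that route nor yours disposes of the case where the common infimum is attained in \emph{neither} set: two upward-closed subsets of $L^0(\R,\cF_t)$ with the same non-attained essential infimum need not coincide (compare $\{P:\,P>0\ \text{a.s.}\}$ with $\{P:\,P\ge\epsilon\ \text{for some real}\ \epsilon>0\}$, both upward closed with essential infimum $0$), so the corollary cannot be reduced to the equality $\bar P^*_t=P^*_t$ alone; in that case one needs the essinf-exchange machinery of \cite{ZL} behind the preceding theorem, not the NA theory, and your write-up leaves this open.
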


\subsection{Dual representation}

As mentioned by Cherny \cite[Theorem 2.2]{Ch2} and shown in \cite{Del2}, any time-consistent risk-measure $\rho_t$ at time $t$, restricted to the set of all bounded random variables, is characterized by a family $\mathcal{D}_t$ of absolutely continuous probability measures such that $\rho_t(X)=\esssup_{ Q\in \mathcal{D}_t}E_Q(-X|\cF_t)$.  In the following, we consider the risk-measure $\rho$ on $L^0$ as defined in this paper. The goal is to understand whether it is possible to get a dual characterization of $\rho$ on the whole set $L^0$, at least under some conditions.  For $X\in L^0$, we define $E_Q(-X|\cF_t)$ as $E_Q(-X|\cF_t)=E_Q(X^-|\cF_t)-E_Q(X^+|\cF_t)$ with the convention $\infty-\infty=\infty$. We say that a random variable $X$ is $\cF_t$-bounded from above if $X\le c_t$ a.s. for some $c_t\in L^0(\R_+,\cF_t)$. The proofs of the following new contributions are postponed in Appendix. They provide a dual representation of the risk-measure.

\begin{prop}\label{propRP0} Let $(\rho_t)_{t=0,\cdots,T}$ be the coherent risk-measure as defined in Section \ref{DynRM}. Then, there exists 
a family $\mathcal{D}_t$ of absolutely continuous probability measures such that, for every $\cF_t$-bounded from above  random variable $X$, we have:
\bea \rho_t(X)=\esssup_{ Q\in \mathcal{D}_t}E_Q(-X|\cF_t). \label{RP}\eea
\end{prop}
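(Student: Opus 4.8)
The plan is to bootstrap from the classical robust representation on bounded random variables to all $\cF_t$-bounded from above variables by a monotone truncation argument. First I would settle the base case. Restricted to $L^\infty(\R,\cF_T)$, the map $\rho_t$ is a conditional coherent risk-measure: normalization, monotonicity, $\cF_t$-cash invariance, subadditivity and positive homogeneity all hold by Proposition \ref{pro}, and the lower semi-continuity stated there (if $X_n\to X$ a.s. then $\rho_t(X)\le\liminf_n\rho_t(X_n)$) yields the Fatou property on $L^\infty$. Hence the classical robust representation for conditional coherent risk-measures (see \cite{Ch2}, \cite{Del2}) applies and furnishes a family $\mathcal{D}_t$ of absolutely continuous probability measures with
$$\rho_t(X)=\esssup_{Q\in\mathcal{D}_t}E_Q(-X|\cF_t),\qquad X\in L^\infty(\R,\cF_T).$$
This $\mathcal{D}_t$, fixed once and for all, is the candidate family in the statement.

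Next I would reduce the general case to the case $X\le 0$. If $X\le c_t$ with $c_t\in L^0(\R_+,\cF_t)$, set $Y:=X-c_t\le 0$. Cash invariance gives $\rho_t(X)=\rho_t(Y)-c_t$. On the dual side, the negative part of $-X$ is dominated by $c_t$ (indeed $\max(X,0)\le c_t$), so no $\infty-\infty$ indeterminacy occurs and $E_Q(-X|\cF_t)$ is well defined in $(-\infty,+\infty]$; the take-out property of conditional expectation then gives $E_Q(-X|\cF_t)=E_Q(-Y|\cF_t)-c_t$ for every $Q\in\mathcal{D}_t$. Taking the essential supremum over $Q$ and using that $c_t$ is $\cF_t$-measurable, the asserted identity for $X$ is equivalent to the same identity for $Y$. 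It therefore suffices to treat $X\le 0$.

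For $X\le 0$ I would truncate from below: set $X_n:=X\vee(-n)$, so $X_n\in L^\infty(\R,\cF_T)$ with $-n\le X_n\le 0$ and $X_n\downarrow X$ a.s.. On the primal side, monotonicity shows $\rho_t(X_n)$ is nondecreasing with $\rho_t(X_n)\le\rho_t(X)$, while lower semi-continuity applied to $X_n\to X$ gives $\rho_t(X)\le\lim_n\rho_t(X_n)$; hence $\rho_t(X_n)\uparrow\rho_t(X)$ in $[0,+\infty]$. On the dual side, $-X_n\uparrow-X\ge 0$, so conditional monotone convergence gives $E_Q(-X_n|\cF_t)\uparrow E_Q(-X|\cF_t)$ for each fixed $Q$. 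It remains to interchange this increasing limit with the essential supremum: writing $L:=\lim_n\esssup_{Q}E_Q(-X_n|\cF_t)$, the inequality $-X_n\le-X$ yields $L\le\esssup_{Q}E_Q(-X|\cF_t)$, while for each fixed $Q$ one has $E_Q(-X_n|\cF_t)\le L$, and letting $n\to\infty$ gives $E_Q(-X|\cF_t)\le L$, whence $\esssup_{Q}E_Q(-X|\cF_t)\le L$. Combining the base-case identity $\rho_t(X_n)=\esssup_{Q}E_Q(-X_n|\cF_t)$ with both limit passages yields $\rho_t(X)=\esssup_{Q}E_Q(-X|\cF_t)$.

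The main obstacle I anticipate is twofold. The delicate analytic point is the interchange of the essential supremum over the (uncountable) family $\mathcal{D}_t$ with the monotone limit in $n$; this is handled by the two one-sided estimates above together with conditional monotone convergence, and crucially relies on $\mathcal{D}_t$ being the \emph{same} for every $n$. The more structural point is securing the base case for the coherent, not necessarily time-consistent, risk-measure of Section \ref{DynRM}: this is legitimate because the single-date robust representation requires only the conditional coherence axioms plus the Fatou property, which is exactly the lower semi-continuity recorded in Proposition \ref{pro}.
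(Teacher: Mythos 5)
Your proof is correct, and it reaches the result by a genuinely different route than the paper's. The base case is the same (the paper simply invokes \cite{Acc}, \cite{Del2} for the representation on $L^{\infty}$, as you do), but from there the paths diverge. The paper does not reduce to $X\le 0$; it first extends the representation from $L^{\infty}$ to two-sidedly $\cF_t$-bounded variables ($|X|\le c_t$) by $\cF_t$-positive homogeneity, a step it cannot skip because its truncation of an $\cF_t$-bounded-from-above $X$ is $X^M=X1_{\{X\ge -M\}}$, which is only $\cF_t$-bounded rather than bounded by constants. It then proves the two inequalities by separate arguments: the inequality $\rho_t(X)\ge \esssup_{Q\in \mathcal{D}_t}E_Q(-X|\cF_t)$ comes from the acceptability of $\rho_t(X)+X$, after showing by a limit in $M$ that every acceptable, $\cF_t$-bounded-from-above position has nonnegative conditional expectation under each $Q\in\mathcal{D}_t$; the reverse inequality uses the auxiliary variable $\gamma^M=\esssup_{Q\in \mathcal{D}_t}E_Q(-X|\cF_t)+X^M$, the representation applied to $\gamma^M$, cash invariance, and lower semi-continuity. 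Your cash-invariance reduction to $Y=X-c_t\le 0$ followed by the genuinely bounded truncation $Y\vee(-n)$ makes the homogeneity extension unnecessary, and your two one-sided estimates for interchanging $\esssup_Q$ with the monotone limit deliver both inequalities in one stroke, so your argument is more economical and only ever uses the $L^{\infty}$ base case. What the paper's longer route buys is the intermediate fact that acceptable positions which are $\cF_t$-bounded from above have nonnegative $Q$-conditional expectations for all $Q\in\mathcal{D}_t$, which is reused almost verbatim in the proof of Proposition \ref{DR}. (Both arguments share the same benign imprecision of treating $E_Q(\cdot|\cF_t)$, for merely absolutely continuous $Q$, as if it were defined up to $P$-null sets.)
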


Unfortunately, it is unrealistic to expect that (\ref{RP}) may be extended in general from $L^{\infty}$ to $L^0$, as mentioned by Cherny, \cite{Ch2}. The main problem is about the non negatives random variables as we shall see in the proof of the next proposition. Before, let us see a trivial  example where we may meet some difficulties for non negative random variables. 

\begin{ex} We consider $\Omega=[0,1]$ equipped with the Borel $\sigma$-algebra and the Lebesgue measure $P$. The random variable $X(\omega)=\omega^{-1}1_{(0,1]}(\omega)$ is non negative hence acceptable. Let us define the  acceptable positions as the closure in $L^0$ of the  random variables  $Z$ such that $E_P(Z)=E_P(Z^+)-E_P(Z^-)\ge 0$. We then define $\rho$ on $L^0$ as in Section 2, see \cite{ZL}. As $E_P(X)=\infty$, we deduce that $Z_{\alpha}:=X-\alpha$ is acceptable for all $\alpha>0$. On the other hand, $P(Z_{\alpha}<0)=1-\alpha^{-1}$ tends to $1$ as $\alpha\to \infty$, which is unrealistic if $Z_{\alpha}$ is acceptable. 

Consider $Q\in  \mathcal{D}_0$ and $Y=dQ/dP$. Suppose that $P(Y>1)>0$. We then choose $\alpha<0$ and $\beta>0$ such that $\alpha P(Y>1)+\beta P(Y\le 1)=0$. Then, $X=\alpha 1_{\{Y>1\}}+\beta 1_{\{Y\le 1\}}$ is acceptable as $E_P(X)=0$. Therefore, by  (\ref{RP}), $E_Q(X)\ge 0$. Actually, 
\bean E_Q(X)=E_P(XY)=E_P(\alpha Y1_{\{Y>1\}}+\beta Y1_{\{Y\le 1\}})\le E_P(X)=0\eean and
$E_Q(X)=0$ if and only if $\alpha Y1_{\{Y>1\}}+\beta Y1_{\{Y\le 1\}}=X$. In that case $Y=1$ on $\{Y>1\}$ hence a contradiction. We deduce that $Y\le 1$ a.s.. At last, since $Y\le 1$ a.s., we deduce that $Y=1$ a.s.. We then deduce that $\mathcal{D}_0=\{P\}$.

\end{ex}

In the following, we denote by $\mathcal A_t^{\infty,+}$ the set of all acceptable positions at time $t$ which are $\cF_t$-bounded from above.

\begin{prop}\label{DR}
Suppose that $\mathcal A_t$ is the closure of $\mathcal A_t^{\infty,+}+L^0(\R^+,\cF_T)$ in $L^0$ and assume that, for some fixed $\varepsilon>0$, $\mathcal A_t^{\infty,+}$ contains all the random variables $Z$ which are $\cF_t$-bounded from  above and satisfy $P(Z<0)\le \varepsilon$.  Let $(\rho_t)_{t=0,\cdots,T}$ be the coherent risk-measure as defined in Section \ref{DynRM}. Then, there exists 
a family $\mathcal{D}_t$ of absolutely continuous probability measures such that we have.
\bea \rho_t(X)=\esssup_{ Q\in \mathcal{D}_t}E_Q(-X|\cF_t),\quad \forall X\in L^0. \label{RP1}\eea
\end{prop}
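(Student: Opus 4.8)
The plan is to upgrade the representation~(\ref{RP}) of Proposition~\ref{propRP0}, which is valid only for $\cF_t$-bounded-from-above random variables, to all of $L^0$ by a truncation argument in which the two standing hypotheses serve precisely to justify the passage to the limit. Throughout write $\hat\rho_t(X):=\esssup_{Q\in\mathcal D_t}E_Q(-X|\cF_t)$. For every $X\in L^0$ the truncation $X^n:=X\wedge n$ is bounded above by the constant $n$, so Proposition~\ref{propRP0} already gives $\rho_t(X^n)=\hat\rho_t(X^n)$ for all $n$. Since both functionals are cash-invariant and positively homogeneous (for $\rho_t$ by Proposition~\ref{pro}, for $\hat\rho_t$ directly from the definition), the identity $\rho_t=\hat\rho_t$ on $L^0$ is equivalent to the coincidence of the two acceptance sets $\cA_t=\{X:\rho_t(X)\le0\}$ and $\hat{\mathcal A}_t:=\{X\in L^0:E_Q(-X|\cF_t)\le0,\ \forall Q\in\mathcal D_t\}$, because each functional is recovered as $\essinf\{c_t\in L^0(\R,\cF_t):X+c_t\in\cdot\,\}$ via cash invariance. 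I would therefore prove the two inclusions.

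For $\cA_t\subseteq\hat{\mathcal A}_t$ (equivalently $\hat\rho_t\le\rho_t$), I would first treat $X\in\mathcal A_t^{\infty,+}+L^0(\R^+,\cF_T)$: writing $X=Y+R$ with $Y\in\mathcal A_t^{\infty,+}$ and $R\ge0$, the bounded-above part satisfies $E_Q(-Y|\cF_t)\le0$ by~(\ref{RP}), while $E_Q(-R|\cF_t)\le0$ since $R\ge0$, so $E_Q(-X|\cF_t)\le0$. For a general $X\in\cA_t$ I would use the closure hypothesis (that $\cA_t$ is the $L^0$-closure of $\mathcal A_t^{\infty,+}+L^0(\R^+,\cF_T)$) to pick $X_k\to X$ in $L^0$ from that set, pass to an a.s.\ convergent subsequence, and push the inequality $E_Q(-X_k|\cF_t)\le0$ to the limit by Fatou's lemma on the negative parts together with the uniform integrability furnished by the relevance hypothesis on $\mathcal A_t^{\infty,+}$.

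For $\hat{\mathcal A}_t\subseteq\cA_t$ (equivalently $\rho_t\le\hat\rho_t$), I would run the truncation $X^n\uparrow X$ and combine $\rho_t(X^n)=\hat\rho_t(X^n)$ with the lower semicontinuity of $\rho_t$ from Proposition~\ref{pro}: since $X^n\to X$, one gets $\rho_t(X)\le\liminf_n\rho_t(X^n)=\lim_n\hat\rho_t(X^n)$, so it suffices to establish the continuity from below $\lim_n\hat\rho_t(X^n)=\hat\rho_t(X)$. Decomposing $E_Q(-X^n|\cF_t)=E_Q(X^-|\cF_t)-E_Q(X^+\wedge n|\cF_t)$ and letting $n\to\infty$, monotone convergence yields $E_Q(-X^n|\cF_t)\downarrow E_Q(-X|\cF_t)$ for each fixed $Q$ (consistently with the convention $\infty-\infty=+\infty$); the decisive point is then to interchange this decreasing limit with $\esssup_{Q\in\mathcal D_t}$.

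The hard part is exactly this interchange, that is, the continuity from below of $\hat\rho_t$ along the truncations. This is where the non-negative, more precisely the unbounded-from-above, random variables create the difficulty announced before the statement, and where the convention $\infty-\infty=+\infty$ bites: an acceptable $X$ with $E_Q(X^+|\cF_t)=E_Q(X^-|\cF_t)=+\infty$ would spuriously fall outside $\hat{\mathcal A}_t$, so the relevance hypothesis must be invoked to rule out such pathological acceptable positions and, via~(\ref{RP}), to impose a uniform control (a uniform-integrability/weak-compactness property in the spirit of a determining set) on the densities $\{dQ/dP:Q\in\mathcal D_t\}$. It is this compactness that legitimates the minimax/Dini-type exchange of $\esssup_{Q}$ with the monotone limit in $n$ and thereby closes both inclusions; the closure hypothesis plays the complementary role of reducing every acceptable position to bounded-above ones plus a non-negative remainder, so that~(\ref{RP}) applies at all. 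I expect the verification of this uniform control, rather than the measurable-selection bookkeeping, to be the genuine obstacle.
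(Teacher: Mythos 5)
Your reduction to a comparison of acceptance sets and your treatment of the generating set $\mathcal A_t^{\infty,+}+L^0(\R^+,\cF_T)$ in the direction $\hat\rho_t\le\rho_t$ are correct and agree with the paper's first step. But both places where you defer to a limiting argument are genuine gaps, and in both the paper proceeds by a different, more elementary mechanism that your proposal is missing. For the passage from the generating set to its $L^0$-closure $\cA_t$: Fatou on the negative parts gives $E_Q(X^-|\cF_t)\le\liminf_k E_Q(X_k^-|\cF_t)\le\liminf_k E_Q(X_k^+|\cF_t)$, but to conclude $E_Q(-X|\cF_t)\le0$ you would need $\liminf_k E_Q(X_k^+|\cF_t)\le E_Q(X^+|\cF_t)$, i.e.\ a \emph{reverse} Fatou on the unbounded positive parts; no uniform integrability of $(X_k^+)$ under $Q$ is available, and the relevance hypothesis does not supply one. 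The paper avoids this entirely by a monotone rearrangement: replace the approximating sequence $X_k=Y_k+R_k$ (with $Y_k\in\cA_t^{\infty,+}$, $R_k\ge0$) by $W_k:=\sup_{n\ge k}X_n$; since $W_k\ge Y_k$ one can write $W_k=Y_k+(W_k-Y_k)$ with $W_k-Y_k\ge R_k\ge 0$, so $W_k$ stays in the generating set, and $W_k\downarrow X$ a.s., so the inequality $E_Q(-W_k|\cF_t)\le\rho_t(W_k)\le\rho_t(X)$ passes to the limit by monotone convergence, with no uniform control needed.

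The second gap is the decisive one. For $\rho_t\le\hat\rho_t$ you truncate at $X\wedge n$ and must then interchange $\esssup_{Q\in\cD_t}$ with the monotone limit in $n$; you acknowledge you cannot do this and hope to extract a weak-compactness (determining-set) property of $\cD_t$ from the relevance hypothesis. That is exactly the Cherny-type assumption the paper is written to avoid, nothing in the hypotheses yields it, and the truncation error $(X-n)^+$ of your scheme is a nonnegative tail whose conditional $Q$-expectations need not be small (or even finite) uniformly in $Q$. The paper's use of the $\varepsilon$-hypothesis is entirely different and local: for $Z=X+\epsilon^+$ in the generating set, one does not cut the tail of $\epsilon^+$ but \emph{replaces} it by a large constant, setting $Z^n:=X+\epsilon^+1_{\{\epsilon^+\le n\}}+\alpha_n1_{\{\epsilon^+>n\}}$ with $\alpha_n$ so large that $P(\alpha_n<\epsilon^+)<\varepsilon$. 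Then the error $Z^n-Z=(\alpha_n-\epsilon^+)1_{\{\epsilon^+>n\}}$ is $\cF_t$-bounded from above and negative only with probability $<\varepsilon$, hence acceptable by the relevance hypothesis, and Proposition \ref{propRP0} applied to this bounded-above error gives $\esssup_{Q\in\cD_t}E_Q(Z-Z^n|\cF_t)=\rho_t(Z^n-Z)\le0$ \emph{uniformly in} $Q$; subadditivity of the essential supremum then yields $\rho_t(Z^n)\le\esssup_{Q\in\cD_t}E_Q(-Z|\cF_t)$ for every $n$, and lower semicontinuity of $\rho_t$ along $Z^n\to Z$ finishes this direction on the generating set. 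The same $\sup_{n\ge k}$ rearrangement (plus monotonicity and lower semicontinuity) carries both inequalities from the generating set to all of $\cA_t$, and cash invariance ($\rho_t(X)+X\in\cA_t$) extends the representation to all of $L^0$. So no sup/limit interchange and no compactness of $\cD_t$ ever enter; the two devices that actually close the proof — the monotone rearrangement and the $\alpha_n$-replacement — are absent from your proposal, which is therefore incomplete precisely at its hard steps.
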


The proof of the proposition above (see Appendix) shows that (\ref{RP1}) holds as soon as it holds for any acceptable position which is the sum of an $\cF_t$-bounded position plus a non negative one.  By Proposition \ref{propRP0}, (\ref{RP1}) holds  for any $\cF_t$-bounded position. Therefore, the difficulty in proving (\ref{RP1})  stems from the non negative random variables. 

\subsection{FTAP and dual description of the risk-hedging prices}

We consider the set of all attainable claims $\cR_t^T$ between $t$ and $T$, when starting from the zero initial endowment, i.e. 
$$\cR_{t,T}:=\left\{ \sum_{u=t+1}^T\theta_{u-1}\Delta S_u:~\theta_u\in L^0(\R^d,\cF_u),\,u\ge t  \right\}.$$

We observe that $\bar\cP_{t}(0)=\left(\cA_{t,T}-\cR_{t,T}\right)\cap L^0(\R,\cF_t)$. In the following, we consider the sets $\cZ_{t,T}:=\cR_{t,T}-\cA_{t,T}$ and the sets  
$$\cA_{t,T}^0=\{ X\in L^0(\R,\cF_T):~\rho_t(X)=\rho_t(-X)=0\}.$$
\begin{rem}\label{remark acceptance set}
Note that $\cA_{t,T}^0 = \cA_{t,T}\cap(-\cA_{t,T})$. Indeed, first observe that $\cA_{t,T}^0 \subseteq \cA_{t,T}\cap(-\cA_{t,T})$. Reciprocally, if $x_{t,T}\in\cA_{t,T}\cap(-\cA_{t,T})$, we have:
\begin{align*}
    0 = \rho_t(x_{t,T} - x_{t,T}) \leq \rho_t(x_{t,T}) + \rho_t(-x_{t,T}) \leq 0.
\end{align*}
This implies $\rho_t(x_{t,T}) = \rho_t(-x_{t,T}) = 0$ hence $x_{t,T}\in\cA_{t,T}^0$.\smallskip

The set $\cZ_{t,T}$ is the family of all  claims that are attainable up to an acceptable position at time $t$ since every attainable claim $r_{t,T}\in  \cR_t^T$ may be written as $r_{t,T}=(r_{t,T}-a_{t,T})+a_{t,T}$ where $a_{t,T}\in \cA_{t,T}$  and $r_{t,T}-a_{t,T}\in \cZ_{t,T}$.
\end{rem}

We now formulate intermediate new results that we need to prove the FTAP theorem, which is the first contribution of this section.

\begin{theo}\label{ClosureNA}

Assume that the risk measure is time-consistent. Suppose that $\cR_{t,T}\cap \cA_{t,T}=\cA_{t,T}^0$ . Then, AIP holds and $\cZ_{t,T}$ is closed in $L^0$ for every $t\le T-1$. 
\end{theo}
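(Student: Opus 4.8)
The plan is to prove the two assertions separately: the no-arbitrage statement by a short direct computation, and the closedness by a backward induction of Kabanov--Stricker type in which the hypothesis supplies the missing no-arbitrage control.

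For AIP I would argue globally rather than increment by increment. Since $\rho$ is time-consistent, Corollary \ref{CoroConsistency} lets me identify $\cP_t(0)$ with $\bar\cP_t(0)=(\cA_{t,T}-\cR_{t,T})\cap L^0(\R,\cF_t)$. Take $p\in\cP_t(0)\cap L^0(\R_-,\cF_t)$ and write $p=a-r$ with $a\in\cA_{t,T}$ and $r\in\cR_{t,T}$. As $p\le 0$, the $\cF_t$-measurable cash $-p$ is non-negative, hence acceptable, so $-p\in\cA_{t,T}$; since $\cA_{t,T}$ is a convex cone, $r=a+(-p)\in\cA_{t,T}$. Therefore $r\in\cR_{t,T}\cap\cA_{t,T}=\cA_{t,T}^0$, and Remark \ref{remark acceptance set} gives $\rho_t(r)=0$. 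Cash invariance then yields
\begin{equation*}
0=\rho_t(r)=\rho_t\big(a+(-p)\big)=\rho_t(a)+p,
\end{equation*}
so $p=-\rho_t(a)\ge 0$ because $a$ is acceptable; together with $p\le 0$ this forces $p=0$, which is AIP.

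For the closedness of $\cZ_{t,T}=\cR_{t,T}-\cA_{t,T}$ I would proceed by backward induction on $t$, starting from $\cZ_{T,T}=-\cA_{T,T}$, which is closed because $\cA_T$ is. The one-step splitting $\cR_{t,T}=\{\theta_t\Delta S_{t+1}:\theta_t\in L^0(\R^d,\cF_t)\}+\cR_{t+1,T}$ together with time-consistency (Lemma \ref{TC-acc}) gives
\begin{equation*}
\cZ_{t,T}=\big\{\theta_t\Delta S_{t+1}-a_t:\theta_t\in L^0(\R^d,\cF_t),\,a_t\in\cA_{t,t+1}\big\}+\cZ_{t+1,T},
\end{equation*}
so, assuming $\cZ_{t+1,T}$ closed, it suffices to show that introducing the single trading date $t$ preserves closedness, which I would do by the usual normalization scheme with an inner induction on the number $d$ of risky assets. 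Concretely, given $\xi^n=\theta_t^n\Delta S_{t+1}-a_t^n+\zeta^n\to\xi$ in $L^0$ with $a_t^n\in\cA_{t,t+1}$ and $\zeta^n\in\cZ_{t+1,T}$, I partition $\Omega$ along the $\cF_t$-measurable sets $\{\liminf_n|\theta_t^n|<\infty\}$ and $\{|\theta_t^n|\to\infty\}$. On the first, a measurable (Kabanov-type) subsequence extraction produces a finite limit $\theta_t$, after which closedness of $\cA_{t,t+1}$ and the inductive closedness of $\cZ_{t+1,T}$ place $\xi$ in $\cZ_{t,T}$. On the second, I normalize $\hat\theta^n=\theta_t^n/|\theta_t^n|$, extract a random limit $\hat\theta$ with $|\hat\theta|=1$, divide the defining relation by $|\theta_t^n|$ and let $n\to\infty$: the right-hand side vanishes, so $\hat\theta\Delta S_{t+1}$ is an attainable claim that is a limit of acceptable positions, hence acceptable. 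The hypothesis $\cR_{t,T}\cap\cA_{t,T}=\cA_{t,T}^0$ then forces $\hat\theta\Delta S_{t+1}\in\cA_{t,T}^0$, so \emph{both} $\hat\theta\Delta S_{t+1}$ and $-\hat\theta\Delta S_{t+1}$ are acceptable; since $\cA$ is a cone I may add arbitrary real multiples of $\hat\theta\Delta S_{t+1}$ to the acceptance term, cancel the $\hat\theta$-component of the strategy, and restrict it to a $(d-1)$-dimensional subspace on the blow-up set, closing the inner induction (the case $d=0$ giving $\cZ_{t,T}=-\cA_{t,T}$, closed).

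I expect the closedness, and within it this last reduction, to be the main obstacle. The delicate points are that all subsequence extractions must be carried out measurably in $L^0$, where only convergence in probability is available and the finite-norm and blow-up regions must be patched along $\cF_t$-measurable partitions; and that one must check that cancelling the null direction $\hat\theta\Delta S_{t+1}$ strictly lowers $|\theta_t^n|$ without destroying the representation, so that the induction on $d$ terminates. It is exactly here that the no-arbitrage hypothesis is indispensable: by identifying the recession (cancellation) directions of $\cR_{t,T}-\cA_{t,T}$ with the linear space $\cA_{t,T}^0$, it prevents the current increment $\theta_t^n\Delta S_{t+1}$ and the moving tail $\zeta^n\in\cZ_{t+1,T}$ from cancelling each other at infinity, which is what guarantees that the sum of the two closed cones remains closed.
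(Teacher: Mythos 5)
Your AIP argument is correct, and it is in fact a slightly different (and cleaner) route than the paper's: the paper instead invokes Theorem \ref{AIP-theo} and applies the hypothesis to $\theta_t\Delta S_{t+1}1_{\{\rho_t(\theta_t\Delta S_{t+1})<0\}}$, while you argue at the level of prices via cash invariance. One repair: Corollary \ref{CoroConsistency} identifies the \emph{backward-procedure} prices with the direct ones, not the one-step set $\cP_t(0)$ appearing in the definition of AIP; but you only need the trivial inclusion $\cP_t(0)\subseteq\bar\cP_t(0)$ (extend a one-step hedge by the zero strategy, using $\cA_{t,t+1}\subseteq\cA_{t,T}$), so this is harmless. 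Your one-step closedness ($t=T-1$) is essentially the paper's proof: normalization, the hypothesis forcing the blow-up direction $\hat\theta\Delta S_T$ into $\cA_{T-1,T}^0$, then Kabanov--Stricker elimination of components.

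The genuine gap is in the multi-period induction step. On $\{\liminf_n|\theta_t^n|<\infty\}$ what converges is the \emph{sum} $-a_t^n+\zeta^n$, and your assertion that ``closedness of $\cA_{t,t+1}$ and the inductive closedness of $\cZ_{t+1,T}$ place $\xi$ in $\cZ_{t,T}$'' is exactly the unproven point: a sum of two closed cones need not be closed, since $a_t^n\in\cA_{t,t+1}$ and $\zeta^n\in\cZ_{t+1,T}$ may both blow up while cancelling each other. The same defect reappears on $\{|\theta_t^n|\to\infty\}$: after dividing by $|\theta_t^n|$, the limit of $(a_t^n-\zeta^n)/|\theta_t^n|$ is a limit of elements of $\cA_{t,t+1}-\cZ_{t+1,T}$, \emph{not} of acceptable positions, so you cannot conclude that $\hat\theta\Delta S_{t+1}$ is acceptable. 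Your closing paragraph names this cancellation phenomenon, but naming it is not a mechanism for excluding it, and the hypothesis $\cR_{t,T}\cap\cA_{t,T}=\cA_{t,T}^0$ does not exclude it by itself: if you normalize by $|a_t^n|$ and pass to the limit, you only learn that the blow-up direction $\tilde a$ lies in $\cZ_{t+1,T}\cap\cA_{t,t+1}$, hence (by the hypothesis and Lemma \ref{TC-acc}) in $\cA_{t,T}^0$; a modulus-one element of $\cA_{t,T}^0$ is not a contradiction, so a further elimination would still be needed. The paper fills precisely this hole with a dual tool: from the induction hypothesis ($\cZ_{t+1,T}$ closed) and AIP it gets $\overline{\cZ_{t+1,T}}\cap L^1(\R_+,\cF_{t+1})=\{0\}$, applies Hahn--Banach to produce $Q^{(t+1)}\ll P$ with bounded density under which $(S_u)_{u\ge t+1}$ is a martingale and $E_{Q^{(t+1)}}(a\,|\,\cF_{t+1})\ge 0$ for integrable $a\in\cA_{t+1,T}$; conditioning the limiting identities on $\cF_{t+1}$ under $Q^{(t+1)}$ then shows that a blown-up normalized correction would have to equal $-1$ (contradicting $\rho_t(-1)=1>0$), so $a_t^n$ cannot blow up, and, on the strategy blow-up set, that $\tilde\theta_t^\infty\Delta S_{t+1}$ dominates an acceptable position, whence it lies in $\cA_{t,T}^0$ and the component elimination can proceed. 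Without this separation argument (or a substitute for it), your induction step does not go through.
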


\begin{theo} \label{NA-Formulation} Suppose that the risk-measure is time-consistent. Suppose that NA holds and $\cA_{t,T}\cap L^0(\R_-,\cF_T)=\{0\}$, for every $t\le T$ . Then, we have $\cZ_{t,T}\cap L^0(\R_+,\cF_T)=\{0\}$ and $\cR_{t,T}\cap \cA_{t,T}=\cA_{t,T}^0$ for every $t$.
\end{theo}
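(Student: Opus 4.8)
The plan is to prove the two set identities separately, treating $\cR_{t,T}\cap\cA_{t,T}=\cA_{t,T}^0$ as the substantive statement and deducing $\cZ_{t,T}\cap L^0(\R_+,\cF_T)=\{0\}$ from it. Within the identity, the inclusion $\cR_{t,T}\cap\cA_{t,T}\subseteq\cA_{t,T}^0$ carries the no-arbitrage content, so I would establish it first by a backward induction on the trading dates, using time-consistency to peel off one period at a time. The induction hypothesis at date $s$ is twofold: (i) $\rho_s(\tilde r)\ge 0$ for every $\tilde r\in\cR_{s,T}$; and (ii) $\cR_{s,T}\cap\cA_{s,T}\subseteq\cA_{s,T}^0$. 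The case $s=T$ is trivial since $\cR_{T,T}=\{0\}$.

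For the inductive step I would write $r=\theta_t\Delta S_{t+1}+r'$ with $r'\in\cR_{t+1,T}$ and use the time-consistency relation $\rho_t(\cdot)=\rho_t(-\rho_{t+1}(\cdot))$ together with cash invariance of $\rho_{t+1}$ (legitimate because $\theta_t\Delta S_{t+1}$ is $\cF_{t+1}$-measurable) to get $\rho_t(r)=\rho_t(\theta_t\Delta S_{t+1}-\rho_{t+1}(r'))$. Since $\rho_{t+1}(r')\ge 0$ by (i) at $t+1$, monotonicity followed by AIP (Theorem~\ref{AIP-theo}(2), applied to a random strategy by localizing on $\{\rho_t(\theta_t\Delta S_{t+1})<0\}$ and invoking statement (4)) gives $\rho_t(r)\ge\rho_t(\theta_t\Delta S_{t+1})\ge 0$, which is (i) at $t$. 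If moreover $r$ is acceptable, then $\rho_t(r)=0$ and $\rho_t(\theta_t\Delta S_{t+1})=0$, whence positive homogeneity and SRN (normalizing $\theta_t=|\theta_t|z_t$) give $\rho_t(-\theta_t\Delta S_{t+1})=0$. The delicate sub-step is to show that the residual future risk $m:=\rho_{t+1}(r')\ge 0$ vanishes: by subadditivity, $\rho_t(-m)\le\rho_t(\theta_t\Delta S_{t+1}-m)+\rho_t(-\theta_t\Delta S_{t+1})=\rho_t(r)+0=0$, so $-m$ is acceptable; as $m\ge 0$, the hypothesis $\cA_{t,T}\cap L^0(\R_-,\cF_T)=\{0\}$ forces $m=0$. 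Then $r'\in\cR_{t+1,T}\cap\cA_{t+1,T}$, so (ii) at $t+1$ gives $\rho_{t+1}(-r')=0$, and a last use of time-consistency and cash invariance yields $\rho_t(-r)=\rho_t(-\theta_t\Delta S_{t+1})=0$; with $\rho_t(r)=0$ this gives $r\in\cA_{t,T}^0$, which is (ii) at $t$.

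Granting the core inclusion, the statement for $\cZ_{t,T}$ is immediate: if $Y=r-a\ge 0$ with $r\in\cR_{t,T}$ and $a\in\cA_{t,T}$, then $r=a+Y$ is acceptable (an acceptable position plus a non-negative one, by subadditivity), so $r\in\cR_{t,T}\cap\cA_{t,T}\subseteq\cA_{t,T}^0$ and hence $-r\in\cA_{t,T}$; consequently $-Y=a+(-r)\in\cA_{t,T}\cap L^0(\R_-,\cF_T)=\{0\}$, i.e. $Y=0$. Note this uses only the inclusion $\subseteq$ and the negativity hypothesis, not the full equality.

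It remains to prove the reverse inclusion $\cA_{t,T}^0\subseteq\cR_{t,T}$ (the inclusion into $\cA_{t,T}$ being trivial), and this is the step I expect to be the main obstacle. The natural route is again a single-period reduction through the time-consistency decomposition $\cA_{t,T}=\cA_{t,t+1}+\cA_{t+1,T}$ of Lemma~\ref{TC-acc}, aiming to reduce the null cone $\cA_{t,T}^0$ to single-period null cones and to attain each piece. The difficulty is that a position $X$ with $\rho_t(X)=\rho_t(-X)=0$ is not, a priori, generated by $\Delta S_{t+1}$, so one must show that the single-period null cone is carried by the risk-neutral directions exhibited by SRN, namely the $z_t$ with $\rho_t(\pm z_t\Delta S_{t+1})=0$ along which the pricing function $g_t$ is flat by Theorem~\ref{constant_g}. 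Making this precise is exactly where the finer structure of the $L^0$-risk measure, the closedness of $\cZ_{t,T}$, and the full strength of NA must be brought to bear.
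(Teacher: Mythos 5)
Your proof of the inclusion $\cR_{t,T}\cap\cA_{t,T}\subseteq\cA_{t,T}^0$ and of $\cZ_{t,T}\cap L^0(\R_+,\cF_T)=\{0\}$ is correct and rests on exactly the same ingredients as the paper's proof: a one-period reduction via time-consistency, AIP to force $\rho_t(\theta_t\Delta S_{t+1})=0$ (extended to random strategies through statement (4) of Theorem \ref{AIP-theo}), SRN after normalizing $\theta_t=|\theta_t|z_t$ to get $\rho_t(-\theta_t\Delta S_{t+1})=0$, and the hypothesis $\cA_{t,T}\cap L^0(\R_-,\cF_T)=\{0\}$ to annihilate the residual. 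The differences are organizational rather than conceptual: you run a backward induction using the functional identity $\rho_s(\cdot)=\rho_s(-\rho_{s+1}(\cdot))$ together with cash invariance, carrying along the useful intermediate fact that $\rho_s\ge 0$ on $\cR_{s,T}$, whereas the paper peels periods off forward, decomposing the acceptable position as $W_{t,T}=\sum_s a_{s-1,s}$ via Lemma \ref{TC-acc} and interpreting $p_t=\theta_t\Delta S_{t+1}-a_{t,t+1}$ as a risk-hedging price of the zero claim at time $t+1$, so that AIP yields $p_t\ge 0$; it then concludes $p_t=0$ exactly as you conclude $m=\rho_{t+1}(r')=0$. Your treatment of that sub-step is, if anything, slightly cleaner, since the paper's assertion that AIP forces $p_t\ge 0$ itself needs a localization argument that is left implicit. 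Your derivation of the $\cZ_{t,T}$ statement from the inclusion is verbatim the paper's.

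The step you single out as the main obstacle, $\cA_{t,T}^0\subseteq\cR_{t,T}$, is not a gap in your argument relative to the paper: the paper never proves it. Its proof establishes precisely what yours does and then asserts the equality without further comment. In fact the reverse inclusion is false under the stated hypotheses. Take $\Omega=\{\omega_1,\omega_2,\omega_3\}$, one period, $d=1$, $S_0=1$, $S_1=(2,0,1)$, $\cA_1=L^0(\R_+,\cF_1)$ and $\cA_0=\{X:\ E_Q(X)\ge 0\}$, where $Q=(\tfrac12-\tfrac{\e}{2},\tfrac12-\tfrac{\e}{2},\e)$ has full support: then $\cA_0$ is an $L^0$-closed convex cone, $\rho_0(X)=E_Q(-X)$, $\rho_1(X)=-X$, the pair is time-consistent, $E_Q(\Delta S_1)=0$ so AIP and SRN hold, and $\cA_{t,T}\cap L^0(\R_-,\cF_T)=\{0\}$; yet $\cR_{0,1}\cap\cA_{0,1}=\R\,\Delta S_1$ is a line while $\cA_{0,1}^0=\{X:\ E_Q(X)=0\}$ is a plane. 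So the conclusion can only hold as the inclusion $\cR_{t,T}\cap\cA_{t,T}\subseteq\cA_{t,T}^0$; fortunately this inclusion, together with the trivial $\cA_{t,T}^0\subseteq\cZ_{t,T}\cap\cA_{t,T}$ (which holds because $-\cA_{t,T}\subseteq\cZ_{t,T}$), is all that the proofs of Theorems \ref{ClosureNA}, \ref{FTAP} and \ref{SHP} actually use. Your closing idea of generating $\cA_{t,T}^0$ from the flat directions of $g_t$ was rightly suspected to fail: those directions only produce the part of $\cA_{t,T}^0$ carried by the increments $\Delta S$, and in incomplete markets $\cA_{t,T}^0$ is strictly larger.
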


\begin{theo}[FTAP] \label{FTAP}
Suppose that the risk-measure is time-consistent and  $\cA_{t,T}\cap L^0(\R_-,\cF_T)=\{0\}$ for every $t\le T$. Then, the following statements are equivalent:
\begin{itemize}
\item [1)] NA\smallskip

\item[2)] $\cR_{t,T}\cap\cA_{t,T} = \cA_{t,T}^0$, for every $t\le T$.\smallskip

\item[3)] $\cR_{0,T}\cap\cA_{0,T} = \cA_{0,T}^0$.\smallskip

\item [4)] $\overline{\cZ_{t,T}}\cap \cA_{t,T} = \cA_{t,T}^0$, for every $t\le T$. \smallskip

\item [5)] $\overline{\cZ_{0,T}}\cap \cA_{0,T} = \cA_{0,T}^0$.\smallskip

\item [6)] $\cZ_{0,T}\cap \cA_{0,T} = \cA_{0,T}^0$ and $\cZ_{0,T}$ is closed in $L^0$.\smallskip

\item [7)]For all $t\le T-1$, there exists $Q=Q^t\sim P$ with $dQ/dP\in L^{\infty}((0,\infty),\cF_T)$ such that $(S_u)_{u=t}^T$ is a $Q$-martingale and, for all $t\le T-1$, for all $X$ such that $E_Q(X^-|\cF_t)<\infty$ a.s.,   $\rho_t(X)\ge -E_Q(X|\cF_t)$. \smallskip

Moreover, for all $x\in \cA_{t,T}\setminus\cA_{t,T}^0$, there exists such a $Q=Q_{x}^t$ such that $\bP(E_Q(x|\cF_t)\ne 0)>0$.
\end{itemize}
\end{theo}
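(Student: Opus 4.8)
The plan is to arrange the seven statements into one chain of implications, isolating the single genuinely analytic step (construction of the pricing measure) and disposing of everything else by set algebra together with the two structural results already at hand, Theorem \ref{ClosureNA} and Theorem \ref{NA-Formulation}. The elementary facts I will use repeatedly are: $L^0(\R_+,\cF_T)\subseteq\cA_{t,T}$ (by monotonicity and $\rho_t(0)=0$, Proposition \ref{pro}); $0\in\cR_{t,T}$, so $\cR_{t,T}\subseteq\cZ_{t,T}$ and $-L^0(\R_+,\cF_T)\subseteq\cZ_{t,T}$; and, via time-consistency and Lemma \ref{TC-acc} iterated, $\cA_{0,T}=\cA_{0,t}+\cA_{t,T}$, whence $\cA_{t,T}\subseteq\cA_{0,T}$. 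The pivotal set identity is that $\cR_{t,T}\cap\cA_{t,T}=\cA_{t,T}^0$ already forces $\cZ_{t,T}\cap\cA_{t,T}=\cA_{t,T}^0$: the inclusion $\supseteq$ is clear, and for $z=r-a\in\cZ_{t,T}\cap\cA_{t,T}$ the cone property of $\cA_{t,T}$ gives $r=z+a\in\cR_{t,T}\cap\cA_{t,T}=\cA_{t,T}^0$, so $-r\in\cA_{t,T}$, then $-z=a-r\in\cA_{t,T}$, so $z\in\cA_{t,T}\cap(-\cA_{t,T})=\cA_{t,T}^0$ by Remark \ref{remark acceptance set}. Combining this identity with the closedness of $\cZ_{t,T}$ from Theorem \ref{ClosureNA} yields $2)\Rightarrow 4)$ and $3)\Rightarrow 6)$ at once, while $2)\Rightarrow 3)$, $4)\Rightarrow 5)$ and $6)\Rightarrow 5)$ are trivial restrictions, and $1)\Rightarrow 2)$ is exactly Theorem \ref{NA-Formulation}.

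Next I prove the return implication $7)\Rightarrow 1)$ directly, with no separation. Given $Q=Q^t$, the martingale property gives $E_Q(x\Delta S_{t+1}|\cF_t)=x\,E_Q(\Delta S_{t+1}|\cF_t)=0$ for every $x\in\R^d$ (and $x\Delta S_{t+1}$ is $Q$-integrable), so the domination yields $\rho_t(x\Delta S_{t+1})\ge -E_Q(x\Delta S_{t+1}|\cF_t)=0$, which is AIP by Theorem \ref{AIP-theo}. For SRN, suppose $\rho_t(z_t\Delta S_{t+1})=0$ while $\rho_t(-z_t\Delta S_{t+1})>0$ on a non-null set; then $x:=z_t\Delta S_{t+1}\in\cA_{t,T}\setminus\cA_{t,T}^0$, so the ``moreover'' clause furnishes a martingale measure with $\bP(E_Q(x|\cF_t)\ne 0)>0$, contradicting $E_Q(z_t\Delta S_{t+1}|\cF_t)=0$. (The same argument with the set identity in place of the measure gives $4)\Rightarrow 1)$, worth noting but not needed.) After $5)\Rightarrow 7)$ the cycle $1\Rightarrow 2\Rightarrow 3\Rightarrow 6\Rightarrow 5\Rightarrow 7\Rightarrow 1$ closes, with $4)$ folded in through $2)\Rightarrow 4)\Rightarrow 5)$, so all statements become equivalent.

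The heart of the matter is $5)\Rightarrow 7)$. Condition 5 reads $\overline{\cZ_{0,T}}\cap\cA_{0,T}=\cA_{0,T}^0$; since $L^0(\R_+,\cF_T)\subseteq\cA_{0,T}$ and $\cA_{0,T}\cap L^0(\R_-,\cF_T)=\{0\}$ by hypothesis, this forces $\overline{\cZ_{0,T}}\cap L^0(\R_+,\cF_T)=\{0\}$. Thus $\overline{\cZ_{0,T}}$ is a closed convex cone containing $-L^0(\R_+,\cF_T)$ and meeting $L^0(\R_+,\cF_T)$ only at $0$, and a Kreps--Yan type separation produces $Q\sim\bP$ with $dQ/d\bP\in L^\infty((0,\infty),\cF_T)$ and $E_Q(z)\le 0$ for all (integrable) $z\in\cZ_{0,T}$. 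One such $Q$ serves every $t$: because $\cR_{t,T}$ is a linear space, $E_Q\le 0$ on it forces $E_Q(\Delta S_{u+1}|\cF_u)=0$ for all $u$, so $S$ is a $Q$-martingale on every $[t,T]$; and for $a\in\cA_{t,T}$, $B\in\cF_t$, the $\cF_t$-homogeneity of Proposition \ref{pro} gives $a1_B\in\cA_{t,T}\subseteq\cA_{0,T}\subseteq-\cZ_{0,T}$, whence $E_Q(a1_B)\ge 0$ and $E_Q(a|\cF_t)\ge 0$, so that for $X$ with $E_Q(X^-|\cF_t)<\infty$ and $C_t\in\cA_t^X$ one has $C_t\ge -E_Q(X|\cF_t)$, giving $\rho_t(X)=\essinf_{\cF_t}\cA_t^X\ge -E_Q(X|\cF_t)$. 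For the ``moreover'' clause, once $Q$ is in hand NA holds, hence condition 4, so any $x\in\cA_{t,T}\setminus\cA_{t,T}^0$ satisfies $x\notin\overline{\cZ_{t,T}}$; separating the point $x$ from the closed cone $\overline{\cZ_{t,T}}$ gives a further martingale measure $Q_x^t$ with $E_{Q_x^t}(x)>0$, so $\bP(E_{Q_x^t}(x|\cF_t)\ne 0)>0$.

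The \emph{main obstacle} is precisely this separation. Since $L^0$ is not locally convex its topological dual is trivial, so one cannot separate directly in $L^0$; the argument must be carried out in a locally convex pairing (e.g. $L^1$--$L^\infty$ after truncation or a preliminary change to an equivalent measure under which the relevant variables are integrable), and it is the closedness of $\cZ_{0,T}$ supplied by Theorem \ref{ClosureNA} that makes separation possible. Two further technical points need care: upgrading the separating functional to a strictly positive, $L^\infty$-bounded density by a standard exhaustion (Halmos--Savage) argument, and justifying the conditional inequality $E_Q(a|\cF_t)\ge 0$ and the domination $\rho_t(X)\ge -E_Q(X|\cF_t)$ for possibly non-integrable acceptable positions, through truncation and the stated conventions on $E_Q(\cdot|\cF_t)$.
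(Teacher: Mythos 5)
Your toolkit is the same as the paper's: Theorem \ref{NA-Formulation} for $1)\Rightarrow 2)$, the cone identity turning $\cR_{t,T}\cap\cA_{t,T}=\cA_{t,T}^0$ into $\cZ_{t,T}\cap\cA_{t,T}=\cA_{t,T}^0$, Theorem \ref{ClosureNA} for closedness, a Kreps--Yan/Halmos--Savage separation to build the pricing measure (with the same treatment of conditional expectations of acceptable positions and the same truncation issues), and the identical argument for $7)\Rightarrow 1)$ (AIP from the domination, SRN from the ``moreover'' clause). But your rearrangement of the cycle creates a genuine gap at the ``moreover'' clause of 7). Inside $5)\Rightarrow 7)$ you write that ``once $Q$ is in hand NA holds, hence condition 4''. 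This is circular: by your own (and the paper's) argument, the main part of 7) yields only AIP; the SRN half of NA is deduced precisely from the ``moreover'' clause, which is what you are trying to prove at that moment. So at that point you have neither NA nor statement 4), and you cannot conclude that $x\in\cA_{t,T}\setminus\cA_{t,T}^0$ lies outside $\overline{\cZ_{t,T}}$ for $t>0$. What you would need is $5)\Rightarrow 4)$, the upgrade from the $t=0$ statement to the all-$t$ statement, which you never prove (the paper asserts $4)\Leftrightarrow 5)$ via time-consistency; only the restriction direction is trivial). The paper sidesteps this entirely by proving $4)\Rightarrow 7)$: with 4) as hypothesis, every $x\in\cA_{t,T}\setminus\cA_{t,T}^0$ is automatically outside the $L^1$-closed set $\overline{\cZ_{t,T}}\cap L^1$, Hahn--Banach gives $p_x$, and $Q_x=\frac12(Q+\tilde Q)$ with $d\tilde Q/d\bP=p_x$ settles the clause.

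A second, related gap is $3)\Rightarrow 6)$. The closedness of $\cZ_{0,T}$ does not follow from statement 3) alone: the hypothesis of Theorem \ref{ClosureNA}, as its proof makes clear (a backward induction which at each step $s$ uses the identity $\cR_{s,T}\cap\cA_{s,T}=\cA_{s,T}^0$ and a separating measure built from AIP at $s+1$), is the identity at \emph{every} time, i.e. statement 2), not its $t=0$ restriction. So your ``$3)\Rightarrow 6)$ at once'' silently requires $3)\Rightarrow 2)$, again a time-consistency upgrade you do not address; without it, 3) (and likewise 5)) is not shown to imply the remaining statements, and the equivalence does not close. Both gaps have the same root: your cycle routes the hard implications through the $t=0$ statements 3), 5), 6), whereas Theorem \ref{ClosureNA} and the ``moreover'' clause both need the all-$t$ statements 2), 4). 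Either prove the upgrades $3)\Rightarrow 2)$ and $5)\Rightarrow 4)$ explicitly, or follow the paper and run the separation from 4). The parts of your argument that genuinely go through---the set-algebra identity, $1)\Rightarrow 2)$, $2)\Rightarrow 4)$, the construction of a single $Q$ serving all $t$ via $\cA_{t,T}\subseteq\cA_{0,T}$ (Lemma \ref{TC-acc} iterated), and $7)\Rightarrow 1)$---match the paper's proof step for step.
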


The following result is the second main contribution of this section. It provides a dual description of the payoffs that can be super-hedged under NA. To do so, we denote by $\cQ^e_t$ (and $\cQ^e=\cQ^e_0$)  the set of equivalent  martingale measures $Q$ that satisfies $\rho_t(X) \geq -E_Q(X|\cF_t)$, for all $X$ such that $E_Q(X^-|\cF_t) < \infty$ a.s.. We have $\cQ^e_t\neq\emptyset$ under NA. We restrict the payoffs to the class $L_S(\R,\cF_T)$ of random variables $h_T\in L^0(\R,\cF_T)$ satisfying:
\begin{align*}
    |h_T| \leq c^0 + \sum_{i=1}^dc^iS_T^i,\quad P-a.s.
\end{align*}
for some constants $c^0,...,c^d$ that may depend on $h_T$.
\begin{theo}\label{SHP}
Suppose that the risk-measure is time-consistent and  we have $\cA_{t,T}\cap L^0(\R_-,\cF_T)=\{0\}$ for every $t\le T$. Consider the following sets:
\begin{align*}
    \Gamma_{0,T} &:=\cZ_{0,T} \cap  L_S(\R,\cF_T),\\
    \Theta_{0,T} &:= \left\{h_T\in L_S(\R,\cF_T), \sup_{Q\in\cQ^e}E_Q(h_T) \leq 0\right\}.
\end{align*}
Then, under the NA condition, $\Gamma_{0,T}=\Theta_{0,T}$ and the minimal risk-hedging price $P_0^*(h_T)$ of any contingent claim $h_T\in L_S(\R,\cF_T)$ is given by
\begin{align*}
    P_0^*(h_T) = \sup_{Q\in\cQ^e}E_Q(h_T).
\end{align*}
\end{theo}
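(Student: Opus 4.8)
The plan is to establish the set identity $\Gamma_{0,T}=\Theta_{0,T}$ by its two inclusions and then to read off the pricing formula from this identity together with the correspondence between risk-hedging prices and membership in $\cZ_{0,T}$. The starting observation is that, by Corollary \ref{CoroConsistency} and the computation $\bar\cP_0(h_T)=(\cA_{0,T}-\cR_{0,T})\cap L^0(\R,\cF_0)$, a number $P_0$ is a risk-hedging price of $h_T$ if and only if $h_T-P_0\in\cZ_{0,T}$; in particular $h_T\in\cZ_{0,T}$ means exactly that $0\in\cP_0(h_T)$. Moreover, for $h_T\in L_S(\R,\cF_T)$ and any $Q\in\cQ^e$, the martingale property $E_Q(S_T^i)=S_0^i$ forces $E_Q|h_T|\le c^0+\sum_i c^i S_0^i<\infty$, so that $p(h_T):=\sup_{Q\in\cQ^e}E_Q(h_T)$ is finite and bounded uniformly in $Q$. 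This finiteness, which is precisely the role of the class $L_S$, is what makes $p(h_T)$ a legitimate candidate price.

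For $\Gamma_{0,T}\subseteq\Theta_{0,T}$, I would fix $h_T=r-a$ with $r=\sum_{u=1}^T\theta_{u-1}\Delta S_u\in\cR_{0,T}$ and $a\in\cA_{0,T}$, fix $Q\in\cQ^e$, and show $E_Q(h_T)\le 0$, i.e. $E_Q(r)\le E_Q(a)$. The acceptable part is handled by the defining inequality of $\cQ^e$: since $\rho_0(a)\le 0$, one gets $E_Q(a)\ge -\rho_0(a)\ge 0$. The attainable part is handled by the $Q$-supermartingale property of the transform $r$, giving $E_Q(r)\le 0$. The only delicate point here is that a general $\theta\in L^0$ need not make the increments $Q$-integrable; this is controlled using $dQ/dP\in L^\infty$ (Theorem \ref{FTAP}, item 7), the $Q$-integrability of $h_T\in L_S$, and a supermartingale argument via truncation of $\theta$. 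Combining the two estimates yields $h_T\in\Theta_{0,T}$.

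The inclusion $\Theta_{0,T}\subseteq\Gamma_{0,T}$ is the crux. Under NA and time-consistency, Theorem \ref{FTAP} gives statement 2), so Theorem \ref{ClosureNA} applies and $\cZ_{0,T}$ is closed in $L^0$; moreover $\cZ_{0,T}\cap L^0(\R_+,\cF_T)=\{0\}$ by Theorem \ref{NA-Formulation}, and $\cQ^e\ne\emptyset$. Suppose, for contradiction, $h_T\in\Theta_{0,T}$ but $h_T\notin\cZ_{0,T}$. As $\cZ_{0,T}$ is a closed convex cone, one wants to separate $h_T$ from it by a functional that is nonpositive on $\cZ_{0,T}$ and strictly positive at $h_T$. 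Since $L^0$ is not locally convex, the separation cannot be done directly; the plan is to pass to the equivalent auxiliary measure $d\hat P\propto(1+\sum_i S_T^i)^{-1}dP$, under which $L_S$ embeds continuously into $L^1(\hat P)$, to apply a Hahn--Banach/Kreps--Yan separation in the $L^1(\hat P)$--$L^\infty(\hat P)$ duality, and to transfer the resulting density back. Because $\cR_{0,T}$ is a linear subspace contained in $\cZ_{0,T}$ (take $a=0$), the functional must vanish on $\cR_{0,T}$, giving the martingale property of $S$; its nonpositivity on $-\cA_{0,T}\subseteq\cZ_{0,T}$ gives $E_Q(a)\ge 0$ for acceptable $a$, i.e. the dual inequality defining $\cQ^e$; and $\cZ_{0,T}\cap L^0(\R_+)=\{0\}$ upgrades the density to a strictly positive one, so the normalized $Q$ is equivalent and lies in $\cQ^e$. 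Then $E_Q(h_T)>0$ contradicts $h_T\in\Theta_{0,T}$. This separation, together with checking that the produced $Q$ satisfies the full inequality $\rho_0(X)\ge -E_Q(X)$ for all admissible $X$ and not merely on $\cA_{0,T}$, is the main obstacle.

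Finally, the pricing formula follows by bookkeeping. If $P_0\in\cP_0(h_T)$, then $h_T-P_0\in\cZ_{0,T}\cap L_S=\Gamma_{0,T}\subseteq\Theta_{0,T}$, whence $E_Q(h_T)-P_0\le 0$ for all $Q\in\cQ^e$ and thus $P_0\ge p(h_T)$; taking the essential infimum gives $P_0^*(h_T)\ge p(h_T)$. Conversely, $\sup_{Q\in\cQ^e}E_Q(h_T-p(h_T))=0$ shows $h_T-p(h_T)\in\Theta_{0,T}=\Gamma_{0,T}\subseteq\cZ_{0,T}$, so $p(h_T)$ is itself a risk-hedging price and $P_0^*(h_T)\le p(h_T)$. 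Hence $P_0^*(h_T)=\sup_{Q\in\cQ^e}E_Q(h_T)$, and by Theorem \ref{minimal hedging price} this infimum is attained, i.e. it is a genuine price.
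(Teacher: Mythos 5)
The structure of your proposal (two inclusions, then the pricing formula via the identity $h_T-P_0\in\cZ_{0,T}$) matches the paper, and your second inclusion and pricing step are essentially the paper's own route. But your proof of the first inclusion $\Gamma_{0,T}\subseteq\Theta_{0,T}$ has a genuine gap. You split $E_Q(h_T)=E_Q(r)-E_Q(a)$ with $h_T=r-a$, $r=\sum_u\theta_{u-1}\Delta S_u\in\cR_{0,T}$, $a\in\cA_{0,T}$, and propose to prove $E_Q(a)\ge 0$ from the dual inequality defining $\cQ^e$ and $E_Q(r)\le 0$ from a supermartingale/truncation argument. Neither step is available. The dual inequality only applies to $a$ when $E_Q(a^-|\cF_0)<\infty$, and an acceptable position has no lower bound of any kind, so this can fail. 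Symmetrically, the truncation argument for the martingale transform $r$ needs one-sided integrability such as $E_Q(r^-)<\infty$: the truncated transforms $r^n$ satisfy $E_Q(r^n)=0$, but without a $Q$-integrable lower bound Fatou's lemma gives nothing in the limit, and discrete-time generalized transforms with $E_Q(r^-)=\infty$ need not satisfy $E_Q(r)\le 0$ in any useful sense. Worse, the two missing conditions are circularly equivalent: since $r=h_T+a$ and $h_T\in L^1(Q)$, one has $E_Q(r^-)<\infty$ if and only if $E_Q(a^-)<\infty$, so you cannot bootstrap one from the other. A symptom of the problem is that your argument for this inclusion never uses time-consistency, whereas the paper's does, essentially.

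The paper's proof is built precisely to avoid this global integrability issue. It sets $\gamma_u=\sum_{v\le u}\theta_{v-1}\Delta S_v-h_T$ and works backward: at each step the \emph{conditional} integrability $E_Q(|\gamma_u|\,|\,\cF_{u-1})<\infty$ holds automatically, because $|\theta_{u-1}|$ is $\cF_{u-1}$-measurable and factors out of $E_Q(\cdot|\cF_{u-1})$, while $E_Q(|\Delta S_u|\,|\,\cF_{u-1})$ and $E_Q(|h_T|\,|\,\cF_{u-1})$ are a.s.\ finite. Then statement 7) of Theorem \ref{FTAP} applies conditionally, and time-consistency in the form $\rho_{u-1}(\cdot)=\rho_{u-1}(-\rho_u(\cdot))$ propagates the inequality down to $0\ge\rho_0(\gamma_T)\ge E_Q(h_T)$. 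This stepwise conditional argument is the missing idea in your proposal. For the record: in the second inclusion the paper separates in $L^1(Q)$ for a fixed $Q\in\cQ^e$ and mixes the separating density with $dQ/dP$, which yields equivalence more cheaply than your exhaustion argument (both proofs are terse on verifying the full dual inequality for the constructed measure, a difficulty you correctly flag); and your direct observation that $\sup_{Q\in\cQ^e}E_Q(h_T)$ is itself a price, via $h_T-p(h_T)\in\Theta_{0,T}=\Gamma_{0,T}$, is a clean shortcut compared with the paper's $\epsilon$-contradiction — but it sits downstream of the set identity and cannot repair the first inclusion.
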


\subsection{Comparison with the No Good Deal condition}

We recall that the No Good Deal condition (NGD) of Cherny  \cite{Ch2} may be rephrased in our setting as follows:

\begin{defi}
The NGD condition holds at any time $t\le T$ if there is no $X_{t,T}\in \cR_{t,T}$ such that $\rho_t(X_{t,T})<0$ on a non null set.
\end{defi}

In the setting of Cherny, we suppose that 
\bea \rho_t(X)=\esssup_{Q^t\in \mathcal{D}_t}E_{Q^t}(-X),\label{Cherny}\eea
 where $\mathcal{D}_t$ is a weakly compact subset of $L^1$ with respect to the $\sigma(L^1,L^{\infty})$ topology and we use the definition $E_{Q^t}(-X)=E_{Q^t}(X^-)-E_{Q^t}(X^+)$ with the convention $\infty-\infty=+\infty$. Adapting \cite[Theorem 3.4]{Ch2}, we immediately get the following:

\begin{coro}\label{NGD-NA}
Suppose that the risk-measure is given by (\ref{Cherny}). Then, the NA condition and the NGD condition are equivalent to the existence of a probability measure $Q^t\in \mathcal{D}_t$ such that    the price process $(S_u)_{u=t}^T$ is a $Q^t$-martingale for all $t\le T-1$.

\end{coro}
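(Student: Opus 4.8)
The plan is to prove the three statements equivalent by placing the dual (martingale-measure) statement in the middle, since it connects most directly both to NGD and to the paper's FTAP. First I would rewrite everything in terms of the representation (\ref{Cherny}). Writing $\rho_t(X)=-\inf_{Q\in\mathcal{D}_t}E_Q(X)$, the inequality $\rho_t(X)<0$ becomes $\inf_{Q\in\mathcal{D}_t}E_Q(X)>0$, so NGD reads: no $X\in\cR_{t,T}$ satisfies $E_Q(X)>0$ for \emph{every} $Q\in\mathcal{D}_t$, i.e. $\inf_{Q\in\mathcal{D}_t}E_Q(X)\le 0$ for all $X\in\cR_{t,T}$. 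Because $\cR_{t,T}$ is a linear space this condition is symmetric under $X\mapsto -X$.

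For the easy implications, suppose $Q^t\in\mathcal{D}_t$ makes $(S_u)_{u=t}^T$ a $Q^t$-martingale. For any bounded $X=\sum_{u=t+1}^T\theta_{u-1}\Delta S_u\in\cR_{t,T}$ the increments telescope and $E_{Q^t}(X)=0$, whence $\rho_t(X)\ge E_{Q^t}(-X)=0$; thus no good deal exists and NGD holds. Moreover, for any $Q^t\in\mathcal{D}_t$ the representation (\ref{Cherny}) gives the dual inequality $\rho_t(X)\ge -E_{Q^t}(X)$ for all $X$, which together with the martingale property and $Q^t\sim P$ reproduces condition 7 of Theorem \ref{FTAP}; NA then follows from the implication $7\Rightarrow 1$ of the FTAP.

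The substantial implication is NGD $\Rightarrow$ existence of $Q^t\in\mathcal{D}_t$ under which $S$ is a martingale, which adapts \cite[Theorem 3.4]{Ch2}. The martingale measures in $\mathcal{D}_t$ form $\mathcal{M}=\bigcap_{X}\{Q\in\mathcal{D}_t:E_Q(X)=0\}$, the intersection over bounded $X\in\cR_{t,T}$. Each set is $\sigma(L^1,L^\infty)$-closed and convex and $\mathcal{D}_t$ is weakly compact, so by the finite-intersection property it suffices to treat a finite family $X_1,\dots,X_n$. The affine, weakly continuous map $\Phi\colon Q\mapsto(E_Q(X_i))_{i\le n}$ has compact convex image in $\R^n$; were $0\notin\Phi(\mathcal{D}_t)$, a finite-dimensional strict separation would produce $\lambda\in\R^n$ with $E_Q(\sum_i\lambda_iX_i)>0$ for all $Q\in\mathcal{D}_t$, hence $\inf_{Q}E_Q(\sum_i\lambda_iX_i)>0$ (the infimum is attained by weak compactness) --- a good deal built from $\sum_i\lambda_iX_i\in\cR_{t,T}$, contradicting NGD. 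Thus $0\in\Phi(\mathcal{D}_t)$, $\mathcal{M}\neq\emptyset$, and taking $X=1_A\Delta S_u^i$ with $A\in\cF_{u-1}$ shows that any $Q^t\in\mathcal{M}$ satisfies $E_{Q^t}(\Delta S_u^i\mid\cF_{u-1})=0$; it is equivalent to $P$ because every element of $\mathcal{D}_t$ is.

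The main obstacle is to match the \emph{single} martingale measure furnished by NGD with the full strength of NA. Indeed NA is equivalent, via Theorem \ref{FTAP}, not only to the dual inequality and the martingale property but also to the separating clause of condition 7: for every $x\in\cA_{t,T}\setminus\cA_{t,T}^0$ there must exist a martingale measure $Q\in\mathcal{D}_t$ with $E_Q(x)\ne 0$. Mere nonemptiness of $\mathcal{M}$ does not deliver this, so the crux is to show that $\mathcal{M}$ \emph{separates} $\cA_{t,T}\setminus\cA_{t,T}^0$. The levers are time-consistency together with the bipolar identification of $\mathcal{D}_t$ as the determining set of $\rho_t$, i.e. $\{Q:\rho_t(\cdot)\ge -E_Q(\cdot)\}=\overline{\mathrm{conv}}\,\mathcal{D}_t$, which is precisely the content that the FTAP supplies; this is the hard step the corollary imports rather than reproves. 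The residual points --- integrability of unbounded $\Delta S_u$, the $L^\infty$-density requirement in condition 7 (handled by truncation and localisation), and aligning the unconditional form (\ref{Cherny}) with the conditional $\esssup_{Q\in\mathcal{D}_t}E_Q(-X\mid\cF_t)$ --- are routine.
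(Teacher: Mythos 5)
Your middle block --- NGD $\Longleftrightarrow$ existence of a martingale measure in $\mathcal{D}_t$ via weak compactness, the finite-intersection property and finite-dimensional separation --- is essentially the adaptation of \cite[Theorem 3.4]{Ch2} that the paper invokes without writing any proof, so on that part there is nothing to compare you unfavourably against; it is sound up to one technical soft spot. Your test family consists of \emph{bounded} elements of $\cR_{t,T}$, but the claims $1_A\Delta S_u^i$, $A\in\cF_{u-1}$, that you use at the end to extract the martingale property are in general unbounded (boundedness of $\theta_{u-1}$ does not bound $\theta_{u-1}\Delta S_u$, and $\{|\Delta S_u|\le n\}\notin\cF_{u-1}$, so truncation leaves $\cR_{t,T}$). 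A measure killing all bounded attainable claims need not be a martingale measure; closing this needs integrability of $S$ under the elements of $\mathcal{D}_t$ or Cherny's generalized-expectation conventions, not just the ``routine'' remark you make.

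The genuine gap is the one you flag yourself and then wave away: the equivalence with NA. Your second paragraph claims that one martingale measure $Q^t\in\mathcal{D}_t$ plus the dual inequality ``reproduces condition 7 of Theorem \ref{FTAP}''. It does not: condition 7 also contains the separating clause (for every $x\in\cA_{t,T}\setminus\cA_{t,T}^0$ there is a martingale measure $Q_x$ with $\bP(E_{Q_x}(x|\cF_t)\ne 0)>0$), and the paper's proof of $7)\Rightarrow 1)$ uses exactly that clause to obtain SRN. Your last paragraph concedes the point but asserts the missing separation is ``precisely the content that the FTAP supplies'' and can be imported. It cannot: the FTAP is an equivalence between NA and the \emph{whole} of condition 7, and neither NGD nor the existence of one martingale measure in $\mathcal{D}_t$ yields the separating clause. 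Indeed no argument can close this step, because the implication is false under the corollary's hypotheses. Take $\Omega=\{\omega_1,\omega_2\}$, $P$ uniform, $T=1$, $d=1$, $S_0=1$, $S_1(\omega_1)=2$, $S_1(\omega_2)=0$, and $\mathcal{D}_0={\rm conv}\{Q_1,Q_2\}$ with $Q_1=(\tfrac12,\tfrac12)$, $Q_2=(\tfrac34,\tfrac14)$: then $Q_1\in\mathcal{D}_0$ is an equivalent martingale measure and NGD holds, since $\rho_0(\theta\Delta S_1)=\max(0,-\theta/2)\ge 0$ for all $\theta$, yet $\rho_0(\Delta S_1)=0$ while $\rho_0(-\Delta S_1)=\tfrac12\ne 0$, so SRN --- hence NA in the sense of Definition \ref{defi-SAIP} --- fails, even though time-consistency and $\cA_{t,T}\cap L^0(\R_-,\cF_T)=\{0\}$ hold. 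What your argument (and the citation to \cite{Ch2}) actually delivers is NA $\Rightarrow$ NGD $\Longleftrightarrow$ existence of a martingale measure in $\mathcal{D}_t$; the converse direction toward NA needs assumptions on $\mathcal{D}_t$ beyond (\ref{Cherny}). Your identification of the obstacle is correct; your resolution of it is not, and the example shows the obstacle is intrinsic to the statement rather than to your proof.
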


\section{Appendix: Proofs.}

\noindent Proof of Theorem \ref{na-classical na}. \begin{proof}

 We know that the existence of a risk-neutral probability measure $Q\sim P$ implies AIP. Moreover, suppose that $\rho_t(z\Delta S_{t+1})=0$ on $F_t\in \cF_t$ where  $z\in S(0,1)$. Then, by definition of $\rho_t$, $1_{F_t} z \Delta S_{t+1}\ge 0$. As $E_Q(1_{F_t}z\Delta S_{t+1})=0$, we deduce that $1_{F_t}z\Delta S_{t+1}=0$ hence $\rho_t(-z\Delta S_{t+1})=0$ on $F_t$. By symmetry, we deduce that SRN holds. 

Reciprocally, suppose that  AIP and SRN conditions hold. Let $\theta_t\in L^0(\R^d,\cF_t)$ such that $\theta_t\Delta S_{t+1}\ge 0$ a.s.. Let us write $\theta_t = r_tz_t$ where $r_t\in L^0(\R,\cF_t)$ and $z_t\in L^0(S(0,1),\cF_t)$. On the set $F_t=\{r_t>0\}$, $z_t\Delta S_{t+1}\ge 0$ hence $\essinf_{\cF_{t}}( z_t\Delta S_{t+1})\ge 0$. By the  AIP condition, $\rho_t(z_t\Delta S_{t+1})\ge 0$. We deduce that $\essinf_{\cF_{t}} (z_t\Delta S_{t+1})= 0=\rho_t(z\Delta S_{t+1})$. Under SRN, we deduce that $\rho_t(-z\Delta S_{t+1})=0$ hence $z\Delta S_{t+1}\ge 0$ so that $z_t\Delta S_{t+1}=0$. By a similar reasoning on the set $F_t=\{r_t<0\}$, we also get that $z_t\Delta S_{t+1}=0$ hence $\theta_t\Delta S_{t+1}=0$. We then conclude by \cite[Condition (g), p. 73, Section 2.1.1]{KS}. \end{proof}

\noindent Proof of Theorem \ref{constant_g}. \begin{proof}

If  $\lambda_{t-1}\in L^0(\R,\cF_t)$, $g_{t-1}(\lambda_{t-1} z_{t-1}\Delta S_t)=|\lambda_{t-1}|g_{t-1}(\epsilon_{t-1}z_{t-1}\Delta S_t)$ for some $\epsilon_{t-1}\in L^0(\{-1,1\},\cF_{t-1})$. We deduce that $g_{t-1}(\lambda_{t-1} z_{t-1}\Delta S_t)=0$ on $H_{t-1}$. Recall that $g_{t-1}(\lambda_{t-1} z_{t-1})=\rho_{t-1}(\lambda_{t-1} z_{t-1}\Delta S_t-h_t)$ by Cash invariance. Using the triangular inequality, we then  deduce on $H_{t-1}$ that 
\bean g_{t-1}(0)=\rho_{t-1}(-h_t)&\le& \rho_{t-1}(-\lambda_{t-1} z_{t-1}\Delta S_t)+\rho_{t-1}(\lambda_{t-1} z_{t-1}\Delta S_t-h_t)\\
&\le&g_{t-1}(\lambda_{t-1} z_{t-1}).\eean
Similarly, we have 
\bean g_{t-1}(\lambda_{t-1} z_{t-1})&\le& \rho_{t-1}(\lambda_{t-1} z_{t-1}\Delta S_t)+\rho_{t-1}(-h_t)=\rho_{t-1}(-h_t).
\eean
We deduce that $g_{t-1}(\lambda_{t-1} z_{t-1})=g_{t-1}(0)$ and this implies that $g_{t-1}$ is a constant on the line $\R z_{t-1}$. Indeed, on the contrary  case, the $\cF_{t-1}$-measurable set $\Gamma_{t-1}(\omega)=\{\alpha\in \R:~g_{t-1}(\alpha z_{t-1})\ne g_{t-1}(z_{t-1})\}$ is non empty on the non null set $G_{t-1}=\{\omega\in \Omega:~\Gamma_{t-1}(\omega)\ne \emptyset\}\in \cF_{t-1}$. We then deduce a measurable selection $\tilde z_t\in L^0(\R^d,\cF_{t-1})$ such that $\tilde z_t=\alpha_t z_t$ and $\alpha_t \in \Gamma_{t-1}$ on the set $G_{t-1}$ and we put $\tilde z_t=z_t$ on the complimentary set $\Omega\setminus G_{t-1}$. By the first part above, we deduce that  $g_{t-1}(\tilde z_t)=g_{t-1}( z_t)$ a.s., which contradicts the fact that $\alpha_t\in \Gamma_{t-1}$ on $H_{t-1}$.

\end{proof} \smallskip

\noindent Proof of Proposition \ref{minimum_g}. \begin{proof}
If $\lambda_{t-1}\in L^0(\R,\cF_t)$,  $g_{t-1}(\lambda_{t-1} z_{t-1}\Delta S_t)=|\lambda_{t-1}|g_{t-1}(\epsilon_{t-1}z_{t-1}\Delta S_t)$, where $\epsilon_{t-1}\in L^0(\{-1,1\},\cF_{t-1})$ . Moreover,  $g_{t-1}(\epsilon_{t-1}z_{t-1}\Delta S_t)>0$ on $F_{t-1}$. By sub-additivity, we deduce that $|\lambda_{t-1}|g_{t-1}(\epsilon_{t-1}z_{t-1}\Delta S_t)\le \rho_{t-1}(h_t)+g_{t-1}(\lambda_{t-1} z_{t-1})$. As $|\lambda_{t-1}|$ goes to $+\infty$, we conclude that $g_{t-1}(\lambda_{t-1} z_{t-1})$ tends to $+\infty$ on $F_{t-1}$.

 Now, let us suppose that there is a non null set $G_{t-1}$ of $\cF_{t-1}$ such that $ g_{t-1}(\omega,rz_{t-1})$  does not converge to $+\infty$ if $r\to \infty$ when $\omega\in G_{t-1}$. Note that $\omega\in G_{t-1}$ if and only if there exists $m(\omega)\in \R$ such that, for all $n\ge 1$, there exists $r_n(\omega)\ge n$ such that $g_{t-1}(\omega,r_n(\omega))\le m(\omega)$. Consider the following set
$$\Gamma_{t-1}(\omega)=\{(m,(r_n)_{n=1}^{\infty})\in \R\times \R^{\N}:~r_n\ge n\,\,{\rm and\,}\,g_t(\omega,r_n)\le m,\,\forall n\ge 1\}.$$
The Borel $\sigma$-algebra $\cB(\R^{\N})$ is defined as the smallest topology on $\R^{\N}$ such that the projection mappings $P^n:(r_j)_{j=1}^{\infty}\mapsto r_n$, $n\ge 1$, are continuous. Therefore, we deduce that $\Gamma_{t-1}$ is $\cF_{t-1}$-measurable. As $\Gamma_{t-1}$ is non empty on $G_{t-1}$, we deduce a $\cF_{t-1}$-measurable selection  $(m,(r_n)_{n=1}^{\infty})$ of $\Gamma_{t-1}$ on $G_{t-1}$ that we extend to the whole space $\Omega$ by $m(\omega)=+\infty$ and $r_n(\omega)=n$, if $\omega\in \Omega\setminus G_{t-1}$. Since the $\cF_{t-1}$-measurable sequence $(r_n)_{n=1}^{\infty}$ converges a.s. to $+\infty$, we deduce that $\lim_{n\to+\infty} g_{t-1}(r_nz_{t-1}) = +\infty$ on $G_{t-1}$ by the first part of the proof. This is in contradiction with the property $g_t(\omega,r_n(\omega))\le m(\omega)$, for all $n\ge 1$, if $\omega\in G_{t-1}$.

Similarly, by symmetry, we may also prove that  $\lim_{r\to-\infty} g_{t-1}(rz_{t-1}) = +\infty$ on $F_t$. As $g_{t-1}$ is  l.s.c., we finally deduce that $g_{t-1}$ achieves a minimum on $\R z_{t-1}$.
\end{proof}

\noindent Proof of Theorem \ref{SAIP-price}. \begin{proof}

Note that on the $\cF_{t-1}$-measurable set $\{ \inf_{x\in\R^d}g_{t-1}(x)=+\infty\}$, we have $g_{t-1}(x)=+\infty$ for all $x$ and  $\inf_{x\in\R^d}g_{t-1}(x)=\min_{x\in\R^d}g_{t-1}(x)=+\infty$. The conclusion is then immediate. So, we now suppose w.l.o.g. that there there exists $x_{t-1}\in L^0(\R^d,\cF_{t-1})$ such that $g_{t-1}(x_{t-1})<\infty$. We then introduce the $\cF_{t-1}$-measurable random set $D_{t-1}=\{x\in \R^d:~g_{t-1}(x)\le g_{t-1}(x_{t-1})\}$ and its normalized set defined by   $D^1_{t-1}=\{x/|x|: x\in D_{t-1}\setminus \{0\}\}$.  Observe  that  $\inf_{x\in\R^d}g_{t-1}(x)=\inf_{z\in D_{t-1}}g_{t-1}(z)$. \smallskip

 For any $z\in L^0(D_{t-1}^1\setminus \{0\},\cF_{t-1})$, there exits $r_{t-1}\in L^0((0,\infty),\cF_t)$ such that $r_{t-1}z\in D_{t-1}$.  We have $\rho_{t-1}(z\Delta S_t) > 0$ and $\rho_{t-1}(-z\Delta S_t) > 0$ by  definition of $F_{t-1}$ and $z_{t-1}$.  Notice that, by definition, we have $g_{t-1}(r_{t-1}z) \leq g_{t-1}(x_{t-1}) $.
As  $r_{t-1}> 0$, this is equivalent to
\begin{align*}
     & r_{t-1}\left(zS_{t-1} + \rho_{t-1}\left(zS_t - \dfrac{h_t}{r_{t-1}}\right)\right) \leq  g_{t-1}(x_{t-1}),\\
  &r_{t-1}\left(zS_{t-1} + \rho_{t-1}(zS_t) + \rho_{t-1}\left(zS_t - \dfrac{h_t}{r_{t-1}}\right) - \rho_{t-1}(zS_t)\right) \leq g_{t-1}(x_{t-1}).
\end{align*}
We observe that by convexity and homogeneity:
\begin{align*}
    \rho_{t-1}\left(zS_t - \dfrac{h_t}{r_{t-1}}\right) - \rho_{t-1}(zS_t) \geq -\dfrac{1}{r_{t-1}}\rho_{t-1}(h_t).
\end{align*}
Therefore, $r_{t-1}(zS_{t-1} + \rho_{t-1}(zS_t)) \leq g_{t-1}(x_{t-1}) + \rho_{t-1}(h_t)$, i.e. 
$$ r_{t-1} \leq \dfrac{(g_{t-1}(x_{t-1}) + \rho_{t-1}(h_t))^+}{\rho_{t-1}(z\Delta S_t)}\le \dfrac{(g_{t-1}(x_{t-1}) + \rho_{t-1}(h_t))^+}{\rho_{t-1}(z_{t-1}\Delta S_t)}=M_{t-1}<\infty .$$

This implies that $|x|\le M_{t-1}$ for all $x\in  D_{t-1}$ hence $\inf_{x\in\R^d}g_{t-1}(x)=\inf_{z\in D_{t-1}}g_{t-1}(z)=\inf_{x\in \bar B(0,M_{t-1})}g_{t-1}(x)$ where $\bar B(0,M_{t-1})$ is the closed ball  of radius $M_{t-1}$ and centered at the origin. Since  $\bar B(0,M_{t-1})$ is compact and $g_{t-1}$ is l.s.c., we deduce that $g_{t-1}$ admits a minimum on $\bar B(0,M_{t-1})$ and, finally, the same holds on $\R^d$. By Proposition \ref{measurability}, observe that there exists a measurable version of an argmin, using a measurable selection argument. The conclusion follows.
\end{proof}

\noindent Proof of Theorem \ref{minimal hedging price}. \begin{proof}

Suppose first that $d=2$. Since $\rho_t$ is l.s.c., there exists $z_t\in L^0(S(0,1),\cF_t)$ such that $\inf_{z\in S(0,1)}\rho_t(z\Delta S_{t+1}) = \rho_t(z_t \Delta S_{t+1})$.  By Theorem \ref{SAIP-price} and under SRN,  $g_t$ attains a minimum on $\R^2$ when $\omega\in F_t = \{\rho_t(z_t \Delta S_{t+1})> 0\}\in \cF_t$. 

Let us now suppose that $\omega\in F_t^c =  \{\rho_t(z_t\Delta S_{t+1}) = \rho_t(-z_t\Delta S_{t+1})= 0\}$. We consider a line that is parallel to the line $\R z_t$. For any $z_1, z_2 \in L^0(\R^d,\cF_t)$ on that line such that $z_1 - z_2=r_tz_t \in \R z_t$, $r_t\in L^0(\R,\cF_t)$, we have:
\begin{align*}
    g_t(z_1) &= \rho_t((z_2 + r_tz_t)\Delta S_{t+1} - h_{t+1}) \\
    & \leq \rho_t(z_2\Delta S_{t+1} - h_{t+1}) + \rho_t(r_tz_t \Delta S_{t+1}) = g_t(z_2)
\end{align*}
By symmetry, we also have: $g_t(z_2) \leq g_t(z_1)$, hence $g_t(z_1) = g_t(z_2)$. Therefore,  $g_t$ is constant on any line which is parallel to $\R z_t$. Moreover,
\begin{align*}
    \{(\omega, z_t^{\perp})\in \Omega\times\R^2: z_t^{\perp} z_t(\omega) = 0\}\in \mathcal{F}_t\otimes\mathcal{B}(\R^2).
\end{align*}
By measurable selection argument, 
we may choose $z_t^{\perp}\in L^0(S(0,1),\mathcal{F}_t)$ such that the line $\R z_t^{\perp}$ is orthogonal to $\R z_t$. Since $d=2$,  for any $x\in \R^2$, there exist $\lambda\in\R$ such that $x - \lambda z_t^{\perp} \in \R z_t$. We then deduce from above that:
\begin{align*}
    \inf_{x\in\R^2}g_t(x) =  \inf_{\lambda\in\R}g_t(\lambda z_t^\perp).
\end{align*}

On the set $\{\rho_t(z_t^{\perp}\Delta S_{t+1}) = 0\}$, we get that $\inf_{\lambda\in\R}g_t(\lambda z_t^{\perp}) = g_t(0)$ by Proposition \ref{constant_g}. On the other hand, on the set $\{\rho_t(z_t^{\perp}\Delta S_{t+1}) > 0\}$, we get that $\lim_{|\lambda|\to\infty}g_t(\lambda z_t^{\perp}) = +\infty$ by Proposition \ref{minimum_g} and SRN, hence $g_t$ achieves a minimum on the line $\R z_t^{\perp}$. \smallskip

Let us now prove the $d$-dimensional case by induction. Recall that there exists $z_t\in L^0(S(0,1),\mathcal{F}_t)$ such that  $\rho_t(z_t\Delta S_{t+1})= \inf_{z\in S(0,1)}\rho_t(z\Delta S_{t+1})$. On $F_t = \{\rho_t(z_t\Delta S_{t+1}) >0\}$, by Theorem \ref{SAIP-price} and SRN, $g_t$ attains a minimum on $\R^d$. On $F_t^c = \{\rho_t(z_t\Delta S_{t+1}) =0\}$, consider a hyperplane $I_{d-1}$ which is orthogonal to $\R z_t$ and admits an orthonormal basis $(z_1, z_2,...,z_{d-1})$ such that for each $\omega\in \Omega$, $\hat z=(z_t, z_1,..., z_{d-1})$ is an orthonormal basis for $\R^d$. Note that each $z_i$ can be chosen  in $L^0(S(0,1),\mathcal{F}_t)$. Indeed, similarly to the case $d=2$, we first choose $z_1 \in L^0(S(0,1),\mathcal{F}_t)$ orthogonal to $z_t$. Recursively, for $i\in\{2,...,d-1\}$, we have:
\begin{align*}
    \{(\omega,z_i)\in\Omega\times\R^d: z_iz_j(\omega) = 0 \text{ for all } j = 0,...,i-1\}\in\mathcal{F}_t\otimes\mathcal{B}(\R^d).
\end{align*}
By measurable selection argument, we  then choose $z_i\in L^0(S(0,1),\mathcal{F}_t)$.
We denote by $M_{t}$ the matrix such that $\hat z_{i}=M_t e_i$, for every $i\ge 1$, where $e_i=(0,\cdots,1,\cdots,0)\in \R^d$. We recall the change of variable $x = M_t\tilde{x}$ where $x$ and $\tilde x$ are the coordinates of an arbitrary vector of $\R^d$ in the basis $(e_i)_{i\ge 1}$ and $(\hat z_i)_{i\ge 1}$ respectively. The $i$th column vector of $M_{t}$ coincides with $\hat z_i$ expressed in the  basis $(e_i)_{i\ge 1}$, hence each entry of $M_t$ belongs to $L^0(\R,\mathcal{F}_t)$ and  so do the components of $M_t^{-1}$.
We then define the adapted processes $\tilde{S}_{u}=M_t^{-1}S_{u}=M_t'S_u$, for $u=t,t+1$. We have:
\begin{align*}
    g_t(x) = \rho_t(x\Delta S_{t+1} -h_{t+1}) = \rho_t(\tilde{x}\Delta \tilde{S}_{t+1} - h_{t+1}).
\end{align*}
We observe that $\tilde{S}_{u=t,t+1}$ forms a new market model which also satisfies the  NA condition between $t$ and $t+1$. Indeed, for any $z\in S(0,1)$, we have:
\begin{align*}
    \rho_{t}(z\Delta\tilde{S}_{t+1}) = \rho_{t}(zM_t'\Delta S_{t+1}),
\end{align*}
hence $\rho_{t}(z\Delta\tilde{S}_{t+1}) = 0$ implies that $\rho_{t}(-zM_t'\Delta S_{t+1}) = 0$ by the NA condition satisfied in the market formed by $S$ which, in turn, implies $\rho_t(-z\Delta\tilde{S}_{t+1}) = 0$. \smallskip

Fix $\omega$ and, for any $x\in\R^d$, consider the orthogonal projection $\bar{x}$ of $x$ onto $I_{d-1}$. We then have $g_t(x) = g_t(\bar{x})$. For $\bar{x}\in I_{d-1}$, we denote $\hat{x} = M_t^{-1}\bar{x}$, we have:
\begin{align*}
    \bar{x}\Delta S_{t+1} =  \hat x\Delta\tilde{S}_{t+1}: = \sum_{i=1}^d\hat{x}^i\Delta\tilde{S}_{t+1}^i = \sum_{i=2}^d\hat{x}^i\Delta\tilde{S}_{t+1}^i,
\end{align*}
since the first coordinate of $\hat{x}$ equals $0$ in the new basis. We deduce that:
\begin{align*}
    \inf_{x\in\R^d}g_t(x) =  \inf_{x\in I_{d-1}}\rho_t(x\Delta S_{t+1} - h_{t+1}) 
    &= \inf_{\hat{x}\in\R^{d-1}}\rho_t\left(\sum_{i=2}^d\hat{x}^i\Delta\tilde{S}_{t+1}^i - h_{t+1}\right)
\end{align*}
This means that we have reduced the optimization problem to a market with only $d-1$ assets defined by $(\tilde{S}^2,...,\tilde{S}^d)$. As it satisfies the NA condition, we deduce that $ \inf_{x\in\R^d}g_t(x)$ is attained by induction.\end{proof}

\noindent Proof of Theorem \ref{SAIP}. \begin{proof}
Suppose that  NA holds. By Theorem \ref{minimal hedging price}, there is  $z_t\in L^0(S(0,1),\mathcal{F}_t)$ and $r_t\in L^0(\R,\mathcal{F}_t)$ such that $P^*_t =  \rho_t(r_tz_t\Delta S_{t+1} - h_{t+1})$.
 Suppose that  $\rho_{t}(z_t \Delta S_{t+1})$ and $\rho_{t}(-z_t\Delta S_{t+1})$ are both equal to $0$. Then, the function $g_t$ associated to $h_{t+1}$, see  (\ref{gt-1}), is  constant on the line $\R z_t$ by Theorem \ref{constant_g}. Therefore, $P_t^* = g_t(0) = \rho_t(-h_{t+1}) > 0$. Otherwise, under NA, $\rho_{t}(z_t\Delta S_{t+1}) > 0$ and $\rho_{t}(-z_t\Delta S_{t+1}) > 0$. Using triangular inequalities,  and the assumption $\rho_t(h_{t+1})\le 0$, we then deduce that:
\bean
     P^*_t &=& r_tz_tS_t + \rho_t(r_tz_tS_{t+1} - h_{t+1}), \\
     &=& \rho_{t}(-h_{t+1})1_{\{r_t=0\}} + r_t \rho_t\left(z_t\Delta S_{t+1} - \dfrac{h_{t+1}}{r_t}\right)1_{\{r_t>0\}}\\
     &\;& - r_t \rho_t\left(-z_t\Delta S_{t+1} + \dfrac{h_{t+1}}{r_t}\right)1_{\{r_t<0\}},\\
     &\geq& \rho_{t}(-h_{t+1})1_{\{r_t=0\}} + r_t \rho_t\left(z_t\Delta S_{t+1}\right)1_{\{r_t>0\}} - r_t \rho_t\left(-z_t\Delta S_{t+1}\right)1_{\{r_t<0\}},  \\
     & > & 0.
\eean

For the reverse implication, let us  prove first that AIP holds. We fix $h_{t+1}$ such that $\rho_{t}(-h_{t+1})\in (0,\infty)$ and $\rho_{t}(h_{t+1})\le 0$. So, with the function $g_t$ associated to $h_{t+1}$, see  (\ref{gt-1}), we have $P_{t}^*=P_{t}^*(h_{t+1})=\inf\limits_{x\in \R} g_{t}(x)>0$ by assumption and $g_{t}(rz)>0$ for all $r\in \R$ and  $z\in S(0,1)$. Let us show that the set $\{zS_{t}+\rho_{t}(zS_{t+1})<0\}$ is empty  for all $z\in S(0,1)$ a.s.. In the contrary case, by measurable selection, we may construct $z_t\in L^0(\R^d,\cF_t)$ such that we have $\bP(z_tS_{t}+\rho_{t}(z_tS_{t+1})<0)>0$. We then define $$r_{t}:=-\frac{\rho_{t}(-h_{t+1})}{\rho_{t}(z_t\Delta S_{t+1})}1_{\{\rho_{t}(z_t\Delta S_{t+1})<0\}}\ge 0.$$
We have
\bean g_{t}(r_tz_t)&=&r_tz_tS_{t}+\rho_{t}(r_tz_tS_{t+1}-h_{t+1}),\\
& \le&r_tz_tS_{t}+\rho_{t}(r_tz_tS_{t+1})+\rho_{t}(-h_{t+1}),\\
&\le& r_{t}\rho_{t}(z_t\Delta S_{t+1})+\rho_{t}(-h_{t+1}),\\
&\le&\rho_{t}(-h_{t+1})1_{\{\rho_{t}(z_t \Delta S_{t+1})\ge 0\}.}\eean
Therefore, $P_{t}^*\le 0$ on the set $\{\rho_{t}(z_t\Delta S_{t+1})<0\}$ in contradiction with $P_{t}^*>0$. \smallskip

Let us  show that $\rho_{t}(-z\Delta S_{t+1})=0$ if $\rho_{t}(z\Delta S_{t+1})=0$ for any $z\in S(0,1)$. Otherwise, by measurable selection argument,  there exists $z_t\in L^0(S(0,1),\cF_t)$ such that $\Lambda_{t}:=\{\rho_{t}(z_t\Delta S_{t+1})=0\}\cap \{\rho_{t}(-z_t\Delta S_{t+1}) > 0\}$ satisfies $\bP(\Lambda_{t})>0$. If $h_{t+1}= z_t\Delta S_{t+1}$, then  $\rho_{t}(-h_{t+1})= \rho_{t}(-z_t\Delta S_{t+1})>0$ on $\Lambda_{t}$. On the complimentary set, we fix $h_{t+1}=\gamma_{t}>0$, $\gamma_{t}\in L^0((0,\infty),\cF_{t})$. It follows that $\rho_{t}(-h_{t+1})>0$. Moreover, $\rho_{t}(h_{t+1})=\rho_{t}(z_t\Delta S_{t+1})=0$ on $\Lambda_{t}$ and, otherwise, $\rho_{t}(h_{t+1})=-\gamma_{t}<0$. Therefore, $\rho_{t}(h_{t+1})\le 0$. We deduce that   $P_{t}^*(h_{t+1})>0$, by assumption. On the other hand, if $r\ge 1$, and $\omega\in \Lambda_{t}$,
$$P_{t}^*(h_{t+1})\le \rho_{t}(rz_t\Delta S_{t+1}- z_t\Delta S_{t+1})=(r-1)\rho_{t}(z_t\Delta S_{t+1})=0.$$ It follows that  $P_{t}^*(h_{t+1})\le 0$ on $\Lambda_{t}$, i.e. a contradiction. We conclude that $\rho_{t}(z\Delta S_{t+1})=0$ if and only if $\rho_{t}(-z\Delta S_{t+1})=0$ for any $z\in S(0,1)$. 

At last, it is clear that $P_{t}^*(h_{t+1})\le g_t(0)=\rho_{t}(-h_{t+1})$. Moreover, for all $x\in \R^d$, $0\le \rho_{t}( x\Delta S_{t+1})\le \rho_{t}( x\Delta S_{t+1}-h_{t+1})+\rho_{t}(h_{t+1})$. Taking the infimum in the r.h.s. of this inequality, we get that $0\le P_{t}^*(h_{t+1})+\rho_{t}(h_{t+1})$ and we may conclude. \end{proof}

\noindent Proof of Theorem \ref{propRP0}. \begin{proof}

 By  \cite{Acc}, \cite{Del2}, there exists $\mathcal{D}_t$ such that (\ref{RP}) holds if $X\in L^{\infty}$. By homogeneity, it is clear that (\ref{RP}) still holds if $X$ is $\cF_t$-bounded, i.e. $|X|\le c_t$ where $c_t\in L^0(\R_+,\cF_t)$.  Let us show that  (\ref{RP}) still holds for any random variable $X$ such that $X\le c_t$ a.s. for some $c_t\in L^0(\R_+,\cF_t)$. Let us first suppose that  $X$ is acceptable. Let us define $X^M=X1_{\{X\ge -M\}}$ for any $M>0$. Then, $X^M$ is $\cF_t$-bounded a.s.. As $X^M=X-X1_{\{X<-M\}}$ and $-X1_{\{X<-M\}}\ge 0$, then $X^M$ is acceptable i.e. $\rho_t(X^M)\le 0$. By (\ref{RP}), we deduce that $E_{Q}(X^M|\cF_t)\ge 0$ for all $Q\in \mathcal{D}_t$. Thus, $E_{Q}((X^M)^+|\cF_t)\ge E_{Q}((X^M)^-|\cF_t)$ and, as $M\to \infty$, we get that $c_t\ge E_Q(X^+|\cF_t)\ge E_Q(X^-|\cF_t)$ hence $\infty>E_Q(X|\cF_t)\ge 0$. More generally, for any $X$ such that $X\le c_t$ for some  $c_t\in L^0(\R_+,\cF_t)$, $\rho_t(X)+X$ is acceptable hence $\rho_t(X)\ge E_Q(-X|\cF_t)$ for any $Q\in \mathcal{D}_t$. We deduce that the inequality $\rho_t(X)\ge \esssup_{ Q\in \mathcal{D}_t}E_Q(-X|\cF_t)$ holds. 

For the reverse inequality, note that the random variable
$$\gamma^M=\esssup_{ Q\in \mathcal{D}_t}E_Q(-X|\cF_t)+X^M\in [-c_t+X^M,X^M]$$
is $\cF_t$-bounded hence (\ref{RP})  holds for $\gamma^M$, as seen above. Moreover, we have $E_Q(-\gamma^M|\cF_t)\le E_Q(X|\cF_t)-X^M=E_Q(X1_{X<-M}|\cF_t)\le 0$. We deduce by (\ref{RP}) that $\rho_{t-1}(\gamma^M)\le 0$. Using the Cash invariance property, we deduce that $\rho_{t-1}(X^M)\le
\esssup_{ Q\in \mathcal{D}_t}E_Q(-X|\cF_t)$. As $\lim_{M\to \infty}X^M=X$, we then deduce that $\rho_{t-1}(X)\le \liminf_{M\to \infty}\rho_{t-1}(X^M)\le
\esssup_{ Q\in \mathcal{D}_t}E_Q(-X|\cF_t)$ so that we may conclude that the equality (\ref{RP}) holds for any random variable that are $\cF_t$-bounded form above.
\end{proof}

 \noindent Proof of Theorem \ref{DR}. \begin{proof}

 Suppose that $Z=X+\epsilon^+$ where $X$ is $\cF_t$-bounded from above and acceptable and $\epsilon^+\ge 0$ a.s..  Then, $\cD_t$ exists by Proposition \ref{propRP0} and, for all $Q\in \cD_t$, $E_Q(Z|\cF_t)\ge E_Q(X|\cF_t)\ge 0$. As $\rho_t(Z)+Z$ admits the same form than $Z$, we deduce that $\rho_t(Z)+Z$ admits non negative conditional expectations under $Q\in \cD_t$. Therefore, $\rho_t(Z)\ge E_Q(-Z|\cF_t)$ for all $Z\in \cD_t$ hence $\rho_t(Z)\ge \esssup_{Q\in \cD_t}E_Q(-Z|\cF_t)$, at least when $\rho_t(Z)>-\infty$.
Otherwise, when $\rho_t(Z)=-\infty$, $Z_{\alpha}=-\alpha+Z$ is acceptable for all $\alpha>0$, hence $E_Q(Z_{\alpha}|\cF_t)\ge 0$, i.e. $E_Q(Z|\cF_t)\ge \alpha$ for all $\alpha>0$. It follows that  $E_Q(Z^-|\cF_t)-E_Q(Z^+|\cF_t)\le -\alpha$ and finally, as $\alpha\to \infty$, we deduce that $\rho_t(Z)=\esssup_{Q\in \cD_t}E_Q(-Z|\cF_t)=-\infty$.

 Consider an acceptable position $Z$. Then, by assumption, $Z=\limsup_n Z^n$ where $Z^n$ is of the form $Z^n=X^n+\epsilon_n^+$ with $\epsilon_n^+\ge 0$ a.s. and $X^n$ is $\cF_t$-bounded from above. Note that $\sup_{k\le n\le m}X_n$ is still $\cF_t$-bounded from above for all $m\ge k \ge  1$. Since  $\sup_{n\ge k}Z_n\ge  \sup_{k\le n\le m}Z_n\ge  \sup_{k\le n\le m}X_n$, for all $m\ge k$, we deduce that  $\sup_{n\ge k}Z_n$ is of the form $X_k+\epsilon^+_k$ where  $X_k$ is $\cF_t$-bounded from above and acceptable while $\epsilon^+_k\ge 0$  a.s.. It follows that any acceptable position is of the form $Z=\lim\downarrow Z_n$ where $Z_n$ is of the form $Z_n=X_n+\epsilon^+_n$ and  $X_n$ is $\cF_t$-bounded from above and acceptable while $\epsilon^+_n\ge 0$  a.s..  As $Z\le Z_n$, we deduce that $\rho_t(Z)\ge \rho_t(Z_n)\ge \esssup_{Q\in \cD_t}E_Q(-Z_n|\cF_t)$ by virtue of the inequality we have shown in the first part. As $(-Z_n)$ is non decreasing we finally deduce that $\rho_t(Z)\ge E_Q(-Z|\cF_t)$ for any $Q\in \cD_t$, when $n\to \infty$. It follows that $\rho_t(Z)\ge \esssup_{Q\in \cD_t}E_Q(-Z|\cF_t)$. 
 
 Moreover, suppose that (\ref{RP1}) holds for any acceptable position $Z_n$ of the form $Z_n=X_n+\epsilon_n^+$ where $X_n$ is $\cF_t$-bounded from above and acceptable and $\epsilon_n^+\ge 0$ a.s.. By lower semi-continuity, 
 $$\rho_t(Z)\le \liminf_n \rho_t(Z_n)=\liminf_n\esssup_{Q\in \cD_t}E_Q(-Z_n|\cF_t).$$ As $Z\le Z_n$, $E_Q(-Z_n|\cF_t)\le E_Q(-Z|\cF_t)$, and we deduce the inequality $\rho_t(Z)\le \esssup_{Q\in \cD_t}E_Q(-Z|\cF_t)$. We then conclude that (\ref{RP1}) holds for every acceptable position $Z$ and, finally, for every $X\in L^0$ as $\rho_t(X)+X$ is acceptable.

 It remains to show that (\ref{RP1}) holds for $Z=X+\epsilon^+\in \mathcal A_t^{\infty,+}+L^0(\R^+,\cF_T)$. To get it, it is sufficient to prove that $\rho_t(Z)\le \esssup_{Q\in \cD_t}E_Q(-Z|\cF_t)$. Let us define $Z^n=X+\epsilon^+1_{\{\epsilon^+\le n\}}+\alpha_n 1_{\{\epsilon^+> n\}}\in \mathcal A_t^{\infty,+}$ where $\alpha_n>0$ is chosen large enough in such a way that $P(\alpha_n<\epsilon^+)<\varepsilon$. Then, $(\alpha_n-\epsilon^+)1_{\{\epsilon^+>n\}}$ is acceptable by hypothesis for $P((\alpha_n-\epsilon^+)1_{\{\epsilon^+>n\}}<0)\le P(\alpha_n<\epsilon^+)<\varepsilon$. 
 
Since $Z^n\to Z$ a.s., we deduce that $\rho_t(Z)\le \liminf_n\rho_t(Z^n)$. Recall that $\rho_t(Z^n)=\sup_{Q\in \cD_t}E_Q(-Z^n|\cF_t)$ by Proposition \ref{propRP0}. Hence, 
$$\rho_t(Z^n)\le \esssup_{Q\in \cD_t}E_Q(-Z|\cF_t)+\esssup_{Q\in \cD_t}E_Q(Z-Z^n|\cF_t).$$

Moreover, since $Z^n-Z$ is $\cF_t$-bounded from above, we have 
$$\esssup_{Q\in \cD_t}E_Q(Z-Z^n|\cF_t)=\rho_t(Z^n-Z)=\rho_t((\alpha_n-\epsilon^+)1_{\{\epsilon^+>n}\})\le 0.$$ We then deduce that  $\rho_t(Z)\le\ \esssup_{Q\in \cD_t}E_Q(-Z|\cF_t)$ and the conclusion follows. \end{proof}

 \noindent Proof of Theorem \ref{ClosureNA}. \begin{proof}

 Consider $\theta_t\in L^0(\R^d,\cF_t)$. By Theorem \ref{AIP-theo}, it suffices to show that $\rho_t(\theta_t \Delta S_{t+1})\ge 0$ a.s.. Otherwise, the set $\Lambda_t=\{\rho_t(\theta_t \Delta S_{t+1})< 0\}$ admits a positive probability and $\theta_t \Delta S_{t+1}1_{\Lambda_t}\in \cR_{t,T}\cap \cA_{t,T}=\cA_{t,T}^0$. It follows that $\rho_t(\theta_t \Delta S_{t+1}1_{\Lambda_t})=0$ hence a contradiction. Therefore, AIP holds. 

Let us  show that $\overline{\cZ_{t,T}}\subseteq \cZ_{t,T}$.  In the one step model, let us suppose that $\gamma^n=\theta_{T-1}^n\Delta S_T-\epsilon_{T-1,T}^{n}\in \cZ_{T-1,T}$ converges  to $\gamma^{\infty}\in L^0(\R,\cF_{T})$ in probability. We suppose that $\epsilon_{T-1,T}^{n}\in \cA_{T-1,T}$.  We need to show that $\gamma^{\infty}\in \cZ_{T-1,T}$.

On the $\cF_{T-1}$-measurable set $\Lambda_{T-1}:=\{\liminf_n|\theta_{T-1}^n|<\infty\}$, by \cite[Lemma 2.1.2]{KS}, we may assume w.l.o.g. that $\theta_{T-1}^n$ is convergent to some $\theta_{T-1}^{\infty}$ hence $\epsilon_{T-1,T}^{n}$ is also convergent and, finally, $\gamma^{\infty}1_{\Lambda_{T-1}}\in \cZ_{T-1,T}$.

Otherwise, on $\Omega\setminus \Lambda_{T-1}$, we introduce the  sequences,
$$\tilde \theta_{T-1}^{n}:=\theta_{T-1}^{n}/(|\theta_{T-1}^n|+1),  \;\; \tilde\epsilon_{T-1,T}^{n}:=\epsilon_{T-1,T}^{n}/(|\theta_{T-1}^n|+1).$$ By \cite[Lemma 2.1.2]{KS}, we may assume that a.s. $\tilde \theta_{T-1}^n\to \tilde \theta_{T-1}^{\infty}$, $\tilde \epsilon_{T-1,T}^{n}\to \tilde \epsilon_{T-1,T}^{\infty}$ and
$\tilde \theta_{T-1}^{\infty}\Delta S_T-\tilde \epsilon_{T-1,T}^{\infty}=0 \mbox{ a.s.}.$
Note that $|\tilde \theta_{T-1}^{\infty}|=1$ a.s.. As $\tilde \theta_{T-1}^{\infty}\Delta S_T$ is acceptable ($\epsilon_{T-1,T}^{\infty}\in \mathcal{A}_{T-1,T}$) then $\tilde \theta_{T-1}^{\infty}\Delta S_T\in \cA_{t,T}^0$ by assumption.   We follow the  recursive arguments on the dimension of \cite{KSt}. Since $|\tilde \theta_{T-1}^{\infty}|=1$, there exists a partition of $\Omega\setminus\Lambda_{T-1}$ into $d$ disjoint subsets $G_{T-1}^i\in\mathcal{F}_{T-1}$ such that $\tilde \theta_{T-1}^{\infty,i} \neq 0$ on $G_{T-1}^i$. Define on $G_{T-1}^i$, $\hat{\theta}^{n,i}_{T-1}:= \theta^n_{T-1} - \beta_{T-1}^{n,i} \tilde \theta_{T-1}^{\infty}$ where $\beta_{T-1}^{n,i}:= \theta^{n,i}_{T-1}/\tilde \theta_{T-1}^{\infty,i}$. Observe that
$\gamma^n=\hat{\theta}^{n,i}_{T-1}\Delta S_T-\tilde\epsilon_{T-1,T}^{n,i}$ where the position  $\tilde\epsilon_{T-1,T}^{n,i}=\epsilon_{T-1,T}^{n}-\beta_{T-1}^{n,i} \tilde \theta_{T-1}^{\infty}\Delta S_T$ is acceptable since $\pm \tilde \theta_{T-1}^{\infty}\Delta S_T$ are acceptable. As  $\hat \theta_{T-1}^{n,i} = 0$ on $G_{T-1}^i$, we repeat the entire procedure on each $G^i_{T-1}$ with the new expression $\gamma^n=\hat{\theta}^{n,i}_{T-1}\Delta S_T-\tilde\epsilon_{T-1,T}^{n,i}$ such that the number of components of $\hat{\theta}^{n,i}_{T-1}$ is reduced by one. We then conclude by recursion on the number of non-zero components since the conclusion is trivial if all the coordinates vanish. \smallskip

We now show the result in the multi-step models by induction. Fix some $s \in \{t,\ldots,T-1\}$. We show that $\overline{\cZ_{s+1,T}} \subseteq \cZ_{s+1,T}$ implies the same property for $s$ instead of $s+1$. \smallskip

Since AIP holds, we get that $\cZ_{s+1,T} \cap L^0(\R_+,\cF_{s+1})=\{0\}$ hence, from $\overline{\cZ_{s+1,T}} \subseteq \cZ_{s+1,T}$, we get   that $\overline{\cZ_{s+1,T}} \cap L^1(\R_+,\cF_{s+1})=\{0\}.$ Using the Hahn-Banach separation theorem in $L^1$, we deduce the existence of $Q^{(s+1)}\ll P$ with $\frac{dQ^{(s+1)}}{dP}\in L^{\infty}$ such that $\rho_{s+1}:=E_P(\frac{dQ^{(s+1)}}{dP}|\cF_{s+1})=1$ a.s., $(S_u)_{u\ge s+1}$ is a martingale under $Q^{(s+1)}$ and $E_Q(a_{s+1,T}|\cF_{s+1})\ge 0$ for all $a_{s+1,T}\in \cA_{s+1,T}$ such that $E_Q(|a_{s+1,T}||\cF_{s+1})<\infty$ a.s.. Suppose that
$$\gamma^n=\sum_{u=s+1}^{T} \theta_{u-1}^n\Delta S_{u}-\epsilon_{s,T}^{n} \in \cZ_{s,T} \mbox{ converges to } \gamma^{\infty}\in L^0(\R,\cF_{T}).$$
We suppose that $\epsilon_{s,T}^{n}\in \cA_{s,T}$. By Lemma \ref{TC-acc}, $\epsilon_{s,T}^{n}=\epsilon_{s,s+1}^{n}+\epsilon_{s+1,T}^{n}$, where $\epsilon_{s,s+1}^{n}\in \cA_{s,s+1}$ and $\epsilon_{s+1,T}^{n}\in \cA_{s+1,T}$.  As before, on the $\cF_{s}$-measurable set $\Lambda_{s}:=\{\liminf_n|\theta_{s}^n|<\infty\}$,  we may assume w.l.o.g. that $\theta_{s}^n$  converges to $\theta_{s}^{\infty}$. Therefore, on $\Lambda_{s}$,
$$\sum_{u=s+2}^{T} \theta_{u-1}^n\Delta S_{u}-\epsilon_{s,T}^{n}=\gamma^n - \theta_{s}^n\Delta S_{s+1} \to
\gamma^{\infty} - \theta_{s}^{\infty} \Delta S_{s+1}.$$

On the subset $\hat \Lambda_{s+1}:=\{\liminf_n|\epsilon_{s,s+1}^n|=\infty\}\cap\Lambda_{s}\in \cF_{s+1}$, we use the normalization procedure as previously, i.e. we divide by $|\epsilon_{s,s+1}^n|$, up to a subsequence,  and,  by the induction hypothesis,  we obtain that 

$$\sum_{u=s+2}^{T} \tilde\theta_{u-1}^n\Delta S_{u}-\tilde\epsilon_{s+1,T}=\tilde\epsilon_{s,s+1},$$
where $\tilde\epsilon_{s+1,T}\in \cA_{s+1,T}$ and $\tilde\epsilon_{s,s+1}\in \cA_{s,s+1}$ satisfies $|\tilde\epsilon_{s,s+1}|=1$ a.s.. Moreover, since $S$ is a martingale,
\bean E_{Q^{(s+1)}}\left(\sum_{u=s+2}^{T} \tilde\theta_{u-1}^n\Delta S_{u}|\cF_{s+1}\right)=0. 
\eean
Moreover, still by assumption,  $E_{Q^{(s+1)}}(\tilde\epsilon_{s+1,T}|\cF_{s+1})\ge 0$. We deduce that $\tilde\epsilon_{s,s+1}=E_{Q^{(s+1)}}(\tilde\epsilon_{s,s+1}|\cF_{s+1})\le 0$. Therefore, $\tilde\epsilon_{s,s+1}=-1$ hence $\rho_s(\tilde\epsilon_{s,s+1})=\rho_s(-1)=1$, which is in contradiction with $\rho_s(\tilde\epsilon_{s,s+1})\le 0$. Therefore, we may suppose, on $\Lambda_{s}$, that $\epsilon_{s,s+1}^n$ converges a.s. to some $\epsilon_{s,s+1}\in \cA_{s,s+1}$. By the induction hypothesis, we then deduce that  $\sum_{u=s+2}^{T} \theta_{u-1}^n\Delta S_{u}-\epsilon_{s+1,T}^{n}$ also converges to an element of $\cZ_{s+1,T}$ and we  conclude that $\gamma^{\infty} 1_{\Lambda_{s}} \in \cZ_{s,T}$.\\
On $\Omega\setminus \Lambda_{s}$, we use the normalisation procedure as before, and deduce the equality
$$\sum_{u=s+1}^{T} \tilde \theta_{u-1}^{\infty}\Delta S_{u}-\tilde \epsilon_{s,T}^{{\infty}}=0 \mbox{ a.s.}$$
for some $\tilde \theta_{u}^{\infty}\in L^0(\R,\cF_{u})$, $u \in \{s,\ldots,T-1\}$ and $\tilde \epsilon_{s,T}^{{\infty}}\in \cA_{s,T}$.  By Lemma \ref{TC-acc}, we write $\tilde \epsilon_{s,T}^{{\infty}}=\tilde \epsilon_{s,s+1}^{{\infty}} +\tilde \epsilon_{s+1,T}^{{\infty}}$ where $\tilde \epsilon_{s,s+1}^{{\infty}}\in \cA_{s,s+1}$ and $\tilde \epsilon_{s+1,T}^{{\infty}}\in \cA_{s+1,T}$. Moreover,  $|\tilde \theta_{s}^{\infty}|=1$ a.s.. We deduce that:

$$\tilde{\theta}_s^\infty\Delta S_{s+1}+\sum_{u=s+2}^{T} \tilde \theta_{u-1}^{\infty}\Delta S_{u}-\tilde \epsilon_{s+1,T}^{{\infty}}=\tilde \epsilon_{s,s+1}^{{\infty}}\mbox{ a.s..}$$ Taking the conditional expectation knowing $\cF_{s+1}$ under $Q^{(t+1)}$, we deduce that $\tilde \epsilon_{s,s+1}^{{\infty}}\le \tilde{\theta}_s^\infty\Delta S_{s+1}$. It follows that $\rho_s(\tilde{\theta}_s^\infty\Delta S_{s+1})\le \rho_s(\tilde \epsilon_{s,s+1}^{{\infty}})\le 0$ hence $\tilde{\theta}_s^\infty\Delta S_{s+1}\in \cA_{s,T}^0$ by the assumption. Using the  one step arguments  based on the elimination of non-zero components of the sequence $\theta_s^n$, we may replace $\theta_s^n$ by $\tilde{\theta}_s^n$ such that $\tilde{\theta}_s^n$ converges. We then repeat the same arguments on the set $\Lambda_s$ to conclude that $\gamma^{\infty} 1_{\Omega\setminus\Lambda_{s}} \in  \cZ_{s,T}$. \end{proof}

 \noindent Proof of Theorem \ref{NA-Formulation}. \begin{proof}

 Let us consider $W_{t,T}\in \cR_{t,T}\cap \cA_{t,T}$ Then, $W_{t,T}$ is of the form:
$$W_{t,T}=\sum_{s=t+1}^T \theta_{s-1}\Delta S_s=\sum_{s=t+1}^T a_{s-1,s},$$
where $\theta_{s-1} \in L^0(\R,\cF_{s-1})$ and  $a_{s-1,s}\in \cA_{s-1,s}$, for all $s=t+1,\cdots,T$. It follows that: 
\bea \label{aux-NA-Formulation}\theta_{t}\Delta S_{t+1}-a_{t,t+1}+\sum_{s=t+2}^T \left(\theta_{s-1}\Delta S_s-a_{s-1,s}\right)=0.\eea

Therefore, $p_t=\theta_{t}\Delta S_{t+1}-a_{t,t+1}$ is a (direct) price at time $s=t+1$ for the zero claim. Under AIP condition, we get that $\theta_{t}\Delta S_{t+1}\ge a_{t,t+1}$ hence $ \rho_{t}(\theta_{t}\Delta S_{t+1})\le 0$. As $ \rho_{t}(\theta_{t}\Delta S_{t+1})\ge 0$ by AIP, $ \rho_{t}(\theta_{t}\Delta S_{t+1})= 0$ and, by SRN, we get that $\rho_{t}(\theta_{t}\Delta S_{t+1}) = \rho_{t}(-\theta_{t}\Delta S_{t+1})=0$. We then deduce that  $-p_t\in 
\cA_{t,T}\cap L^0(\R_-,\cF_T)=\{0\}$ hence $p_t=0$ and $\theta_{t}\Delta S_{t+1}=a_{t,t+1}\in \cA_{t,T}^0$. The equality (\ref{aux-NA-Formulation}) may be rewritten as:

\bea \label{aux1-NA-Formulation}\theta_{t+1}\Delta S_{t+2}-a_{t+1,t+2}+\sum_{s=t+3}^T \left(\theta_{s-1}\Delta S_s-a_{s-1,s}\right)=0.\eea
By induction, we finally deduce that  $\theta_{s}\Delta S_{t+1}=a_{s,s+1}\in \cA_{s,s+1}^0$ for all $s\ge t$. By Remark \ref{remark acceptance set}, we have $W_{t,T}\in\cA_{t,T}^0$.

Consider now  $\epsilon^+_T\in \cZ_{t,T}\cap L^0(\R_+,\cF_T)$. We may write $\epsilon^+_T=r_{t,T}-a_{t,T}$ where $r_{t,T}\in \cR_{t,T}$ and $a_{t,T}\in \cA_{t,T}$. We get that $r_{t,T}=a_{t,T}+\epsilon^+_T\in \cR_{t,T}\cap \cA_{t,T}=\cA_{t,T}^0$ hence $-r_{t,T}\in \cA_{t,T}$. It follows that $-\epsilon^+_T\in \cA_{t,T}\cap L^0(\R_-,\cF_T)=\{0\}$. \end{proof}

 \noindent Proof of Theorem \ref{FTAP}. \begin{proof}

Suppose that 1) holds. By Theorem \ref{NA-Formulation}, we deduce that 3) holds. Note that 2) and 3) are equivalent since the risk measure is time-consistent.  Suppose that 3) holds. Since $-\cA_{t,T}\subseteq\cZ_{t,T}$, it follows that $\cA_{t,T}^0\subseteq\cZ_{t,T}\cap\cA_{t,T}$. Reciprocally, consider $x_{t,T} = W_{t,T} - a_{t,T}\in \cZ_{t,T}\cap\cA_{t,T}$, where $W_{t,T} \in\cR_{t,T}$ and $a_{t,T}\in\cA_{t,T}$, then $W_{t,T}\in\cA_{t,T}$ hence $W_{t,T}\in\cA_{t,T}^0$ by 2).  It follows that  $x_{t,T}\in(-\cA_{t,T})$ and we conclude that  $\cZ_{t,T}\cap\cA_{t,T}= \cA_{t,T}^0$. Moreover, by Theorem \ref{ClosureNA}, $\cZ_{t,T}$ is closed in probability hence 4) holds. Note that 4) and 5) are equivalent since the risk measure is time-consistent.\smallskip

Assume that 4) holds. The existence of $Q$ in 7) holds by standard arguments based on the Hahn-Banach separation theorem.  In particular, NA holds under $P'$ such that $P'\sim P$. We suppose w.l.o.g that $S_t$ is integrable under $P$ for every $t$. If $x\in L^1(\R,\cF_T)\cap(\cA_{t,T}\setminus\cA_{t,T}^0)$, $x\notin \cZ_{t,T}\cap L^1(\R,\cF_{T})$. By the Hahn-Banach separation theorem, there exists $p_x\in L^\infty(\R,\cF_T)$ and $c\in\R$ such that $E(p_x X) < c < E(xp_x),\, \forall X\in\cZ_{t,T}.$
As $\cZ_{t,T}$ is a cone, we get that $E(p_xX) \leq 0$ for all $X\in\cZ_{t,T}$ and since $-L^0(\R_+,\cF_T)\subseteq\cZ_{t,T}$, we deduce that $p_x\geq 0$ a.s.. With $X = 0$, we get that $E(xp_x) > 0$ and, as $\cR_{t,T}$ is a vector space, $E(p_xX) = 0$ for all $X\in\cR_{t,T}$. As $P(p_x>0) >0$, we may renormalize and suppose that $||p_x||_{\infty} = 1$.
Let us consider the family $G = (\Gamma_x)_{x\in I}$ where $I = L^1(\R,\cF_T)\cap(\cA_{t,T}\setminus\cA_{t,T}^0)$ and $\Gamma_x = \{p_x >0\}$. For any $\Gamma\in\cF_T$ such that $P(\Gamma) > 0$, $x = 1_{\Gamma}\in I$ since $\cA_{t,T}\cap L^0(\R_-,\cF_T) = \{0\}$. Therefore, $E(xp_x)=E(1_{\Gamma}p_x) > 0$ implies that $P(\Gamma_x\cap\Gamma) > 0$. By Lemma 2.1.3 in \cite{KS}, we deduce a countable family $(x_i)_{i=1}^\infty$ of $I$ such that $\Omega = \bigcup_{i=1}^\infty\Gamma_{x_i}$. We define $p = \sum_{i=1}^\infty 2^{-i}p_{x_i}$. We have $p >0 $ a.s and we  renormalize $p$ such that $p\in L^\infty(\R_+,\cF_T)$ and $E_P(p) = 1$. We define $Q\sim P$ such that $dQ/dP = p$. We have $E(pX) = 0$ for all $X\in\cR_{t,T}$. Therefore, with $F_{u-1}\in\cF_{u-1}$, $1_{F_{u-1}}\Delta S_u\in\cR_{t,T}$ if $u\ge t+1$, so $E_Q(1_{F_{u-1}}\Delta S_u) = 0$. This implies that $E_Q(\Delta S_u|\cF_{u-1}) = 0$, i.e $(S_u)_{u=t}^T$ is a $Q$-martingale. \smallskip

Moreover, by the the construction of $Q$ above, for all $x\in \cA_{t,T}\cap L^1(\R,\cF_T)$, we have $E_Q(x|\cF_t)\ge 0$. By truncature and homogeneity, we may extend this property to every $x$ such that $E(|x||\cF_t)<\infty$ a.s. since $x/(1+E(|x||\cF_t))$ is integrable. Finally, this also holds if $E_Q(x^-|\cF_t)<\infty$ a.s.. At last, since $\rho_t(X)+X\in \cA_{t,T}$, we may conclude that $\rho_t(X)\ge -E_Q(X|\cF_t)$,  for all  $X$ such that $E_Q(X^-|\cF_t)<\infty$ a.s.. If  $x\in \cA_{t,T}\setminus\cA_{t,T}^0$, it suffices to consider the probability measure $Q_{x}=\frac{1}{2}(Q+\tilde Q)$ where $\tilde Q$ is defined by its density $d\tilde Q/d\bP=p_x$. Indeed, since $E_{\tilde Q}(x)> 0$ and $E_Q(x)\ge 0$, this implies that $E_{ Q_x}(x)> 0$  hence $\bP(E_{Q_{x}}(x|\cF_t)\ne 0)>0$.
\smallskip

 Assume that 7) holds. For some martingale measure $Q\sim P$ we have $\rho_t(\theta_t\Delta S_{t+1}) \geq -E_Q(\theta_t\Delta S_{t+1}|\cF_{t}) = 0$, hence AIP holds. If $\rho_t(\theta_t\Delta S_{t+1}) = 0$ on some non null set $\Lambda_t$, we have $\rho_t(\theta_t\Delta S_{t+1}1_{\Lambda_t}) = 0$. 
This implies $\theta_t\Delta S_{t+1}1_{\Lambda_t}$ is acceptable. Moreover, if $\theta_t\Delta S_{t+1}1_{\Lambda_t}\notin\cA_{t,T}^0$,  $E_{Q_x}(\theta_t\Delta S_{t+1}1_{\Lambda_t}|\cF_t) \neq 0$ by 7), which yields contradiction . Therefore, $\rho_t(\theta_t\Delta S_{t+1})=\rho_t(-\theta_t\Delta S_{t+1}) = 0$ on $\Lambda_t$, i.e. SRN holds, and we deduce that  1)  holds. Note that 5) and 6) are equivalent by Theorem \ref{ClosureNA}.\end{proof}

 \noindent Proof of Theorem \ref{SHP}. \begin{proof}

By Theorems \ref{ClosureNA} and \ref{FTAP}, we know that $\Gamma_{0,T}$ is closed in probability. For any $h_T\in\Gamma_{0,T}$, there exists $\sum_{t=0}^{T}\theta_{t-1}\Delta S_{t}\in \cR_{0,T}$ such that $\rho_0\left(\sum_{t=0}^{T}\theta_{t-1}\Delta S_{t} - h_T\right)\leq 0$. Since, $h_T\in L_S$, we suppose w.l.o.g that $S_T$ and $h_T$ are integrable under $P$. 

Set $\gamma_t:= \sum_{t=0}^{t}\theta_{t-1}\Delta S_{t} -h_T$ for every $t\leq T$. For any $Q\in\cQ^e\neq\emptyset$, we have:
\begin{align*}
    |\gamma_T| \leq \left|\sum_{t=0}^{T-1}\theta_{t-1}\Delta S_{t}\right| + |\theta_{T-1}||\Delta S_{T}| + |h_T|,
\end{align*}
hence:
\begin{align*}
    E_Q(|\gamma_T||\cF_{T-1}) \leq \left|\sum_{t=0}^{T-1}\theta_{t-1}\Delta S_{t}\right| + |\theta_{T-1}|E_Q(|\Delta S_{T}||\cF_{T-1}) + E_Q(|h_T||\cF_{T-1}) < \infty \text{ a.s..}
\end{align*}
By Statement 7) of Theorem \ref{FTAP} and the martingale property, we deduce that:
\begin{align} \label{dyna T-1}
    \rho_{T-1}(\gamma_T) \geq -E_Q(\gamma_{T-1}|\cF_{T-1}).
\end{align}
At time $T-2$, by time-consistency of the risk measure and (\ref{dyna T-1}), we get that
\begin{align*}
    \rho_{T-2}(\gamma_T) = \rho_{T-2}(-\rho_{T-1}(\gamma_T)) \geq \rho_{T-2}(E_Q(\gamma_{T-1}|\cF_{T-1})).
\end{align*}
Moreover, $ E_Q(|E_Q(\gamma_{T-1}|\cF_{T-1})||\cF_{T-2}) \leq E_Q(|\gamma_{T-1}||\cF_{T-2})$ 
and
\bean
    E_Q(|\gamma_{T-1}||\cF_{T-2}) &\le& \left |\sum_{t=0}^{T-2}\theta_{t-1}\Delta S_{t}\right | + |\theta_{T-2}|E_Q(|\Delta S_{T-1}||\cF_{T-2})\\
    &&+ E_Q(|h_T||\cF_{T-2}) < \infty \text{ a.s..}
\eean
We deduce by Statement 7) of Theorem \ref{FTAP} that 
$$\rho_{T-2}(E_Q(\gamma_{T-1}|\cF_{T-1})) \geq -E_Q(\gamma_{T-1}|\cF_{T-2}).$$ By the martingale property, we finally deduce that $\rho_{T-2}(\gamma_T) \geq  -E_Q(\gamma_{T-2}|\cF_{T-2})$.
Recursively, we finally obtain:
\begin{align} \label{eq: min price}
    0\geq \rho_0\left(\sum_{t=0}^{T}\theta_{t-1}\Delta S_{t} - h_T\right) \geq -E_Q(\gamma_1|\cF_0) \ge -E_Q(\theta_0\Delta S_1 - h_T) \ge  E_Q(h_T).
\end{align}
This implies $\Gamma_{0,T}\subset\Theta_{0,T}$. 

Reciprocally, assume that there is $\hat{h}_T\in\Theta_{0,T}\setminus\Gamma_{0,T}$. Since $\hat{h}_T\in L_S(\R,\cF_T)$, $\hat{h}_T$ is integrable under $Q\in\cQ^e$. Moreover, since $\Gamma_{0,T}$ is closed in probability, $\tilde{\Gamma}_{0,T}:= \Gamma_{0,T}\cap L^1_{Q}(\R,\cF_T)$ is closed in $L^1$. By the Hahn-Banach separation theorem, as $\hat{h}_T\notin \tilde{\Gamma}_{0,T}$, we deduce the existence of $Y\in L^\infty(\R,\cF_T)$ such that:
\begin{align*}
    \sup_{X\in\tilde{\Gamma}_{0,T}}E_Q(YX) < E_Q(Y\hat{h}_T).
\end{align*}
Let $H$ be the density $Q$ w.r.t $P$, i.e. $H=dQ/dP$. We have:
\begin{align*}
    \sup_{X\in\tilde{\Gamma}_{0,T}}E(HYX) < E(HY\hat{h}_T).
\end{align*}
Since $\tilde{\Gamma}_{0,T}$ is a cone, we deduce that $E(HYX)\le 0$ for all $X\in\tilde{\Gamma}_{0,T}$. Moreover,  $E(HY\hat{h}_T) > 0$, $HY\ge 0$ a.s. and $E(HY)>0$. Therefore, we deduce that $\hat{H}:= HY/E(HY)$ defines the density of a probability measure $\hat{Q}\in\cQ^a$. 

We define $H^\epsilon:= \epsilon H + (1-\epsilon)\hat{H}$. Since $E(\hat{H}\hat{h}_T) >0$, we  may choose $\epsilon\in (0,1)$ small enough so that $E(H^\epsilon\hat{h}_T) > 0$. Since $H^\epsilon$ defines the density of a probability measure $Q^\epsilon\in\cQ^e$, we should have $E_{Q^\epsilon}\hat{h}_T=E(H^\epsilon\hat{h}_T) \le 0$, as $\hat{h}_T\in\Theta_{0,T}$. This yields a contradiction. We conclude that $\Gamma_{0,T} = \Theta_{0,T}$.\smallskip

At last, $P_0$ is a super-hedging price for $h_T$ if and only if $h_T-P_0\in \Gamma_{0,T}$. By the first part, we deduce that    $P_0^* \geq \sup_{Q\in\cQ^e}E_Q(h_T)$. Suppose there exists $\epsilon >0$ such that $P_0^* - \epsilon\geq \sup_{Q\in\cQ^e}E_Q(h_T)$. Then, $(h_T - P_0^* +\epsilon)\in\Theta_{0,T}$. Since $\Theta_{0,T} = \Gamma_{0,T}$,  there exists $W_{0,T}\in\cR_{0,T}$ such that $\rho_0(W_{0,T} - h_T + P_0^* -\epsilon) \leq 0$. This  implies that $P_0^* - \epsilon \geq\rho_0(W_{0,T}-h_T)$. Since $\rho_0(W_{0,T}-h_T)$ is a super-hedging price for $h_T$, we also deduce that $\rho_0(W_{0,T}-h_T)\geq P_0^*$ which yields a contradiction. We conclude that $P_0^* = \sup_{Q\in\cQ^e}E_Q(h_T)$. \end{proof}


\begin{thebibliography}{100}

\bibitem{Acc} Acciao B. and Penner I. Dynamics convex risk measures. Advanced Mathematical Methods for Finance, 2010, 1, Ed. Giulia Di Nunno and Bent \O ksendal,  Springer Heidelberg Dordrecht London New-York.


\bibitem{AHR}Ararat C., Hamel A. and Rudloff B. Set-valued shortfall and divergence risk measures. International Journal of Theoretical and Applied Finance,  20 (5), 2017.

\bibitem{BCL}Baptiste J., Carassus L. and L\'epinette E. Pricing without martingale measures. Preprint. 


\bibitem{BL} Ben Tahar I. and L\'epinette E. Vector valued coherent risk measure processes. IJTAF, 17, 02 (2014). 

\bibitem{BS} Black F. and Scholes M. The pricing of options and corporate liabilities. Journal of Political Economy, 81, 637-659, 1973.

\bibitem{Ch1}
Cherny A. Pricing and hedging European options with discrete-time coherent risk. Finance and Stochastics, 13, 537-569, 2007. 



\bibitem{Ch2}
Cherny A. Pricing with coherent risk. Theory of Probability and Its Applications,  52(3), 389-415, 2007.

\bibitem{CL} Carassus L. and L\'epinette E. Pricing without no-arbitrage condition in discrete-time.  Journal of Mathematical Analysis and Applications, 505, 1, 125441,2021. 

\bibitem{DMW}
Dalang E.C., Morton A. and Willinger W. Equivalent martingale measures and no-arbitrage in stochastic securities market models. Stochastics and Stochastic Reports,  29, 85-201, 1990.

\bibitem{DScha1}
Delbaen F. and Schachermayer W. A general version of the fundamental theorem
of asset pricing. Mathematische Annalen, 300, 463-520, 1994.


\bibitem{DScha2}
Delbaen F. and Schachermayer W. The fundamental theorem of asset pricing for
unbounded stochastic processes. Mathematische Annalen, 312, 215-250, 1996.

\bibitem{DScha3}
Delbaen F. and Schachermayer W. The no-arbitrage condition under a change of num\'eraire. Stochastics and Stochastic Reports, 53, 213-226, 1995.

\bibitem{DScha4}
Delbaen F. and Schachermayer W. The mathematics of arbitrage. Springer Finance, 371, 2006.



\bibitem{Del1}
Delbaen F. Coherent risk measures. Lecture Notes, Cattedra Galileiana, Scuola
Normale Superiore di Pisa, 2000.


\bibitem{Del2}
Delbaen F. Coherent risk measures on general probability spaces. Advances in Finance and Stochastics: essays in honor of Dieter Sondermann, Springer, Heidelberg, 1-37, 2002.



\bibitem{DS}
Detlefsen K. and Scandolo G. Conditional and dynamic convex risk measures. Finance and Stochastics, 9, 539-561, 2005.



\bibitem{FR} Feinstein Z. and  Rudloff B. 2013. Time consistency of dynamic risk measures in markets with transaction costs. Quantitative Finance,  13 (9), 1473-1489, 2013.

\bibitem{FP} F\" olmer H. and Penner I. Convex risk measures and the dynamics of their penalty functions. Statistics and Decisions, 2, 24, 2006.

\bibitem{GRS} Guasoni P, R\'asonyi M. and Schachermayer W. The fundamental theorem of asset pricing for continuous processes under small transaction costs. Annals of Finance, 6, 157-191, 2010.

\bibitem{HK} Harrison J.M. and Pliska S.R. Martingales and stochastic integrals in the theory of continuous trading. Stochastic Processes and their Applications, 11, 215-260, 1981.

\bibitem{hes02} Hess C. Set-valued integration and set-valued probability theory: An
  overview. In: E.~Pap (ed.) Handbook of Measure Theory,  Elsevier, 14, 617-673, 2002.
 
 \bibitem{JS} Jacod J. and Shiryaev A.N. Local martingales and the fundamental asset pricing theorems in the discrete-time case. Finance and Stochastics,  2(3), 259-273, 1998.
 
 \bibitem{Jorion} Jorion P.  Value at Risk: the new benchmark for managing financial risk. McGraw-Hill Professional, 3,  2006.
 
 \bibitem{JMT} Jouini E., Meddeb M. and  Touzi N. Vector-valued measure of risk. Finance and Stochastics, 8, 531-552, 2004.
 
 \bibitem{KK} Kabanov Y.M. and Kramkov D. No arbitrage and equivalent martingale measures: an elementary proof of the Harrison-Pliska theorem. Theory of Probability and its Applications,  39(3), 523-527, 2001.

\bibitem{KSt}
Kabanov Y. and Stricker C. A Teachers' note on no-arbitrage criteria. In
S\'eminaire de Probabilit\'es, XXXV, volume 1755 of Lecture Notes in Math., Springer Berlin, 149-152, 2001.


\bibitem{KS}
Kabanov Y. and Safarian, M. Markets with transaction costs. Mathematical Theory. Springer-Verlag, 2009.


\bibitem{KL}
Kaina M. and Ruschendorf L. On convex risk measures on $L^p$-spaces. Mathematical Methods of Operations Research, 69(3), 475-495,  2009.

\bibitem{LM} L\'epinette E. and Molchanov I. Risk arbitrage and hedging to  acceptability. Finance and Stochastics, 25, 101-132, 2021.

\bibitem{ZL}  L\'epinette E. and Zhao J. Super-hedging a European option with a coherent risk-measure and without no-arbitrage condition. Preprint. 


\bibitem{Merton} Merton R.C. The theory of rational option pricing. Bell J. Econ. Manag. Sci.,  4, 141-183, 1973.

\bibitem{Mol}
Molchanov I. Theory of Random Sets. 2nd edition.  Springer, London, 2017.





\bibitem{P} Pergamenshchikov S.  Limit theorem for Leland's strategy.  The Annals of Applied Probability, 13, 3, 1099-1118, 2003.


\bibitem{RW}
Rockafellar, R Tyrrell and Wets, Roger J-B. Variational analysis. Springer Science \& Business Media, 2009.

\bibitem{Rogers} Rogers L.C.G. Equivalent martingale measures and no-arbitrage. Stochastics and Stochastic Reports, 51, 41-49, 1994.

\bibitem{Ross} Ross S. The arbitrage theory of capital asset pricing. Journal of Economic Theory, 13, 341-360, 1976. 

\bibitem{DScha5} Schachermayer W.  A Hilbert space proof of the fundamental theorem of asset pricing in finite discrete time. Insurance: Mathematics and Economics  11(4), 249-257, 1992.



\end{thebibliography}
\end{document}